\theoremstyle{theorem}
\newtheorem{theorem}{Theorem}[section]
\newtheorem{lemma}{Lemma}[section]
\newtheorem{corollary}{Corollary}[section]
\newtheorem{proposition}{Proposition}[section]
\theoremstyle{remark}
\newtheorem{remark}{Remark}[section]
\theoremstyle{example}
\newtheorem{example}{Example}[section]
\theoremstyle{definition}
\newtheorem{definition}{Definition}[section]
\numberwithin{equation}{section}
\newcommand{\beq}{\begin{eqnarray}}
\newcommand{\eeq}{\end{eqnarray}}
\def\d{\mathfrak d}
\def\g{\mathfrak g}
\def\so{\mathfrak{so}}
\def\an{\mathfrak{an}}
\def\su{\mathfrak{su}}
\def\R{\mathbb{R}}
\begin{document}
\title{(2-)Drinfel'd Double and (2-)BF Theory}

\author[1]{{ \sf Hank Chen}}\thanks{hank.chen@uwaterloo.ca}


\author[1]{{\sf Florian Girelli}}\thanks{fgirelli@uwaterloo.ca}

\affil[1]{\small Department of Applied Mathematics, University of Waterloo, 200 University Avenue West, Waterloo, Ontario, Canada, N2L 3G1}



\maketitle

\begin{abstract}
The gauge symmetry 
and shift/translational symmetry 
of a 3D BF action, which are associated to a pair of dual Lie algebras, can be combined to form the Drinfel'd double.  
This combined symmetry is the gauge symmetry of the Chern-Simons action which is equivalent to the BF action, up to some boundary term. We show that something similar happens in 4D when considering a 2-BF action (aka BFCG action), whose symmetries are specified in terms of a pair of dual strict Lie 2-algebras (ie. crossed-modules). 
Combining these symmetries gives rise to a 2-Drinfel'd double which becomes the gauge symmetry structure of a 4D BF theory, up to a boundary term.   Concretely, we show how using 2-gauge transformations based on dual crossed-modules, the notion of 2-Drinfel'd double defined in \cite{Bai_2013} appears. We also discuss how, similarly to the Lie algebra case, the symmetric contribution of the $r$-matrix of the 2-Drinfel'd double can be interpreted as a quadratic 2-Casimir, which allows to recover the notion of duality. 
\end{abstract}

\tableofcontents

\section{Introduction}
Quantum groups, or Hopf algebras, and their category of representations, provide the symmetry structure to discuss  topological quantum field theories in three dimensions \cite{Witten:1988hc, Crane:1994ty, Baez:2009as, Turaev:2017uxl}. Among these, the specific example of the Drinfel'd double \cite{Drinfeld:1986in, Majid:1996kd} plays a special role. It is the relevant quantum group to discuss the quantization of three dimensional gravity \cite{Noui:2006ku, Meusburger:2008bs}. It is the symmetry inherent to the Kitaev model \cite{Kitaev1997, BMCA} which could be used to build quantum computers (see also \cite{girelli:2017lfn} for some other symmetry than the double). Typically it is built out of a pair of Hopf algebras that are dual to each other.

The deformed notion of symmetry encoded in a quantum group can be derived \cite{Dupuis:2020ndx} or guessed \cite{Fock:1992xy, Fock:1998nu} from an action, such as the Chern-Simons action and (equivalently in some cases) the BF action. When using an action as starting point, the key step to see quantum groups structure to appear involves a discretization procedure to recover some type of (generalized) lattice gauge theory. At the classical level, this generates Poisson-Lie groups which are the classical analogue of the notion of quantum group \cite{Chari:1994pz}.  Poisson-Lie groups are Lie groups equipped with a Poisson bracket which is compatible with the group structure. In certain cases, the Poisson bracket is characterized by an object called the $r$-matrix, which upon quantization becomes the quantum $R$-matrix. The Drinfel'd double is an example of Poisson-Lie group.   Looking at the infinitesimal picture, ie. looking at the Lie algebras, we have the so-called classical (Drinfel'd) double construction (also called Manin triple): we have a pair of bialgebras which are dual to each other, and the cocycle of one is dual to the Lie bracket of the other \cite{Semenov1992, Chari:1994pz}. These cocycles are also the infinitesimal limit of the Poisson brackets compatible with the relevant group structure.  The symmetric part of the $r$-matrix is given in terms of the quadratic Casimir of the Lie algebra behind the Drinfel'd double. This Casimir encodes the duality structure. 

Upon discretization of the BF theory, the Drinfel'd double as a Poisson-Lie group arises naturally as the structure of the symmetry \cite{Dupuis:2020ndx}. This means in particular that one can expect that at the level of the action, somehow the classical double should be present in the structure of the gauge symmetries. This is kind of an understood fact, but here we will show explicitly how this works as it will be useful for the generalization to higher-gauge symmetries. We will also emphasize the different formulation of the action, either invariant under the full double, which would be the Chern-Simons formulation, or in terms of the pair of components of the double, which would amount to the BF formulation.  

\medskip

What happens when we deal with four dimensional topological quantum field theories expressed in terms of BF-type theories \cite{Baez:1995ph}? As discussed in different places, when we go up in dimension, we go up in the categorical ladder \cite{Crane:1994ty,Baez:1995xq, Baez:2009as}. For the four dimensional case, we expect to use the notion of quantum 2-groups (see \cite{majid:2012gy} for a proposal), where a 2-group is a categorification of the notion of group \cite{Baez2004}. Note that there are several options to define the notion of quantum 2-groups. Here we will follow the intuition given by gauge theory and the associated topological actions one can construct. The definition of a 2-group in terms of a crossed-module is likely the most suited to discuss  gauge theory generalizations \cite{Baez:2002jn, Baez:2004in, Baez:2005qu, Girelli:2003ev, Baez:2010ya}. It is not however the best one to discuss the notion of dualization, which is better defined in terms of Lie 2-algebras \cite{Baez:2003fs, Bai_2013}. Indeed this latter formulation, while equivalent, emphasizes better what needs to be dualized --- in particular, the notion of \textit{grading} plays a key role. 

While a general general framework for the notion of quantum 2-group is missing, the notion of Poisson-Lie 2-group has been introduced \cite{Chen:2012gz} and one can expect that one can provide more specific examples of Poisson-Lie 2-groups when dealing with discretization of topological actions. Just as in 3D, there are 2 main types of action for building out a topological model, the BF type \cite{Baez:1995ph}, which is just the higher dimensional version of the 3D one, or the BFCG (aka 2-BF) actions \cite{Girelli:2007tt, Martins:2010ry}. The former is usually interpreted as having  gauge symmetries and shift symmetries, which can be ultimately packaged in terms of a 2-gauge symmetry \cite{Baez:2010ya}. The latter one is explicitly built on 2-gauge symmetries, as well as some shift symmetries \cite{Girelli:2007tt, Martins:2010ry}.    
Similarly to Chern-Simons and BF theory, some examples of 2-BF models on 2-groups have been repackaged into a new BF action  where the variables are grouped together \cite{Mikovic:2011si}. 

\smallskip

These similarities between the 3D and 4D cases beg the question, of whether a notion of 2-Drinfel'd classical double might be at play when dealing with 2-gauge theories. Roughly speaking, we would expect a 2-Drinfel'd classical double to consist in a pair of crossed-modules (more precisely strict Lie 2-bialgebras \cite{ Bai_2013,Chen:2013}) that are dual in the appropriate sense to each other, and such that the cocycle information of one would specify the structure of the other. This object has already been defined in Ref. \cite{Bai_2013}. Here we would like to use the gauge theoretic formulation and re-derive the notion of 2-Drinfel'd classical double. This is one of the main results of the paper. We will then show that the 2-BF theories can be repackaged into a "larger" BF theory with symmetry  specified by the 2-Drinfel'd double. This will mean conversely that 4D BF theories can be specified in terms of 2-BF theories, and that there is an underlying 2-gauge symmetry at play. This has some implications in terms of the discretization of these theories; see for example \cite{Girelli:2021zmt} in the case of the Poincar\'e 2-group. 
As a specific case, we will show that the standard Lorentz $BF$ theory     can also be seen as a kind of 2-BF theory, in terms of a 3D  (quantum) Poincar\'e 2-group. 
Ultimately, this type of considerations could be relevant to discuss quantum gravity models, since 4D gravity can be seen as a constrained topological theory \cite{Barrett:1997gw, Baez:1999in, Baez:1999sr, rovelli2004}.
Indeed the choice of symmetry structure influences the construction of the quantum states which are built from the representations of the symmetry structure. For example\footnote{Note that the choice of symmetry structure already appears in 3D, where one can use the Drinfel'd double $\g\bowtie\g^*$ as the main symmetry structure, as in the Fock-Rosly approach \cite{Fock:1992xy, Fock:1998nu}, and associated combinatorial quantization \cite{Alekseev:1994pa, Alekseev:1994au}, or the gauge symmetry in terms of $\g$ or $\g^*$ \cite{Delcamp:2018sef}.}, if we deal with the  BF theory, we would construct the discrete/quantum theory using representation of the gauge symmetries. On the other hand, if we use instead the 2-BF formulation in terms of 2-gauge, we would use the theory of representations of the crossed-module. 
Since we are talking about the same theory, namely $BF$ theory, the quantum $BF$ amplitude defined in terms of group representations \cite{Baez:1999sr} or 2-group representations should be equivalent, just like in 3D where the Chern-Simons amplitude can be related to the Turaev-Viro/Ponzano-Regge amplitudes \cite{ROBERTS, Freidel:2004nb}.

\smallskip

When defining the 2-Drinfel'd double, we can introduce a generalization of the notion of $r$-matrix \cite{Bai_2013}. We show how something similar to the 3D case, namely whether its symmetric component is encoding some notion of "Casimir" for the Lie 2-algebra behind the 2-Drinfel'd double happens. 

\smallskip

The article is organized as follows. 

In Section \ref{bfdd}, we illustrate our strategy with the standard Drinfel'd double. We show how starting from a pair of dual gauge theories, we can recover the Drinfel'd double. We recall then the standard construction to define a topological action which is invariant under the Drinfel'd double and how it can be broken down into components to recover the $BF$ formulation in terms of the dual gauge theories. 

In Section \ref{algxmod}, we recall the definition of crossed-modules, how to recover a dual crossed-module and the (co-)adjoint action of a crossed-module.  

In Section \ref{2bfdd}, we provide one of the main results of the paper namely, we reconstruct the notion of 2-Drinfeld double by considering dual 2-gauge transformations. Considering that their order in which we apply them should not matter we recover constraints on the dual crossed-modules which amount exactly to the definition of the 2-Drinfeld double provided in \cite{Bai_2013}. 

Identifying the 2-Drinfel'd double symmetries allows in Section \ref{combo2cov} to construct an action invariant under such symmetries, which happens to be some 4D $BF$ theory type. We then explain how such $BF$ theory can also be seen as a pair of dual 2-gauge theories. As a particular example we focus on Lorentz $BF$ theory. This is another of the highlights of the paper.

Finally, in Section \ref{2grclrmat},  we discuss how the $r$-matrix structure of the 2-Drinfeld double also contains the information about the duality used to defined the action, just like in the standard Drinfel'd double case.

\section{Recovering the Drinfel'd double as a gauge symmetry from dual gauge theories }\label{bfdd}
In this section, we begin with a demonstration of our strategy to extract the structure of the Drinfel'd double from gauge theory. 

We consider a pair of dual gauge theories in the sense that their underlying Lie algebras, $\g$, $\g^*$ are dual to each other. Because of the duality, we can induce gauge transformations of $\g$ on $\g^*$ and vice-versa, through the co-adjoint representations.  We can then perform gauge transformations of $\g$ and $\g^*$ on a sector, say $\g$ in different order. Requiring that we get the same result, whatever the order put strong constraints on the shape of the transformations. The constraints are equivalent to the  compatibility relations of a matched pair of Lie algebras $\g\bowtie\g^*$ \cite{Majid:2008iz}.  To have such compatibilitiy relations means that we need to modify accordingly the notion of covariant derivative in each sector $\g$ and $\g^*$. This modification is what we called the antisymmetrization procedure. As a result, the covariant derivative (and therefore the gauge transformation) in a given different sector will then depend on \textit{both} connections.  

These new gauge transformations can  be combined together, these are the Drinfel'd double gauge transformations.

We will then discuss how we can construct a theory invariant under such theory either in terms of connections based on the Drinfel'd double $\g\bowtie\g^*$ as a gauge symmetry (Chern-Simons), or based on the components $\g$ and $\g^*$ (BF theory).

\subsection{Dual gauge theories}

\paragraph{Gauge theory}
Let $P\rightarrow M$ denote a principal $G$-bundle over a $d$-dimensional manifold $M$. 
It is standard lore that the $G$-connection $A\in\Omega^1(M)\otimes\mathfrak{g}$ and its associated curvature
\begin{equation}
    F= dA + \frac{1}{2}[A\wedge A]\in\Omega^2(M)\otimes\mathfrak{g}\nonumber
\end{equation}
undergoes the following gauge transformations
\begin{equation}
    A\rightarrow A^\lambda = A + d_A\lambda,\qquad F\rightarrow F^\lambda = F + [F,\lambda] \label{gauge}
\end{equation}
parameterized by a 0-form $\lambda\in\Omega^0(M)\otimes\mathfrak{g}$, where $\mathfrak{g}=\operatorname{Lie}G$ is the Lie algebra of $G$.

\paragraph{Dual gauge theory}
We can construct another principal bundle, $P'\rightarrow M$ which is a $G^*$-bundle over the $d$-dimensional manifold $M$. For this we introduce $\mathfrak{g}^*$ the dual vector space to $\mathfrak{g}$. 
 $\mathfrak{g}^*$ can also be seen as a Lie algebra. To this aim, we  introduce a Lie bracket $[\cdot,\cdot]_*$ on $\mathfrak{g}^*$. This can be accomplished by the choice of a 1-cochain
\begin{equation}
    \psi: \mathfrak{g}\rightarrow\mathfrak{g}^{2\wedge} \nonumber
\end{equation}
such that $[\cdot,\cdot]_*$ is defined as
\begin{equation}
    [g,g']_*(X) = (g\wedge g')(\psi(X)),\qquad \forall ~g,g'\in\mathfrak{g}^*,X\in\mathfrak{g}.\nonumber
\end{equation}
The Jacobi identity for the bracket is equivalent to the {\bf cocycle condition} \cite{Semenov1992} which is expressed in term of the coadjoint action $\operatorname{ad}^*$ (see below)
\begin{equation}
    d\psi(X,X') = \psi([X,X']) - D(\operatorname{ad}^*_X)\psi(X') + D(\operatorname{ad}^*_{X'})\psi(X)=0,\label{bialgcoh} 
\end{equation}
where $D(X) = X\otimes 1+ 1\otimes X$ is the {\it comultiplication map}. As such, $\psi \in Z^1(\mathfrak{g},\mathfrak{g}^{2\wedge})$ is a {\bf Lie algebra 1-cocycle}, and $(\mathfrak{g};\psi)$ a {\bf Lie bialgebra} \cite{Semenov1992}. $G^*$ would then be the Lie group associated to $\g^*$.

\begin{remark}\label{bialgsymm}
The bialgebra property is symmetric under dualization: if $\mathfrak{g}$ is a Lie bialgebra, so is its dual $\mathfrak{g}^*$, which is equipped with a 1-cocycle $\psi^*\in C^1(\mathfrak{g}^*,(\mathfrak{g}^*)^{2\wedge})$ dual to the Lie bracket of $\mathfrak{g}$. The situation can be summarized below
\begin{equation}
    \begin{tikzcd}
                & \text{Cocycle}             & \text{... is dual to...} & \text{Bracket}    \\
\mathfrak{g}:   & \psi \arrow[rrd, dashed]   &                          & {[\cdot,\cdot]}   \\
\mathfrak{g}^*: & \psi^* \arrow[rru, dashed] &                          & {[\cdot,\cdot]_*}
\end{tikzcd}\nonumber
\end{equation}
\end{remark}

Hence to make the construction, we could also start for $G^*$ and its Lie algebra $\g^*$ instead of $G$ and its Lie algebra $\g$. 

\medskip 

Let us now turn  to the gauge structure generated by the dual space $\mathfrak{g}^*$.
If we let $B\in\Omega^1(M)\otimes\mathfrak{g}^*$ denote a $G^*$-connection and 
\begin{equation}
    \tilde F = dB + \frac{1}{2}[B\wedge B]_* \nonumber
\end{equation}
its curvature, then we would also have the following gauge transformation
\begin{equation}
    B\rightarrow B^{\tilde \lambda} = B + d_B\tilde\lambda,\qquad \tilde F\rightarrow \tilde F^{\tilde\lambda} =\tilde F + [\tilde F,\tilde\lambda]_* \label{dualgauge}
\end{equation}
parameterized by a 0-form gauge parameter $\tilde\lambda\in\Omega^0(M)\otimes\mathfrak{g}^*$.

\paragraph{Coadjoint representations.} Because the Lie algebras $\g$ and $\g^*$ are dual to each other, we can dualize their respective adjoint actions, which are respectively 
\begin{equation}
    \operatorname{ad}_X(X') =[X,X'],\qquad \mathfrak{ad}_g(g')=[g,g']_*,\nonumber 
\end{equation}
where $X,X'\in\mathfrak{g},g,g'\in\mathfrak{g}^*$. By dualizing with respect to the canonical evaluation pairing $(g,X) = g(X)$,  a pair of {\bf coadjoint representations} are induced,
\begin{equation}
    \operatorname{ad}^*:\mathfrak{g}\rightarrow\operatorname{End}\mathfrak{g}^*,\qquad \mathfrak{ad}^*:\mathfrak{g}^*\rightarrow\operatorname{End}\mathfrak{g},\nonumber
\end{equation}
defined by
\begin{equation}
    \operatorname{ad}_X^*g(X') = -g(\operatorname{ad}_XX'),\qquad g'(\mathfrak{ad}_g^* X) = - \mathfrak{ad}_gg'(X), \nonumber 
\end{equation}
for which the evaluation pairing is {\it by definition} invariant. This is then inducing an invariant symmetric non-degenerate  bilinear form \cite{Semenov1992}
\begin{equation}
    \langle\langle X + g, X'+g'\rangle\rangle = g(X') + g'(X)\label{pair}
\end{equation}
on the direct sum $\mathfrak{g}\oplus\mathfrak{g}^*$. The connections $A,B$ admit a transformation under $\tilde\lambda,\lambda$, respectively, through these codajoint representations
\begin{equation}
    A\rightarrow A + \mathfrak{ad}_{\tilde\lambda}^*A,\qquad B\rightarrow B+\operatorname{ad}_\lambda^*B.\label{gaugedual}
\end{equation}

To summarize, because we have a pair of dual Lie algebras, we can introduce a gauge transformation of one sector say $\g$ on the dual one $\g^*$ and vice-versa.  Indeed, the transformations Eqs. \eqref{gauge}, \eqref{dualgauge}, \eqref{gaugedual} allow us to study, in each $\mathfrak{g}$- or $\mathfrak{g}^*$-sector, the gauge transformations parameterized by both $\lambda$ and $\tilde\lambda$. In the following, we shall leverage this observation to derive compatibility conditions on the coadjoint representations $\operatorname{ad}^*,\mathfrak{ad}^*$ in order for the direct sum $\mathfrak{g}\oplus\mathfrak{g}^*$ to form a {\bf Manin triple}, which is an equivalent description of the Drinfel'd double  \cite{Chari:1994pz}.

\subsection{The Drinfel'd double as gauge symmetry}\label{dd}
Let us now consider how the connections $A,B$ transform under both $\lambda,\tilde\lambda$. Since the sectors $\mathfrak{g}$ and $\mathfrak{g}^*$ are acting on each other, the gauge parameters will themselves be subject of gauge transformations.
For example, the pure-gauge $B=d\tilde\lambda$ transforms non-trivially under $\lambda$; similarly for the pure-gauge $A=d\lambda$ under $\tilde\lambda$:
\begin{eqnarray}
    d\tilde\lambda &\rightarrow& d\tilde\lambda + \operatorname{ad}_\lambda^*d\tilde\lambda =d\tilde\lambda+d(\mathfrak{ad}_{\tilde\lambda}^*\lambda) -\mathfrak{ad}_{d\tilde\lambda}^*\lambda\nonumber\\
    d\lambda &\rightarrow& d\lambda + \mathfrak{ad}_{\tilde\lambda}^*d\lambda=d\lambda+d(\operatorname{ad}_{\lambda}^*\tilde\lambda) -\operatorname{ad}_{d\lambda}^*\tilde\lambda,\nonumber
\end{eqnarray}
where we have used the Leibniz rule. 

Now if we assume, in the first equation, that $\tilde\lambda$ is slowly-varying (so that $\lambda d\tilde \lambda$ is small), then this induces a transformation of the gauge parameter $\lambda$ itself 
\begin{equation}
    \lambda\rightarrow \lambda + \mathfrak{ad}_{\tilde\lambda}^*\lambda + o(\lambda d\tilde\lambda).\nonumber
\end{equation}
Similarly, if we assume $\lambda$ to be slowly-varying, then we have
\begin{equation}
    \tilde\lambda\rightarrow\tilde\lambda + \operatorname{ad}_\lambda^*\tilde\lambda + o(\tilde\lambda d\lambda)\nonumber
\end{equation}
in the dual sector. 

In the following, we denote by $\boldsymbol\lambda=\lambda+\tilde\lambda$ the combined gauge parameter, and a direction of an arrow over it indicates the order at which the gauge transformation parameterized by which parameter is performed. For instance, $\overrightarrow{\boldsymbol\lambda}$ indicates that we perform a gauge transformation by $\lambda$ first then $\tilde\lambda$, and vice versa for $\overleftarrow{\boldsymbol\lambda}$.

\paragraph{Combined gauge transformation.} 
With this, we may then use Eqs. \eqref{gauge}, \eqref{gaugedual} to compute the transformation law
\begin{eqnarray}
     \overrightarrow{\boldsymbol\lambda} &:& A \xrightarrow{\lambda} A + d_A\lambda\nonumber \\
     &\qquad&\xrightarrow{\tilde\lambda} A + \mathfrak{ad}_{\tilde\lambda}^*A + d_A\lambda + \mathfrak{ad}_{\tilde\lambda}^*d\lambda \nonumber \\
     &\qquad& \qquad + [\mathfrak{ad}_{\tilde\lambda}^*A,\lambda] + [A,\mathfrak{ad}_{\tilde\lambda}^*\lambda]+ [\mathfrak{ad}_{\tilde\lambda}^*A,\mathfrak{ad}_{\tilde\lambda}^*\lambda],\nonumber
\end{eqnarray}
while
\begin{eqnarray}
     \overleftarrow{\boldsymbol\lambda} &:& A\xrightarrow{\tilde\lambda}A + \mathfrak{ad}_{\tilde\lambda}^*A \nonumber \\
     &\qquad &\xrightarrow{\lambda} A+d_A\lambda + \mathfrak{ad}_{\tilde\lambda}^*A + \mathfrak{ad}_{\tilde\lambda}^*d\lambda \nonumber \\
     &\qquad&\qquad + \mathfrak{ad}_{\tilde\lambda}^*[A,\lambda]+ \mathfrak{ad}_{\operatorname{ad}_\lambda^*\tilde\lambda}^*A+ \mathfrak{ad}_{\operatorname{ad}_\lambda^*\tilde\lambda}^*d_A\lambda.\nonumber
\end{eqnarray}
If we neglect quadratic terms in the small gauge parameter $\lambda$, we obtain the difference
\begin{equation}
    [\mathfrak{ad}_{\tilde\lambda}^*A,\lambda] + [A,\mathfrak{ad}^*_{\tilde\lambda}\lambda] - \mathfrak{ad}_{\tilde\lambda}^*[A,\lambda] - \mathfrak{ad}_{\operatorname{ad}_\lambda^*\tilde\lambda}^*A\nonumber
\end{equation}
between the two transformations. This quantity is in fact undesirable, and forcing it to vanish yields the following compatibility condition
\begin{equation}
    \mathfrak{ad}_{\tilde\lambda}^*[A,\lambda] = [\mathfrak{ad}_{\tilde\lambda}^*A,\lambda] + [A,\mathfrak{ad}^*_{\tilde\lambda}\lambda]  - \mathfrak{ad}_{\operatorname{ad}_\lambda^*\tilde\lambda}^*A + \mathfrak{ad}_{\operatorname{ad}_A^*\tilde\lambda}^*\lambda,\label{adcoh}
\end{equation}
where we have explicitly antisymmetrized the last term $ - \mathfrak{ad}_{\operatorname{ad}_\lambda^*\tilde\lambda}^*A$ under an interchange of $A,\lambda$ by adding a term $\mathfrak{ad}_{\operatorname{ad}_A^*\tilde\lambda}^*\lambda$ to the difference. This {\it antisymmetrization procedure} has major implications about the gauge structure of the theory --- we shall explain this in full detail in the next section. This condition Eq. \eqref{adcoh} ensures that the gauge transformation by $\boldsymbol\lambda$ on $A$ is {\it unambiguous} --- namely does not depend on the order of the action of $\lambda,\tilde\lambda$.

On the dual side, we may argue analogously by exchanging the original and the dual sectors:
\begin{equation}
    \lambda\leftrightsquigarrow\tilde\lambda,\qquad \operatorname{ad}^*\leftrightsquigarrow\mathfrak{ad}^*,\qquad A\leftrightsquigarrow B.\nonumber
\end{equation}
More explicitly, we compute from \eqref{dualgauge}, \eqref{gaugedual} that
\begin{eqnarray}
     \overrightarrow{\boldsymbol\lambda} &:& B\xrightarrow{\lambda}B + \operatorname{ad}_{\lambda}^*B \nonumber \\
     &\qquad &\xrightarrow{\tilde\lambda} B+d_B\tilde\lambda + \operatorname{ad}_{\lambda}^*B + \operatorname{ad}_{\lambda}^*d\tilde\lambda \nonumber \\
     &\qquad&\qquad + \operatorname{ad}_{\lambda}^*[B,\lambda]_*+ \operatorname{ad}_{\mathfrak{ad}_{\tilde\lambda}^*\lambda}^*B+ \operatorname{ad}_{\mathfrak{ad}_{\tilde\lambda}^*\lambda}^*d_B\tilde\lambda,\nonumber
\end{eqnarray}
while
\begin{eqnarray}
     \overleftarrow{\boldsymbol\lambda} &:& B \xrightarrow{\tilde\lambda} B + d_B\tilde\lambda\nonumber \\
     &\qquad&\xrightarrow{\lambda} B + \operatorname{ad}_{\lambda}^*B + d_b\tilde\lambda + \operatorname{ad}_{\lambda}^*d\tilde\lambda \nonumber \\
     &\qquad&\qquad + [\operatorname{ad}_{\lambda}^*B,\tilde\lambda]_* + [B,\operatorname{ad}_{\lambda}^*\tilde\lambda]_*+ [\operatorname{ad}_{\lambda}^*B,\operatorname{ad}_{\lambda}^*\tilde\lambda]_*.\nonumber
\end{eqnarray}
Neglecting the term of order $\tilde\lambda^2$, we obtain the difference
\begin{equation}
    [\operatorname{ad}_{\lambda}^*B,\tilde\lambda]_* + [B,\operatorname{ad}^*_{\lambda}\tilde\lambda]_* - \operatorname{ad}_{\lambda}^*[B,\tilde\lambda]_* - \operatorname{ad}_{\mathfrak{ad}_{\tilde\lambda}^*\lambda}^*B,\nonumber
\end{equation}
and hence the compatibility condition upon antisymmetrizing with respect to $B,\tilde\lambda$,
\begin{equation}
    \operatorname{ad}_{\lambda}^*[B,\tilde\lambda]_* = [\operatorname{ad}_{\lambda}^*B,\tilde\lambda]_* + [B,\operatorname{ad}^*_{\lambda}\tilde\lambda]_*  - \operatorname{ad}_{\mathfrak{ad}_{\tilde\lambda}^*\lambda}^*B + \operatorname{ad}_{\mathfrak{ad}_B^*\lambda}^*\tilde\lambda,\label{adcohdual}
\end{equation}
which is dual to Eq. \eqref{adcoh}.

\paragraph{The antisymmetrization procedure.}
One crucial point to note is that, to impose the compatibility conditions Eqs. \eqref{adcoh}, \eqref{adcohdual}, we must introduces the terms $\mathfrak{ad}_{\operatorname{ad}_A^*\tilde\lambda}^*\lambda,\operatorname{ad}_{\mathfrak{ad}_B^*\lambda}^*\tilde\lambda$ in order to preserve the explicit antisymmetry in $A,\lambda$ and $B,\tilde\lambda$.

These additional terms, however, cannot just be introduced from thin air. We must induce it from a modification of the gauge transformations we have assumed in Eqs. \eqref{gauge}, \eqref{gaugedual}. In particular, we include the terms $\mathfrak{ad}_B^*\lambda,\operatorname{ad}_A^*\tilde\lambda$ into how $A,B$ transform respectively under $\lambda,\tilde\lambda$:
\begin{equation}
    A\rightarrow A+d_A\lambda -\mathfrak{ad}_B^*\lambda,\qquad B\rightarrow B+d_B\tilde\lambda - \operatorname{ad}_A^*\tilde\lambda.\label{antisymgau}
\end{equation}
We call this the {\it antisymmetrization procedure} of the gauge structure, and it is a key technique that we shall use extensively in the rest of this paper.

It is simple to see why these extra terms are necessary. By construction, the combined gauge transformation $\boldsymbol\lambda$ in either order reads
\begin{eqnarray}
    \overrightarrow{\boldsymbol\lambda}: A\rightarrow \dots -\mathfrak{ad}_B^*\lambda + \mathfrak{ad}_{\operatorname{ad}_A^*\tilde\lambda}^*\lambda,&\quad& \overleftarrow{\boldsymbol\lambda}: A\rightarrow \dots -\mathfrak{ad}_B^*\lambda,\nonumber \\ 
    \overrightarrow{\boldsymbol\lambda}: B\rightarrow \dots -\operatorname{ad}_A^*\tilde\lambda,&\quad& \overleftarrow{\boldsymbol\lambda}: B\rightarrow \dots -\operatorname{ad}_A^*\tilde\lambda+ \operatorname{ad}_{\mathfrak{ad}_B^*\lambda}^*\tilde\lambda,\nonumber 
\end{eqnarray}
where "$\dots$" indicates terms that we have already seen above. Their difference then manifests the desired terms 
\begin{equation}
    \mathfrak{ad}_{\operatorname{ad}_A^*\tilde\lambda}^*\lambda + \operatorname{ad}_{\mathfrak{ad}_B^*\lambda}^*\tilde\lambda \nonumber
\end{equation}
upon performing the antisymmetrization procedure. 

However, Eq. \eqref{antisymgau} implies that the $\mathfrak{g}$- and $\mathfrak{g}^*$ connections contribute both to define the proper notion of covariant derivative: the gauge transformation for the connection $A$ depends on the dual connection $B$, and vice versa. As such, they should come together to define a {\bf combined connection}
\begin{equation}
    {\bf A} = A+B\in \Omega^1(M)\otimes(\mathfrak{g} \oplus\mathfrak{g}^*) \nonumber
\end{equation}
valued in the direct sum $\mathfrak{g}\oplus\mathfrak{g}^*$, upon which the combined gauge transformation $\boldsymbol\lambda=\lambda+\tilde\lambda$ acts. This is a key observation that shall finally allow us to extract the compatibility conditions in a Manin triple.

Note that the antisymmetrization procedure introduces a modified notion of covariant derivative for the matched pair $\g \bowtie\g^*$, 
\begin{align}
D_{\bf A}\lambda &= d_A\lambda- \mathfrak{ad}_B^*\lambda\nonumber\\
D_{\bf A}\tilde\lambda& = d_B\tilde\lambda - \operatorname{ad}_A^*\tilde\lambda. \label{covdev} 
\end{align}
which had also been discussed in \cite{Dupuis:2020ndx, Girelli:2021pol}.

\paragraph{The Manin triple.}  From the above computations, we have derived the following result:
\begin{theorem}\label{mt}
The combined gauge transformation (including antisymmetrized terms induced from Eq. \eqref{antisymgau})
\begin{equation}
    {\bf A}\rightarrow {\bf A}^{\boldsymbol\lambda} = {\bf A} + (D_{\bf A}\lambda + \mathfrak{ad}_{\tilde\lambda}^*A) + (D_{\bf A}\tilde\lambda + \operatorname{ad}_\lambda^* B)
\end{equation}
is {\it unambiguous} (namely independent of the order of applying $\lambda,\tilde\lambda$) modulo terms of order $o(\lambda^2)+o(\tilde\lambda^2) + o(\lambda d\tilde\lambda)+o(\tilde\lambda d\lambda)$ \underline{iff} the compatibility conditions Eqs. \eqref{adcoh}, \eqref{adcohdual} are satisfied.
\end{theorem}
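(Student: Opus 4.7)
The plan is to verify the biconditional by explicitly computing the two orderings $\overrightarrow{\boldsymbol\lambda}$ and $\overleftarrow{\boldsymbol\lambda}$ of the combined gauge transformation on $\mathbf A = A + B$, using the \emph{modified} transformation laws of Eq.~\eqref{antisymgau}, and matching the resulting discrepancy term-by-term with the mismatch of Eqs.~\eqref{adcoh} and~\eqref{adcohdual}. Since the $\mathfrak g$- and $\mathfrak g^*$-pieces of the direct sum $\mathfrak g\oplus\mathfrak g^*$ are independent, the vanishing of the discrepancy on $\mathbf A$ is equivalent to the vanishing of the discrepancy on $A$ and on $B$ separately, and the two sub-computations are related by the duality involution.

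For the $\mathfrak g$-sector, the excerpt has already computed the difference of the two orderings on $A$ \emph{before} antisymmetrization, namely
\[ [\mathfrak{ad}_{\tilde\lambda}^*A,\lambda] + [A,\mathfrak{ad}^*_{\tilde\lambda}\lambda] - \mathfrak{ad}_{\tilde\lambda}^*[A,\lambda] - \mathfrak{ad}_{\operatorname{ad}_\lambda^*\tilde\lambda}^*A. \]
I would then include the antisymmetrization piece $-\mathfrak{ad}_B^*\lambda$ of Eq.~\eqref{antisymgau} in the transformation of $A$, and track how $B$ itself responds to $\tilde\lambda$ via $B\to B + d_B\tilde\lambda - \operatorname{ad}_A^*\tilde\lambda$. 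Isolating the genuinely $\lambda\tilde\lambda$-bilinear contributions, this produces the additional piece $\mathfrak{ad}_{\operatorname{ad}_A^*\tilde\lambda}^*\lambda$ in the difference, while all other new contributions are absorbed into the higher-order corrections $o(\lambda^2)+o(\tilde\lambda d\lambda)$. The resulting total difference on $A$ equals precisely LHS $-$ RHS of Eq.~\eqref{adcoh}, and so vanishes modulo the stated higher-order terms \emph{iff} Eq.~\eqref{adcoh} holds.

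For the $\mathfrak g^*$-sector I would run the identical argument under the duality substitution $\lambda \leftrightsquigarrow \tilde\lambda$, $A \leftrightsquigarrow B$, $\operatorname{ad}^*\leftrightsquigarrow\mathfrak{ad}^*$ (and swapping the two Lie brackets), which is symmetric by Remark~\ref{bialgsymm} and the pairing~\eqref{pair}. The antisymmetrization piece $-\operatorname{ad}_A^*\tilde\lambda$ contributes the extra bilinear term $\operatorname{ad}_{\mathfrak{ad}_B^*\lambda}^*\tilde\lambda$, and the difference on $B$ reduces to LHS $-$ RHS of Eq.~\eqref{adcohdual}. Adding the two sector contributions, the combined transformation on $\mathbf A$ is unambiguous modulo $o(\lambda^2)+o(\tilde\lambda^2)+o(\lambda d\tilde\lambda)+o(\tilde\lambda d\lambda)$ if and only if both Eqs.~\eqref{adcoh} and~\eqref{adcohdual} hold.

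The chief obstacle is careful bookkeeping: the two orderings each produce many terms arising from $d_A$, from $[A,\lambda]$, and from the antisymmetrization pieces acting on both connections, and one must discard only the genuinely higher-order contributions while retaining every $\lambda\tilde\lambda$-bilinear piece, because it is precisely these surviving bilinears that encode the compatibility conditions. A secondary conceptual point is the "only if" direction: since $A,B,\lambda,\tilde\lambda$ are arbitrary, one may freely localize the functional identity, so unambiguity genuinely \emph{forces} Eqs.~\eqref{adcoh}--\eqref{adcohdual} rather than being merely implied by them; this is what allows the computation to be reinterpreted as the defining relations of the Manin triple $\mathfrak g\bowtie\mathfrak g^*$.
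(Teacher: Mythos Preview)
Your proposal is correct and follows essentially the same route as the paper: the paper's proof is precisely the sequence of computations in Section~\ref{dd} preceding the theorem statement, namely computing $\overrightarrow{\boldsymbol\lambda}$ and $\overleftarrow{\boldsymbol\lambda}$ on $A$ (resp.\ $B$), identifying the pre-antisymmetrization discrepancy, and then showing that the extra pieces from Eq.~\eqref{antisymgau} supply exactly the missing antisymmetrized term, with the dual sector handled by the substitution $\lambda\leftrightsquigarrow\tilde\lambda$, $A\leftrightsquigarrow B$, $\operatorname{ad}^*\leftrightsquigarrow\mathfrak{ad}^*$. Your explicit remark on the ``only if'' direction via arbitrariness of the fields is a useful clarification that the paper leaves implicit.
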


If we distill Eqs. \eqref{adcoh}, \eqref{adcohdual} in terms of the Lie algebra values:
\begin{eqnarray}
    \mathfrak{ad}_g^*[X,X'] =[ \mathfrak{ad}_{g}^*X,X'] + [X,\mathfrak{ad}^*_{g}X']  - \mathfrak{ad}_{\operatorname{ad}_{X'}^*g}^*X + \mathfrak{ad}_{\operatorname{ad}_X^*g}^*X',\nonumber \\
    \operatorname{ad}_{X}^*[g,g']_* = [\operatorname{ad}_{X}^*g,g']_* + [g,\operatorname{ad}^*_Xg']_*  - \operatorname{ad}_{\mathfrak{ad}_{g'}^*X}^*g + \operatorname{ad}_{\mathfrak{ad}_g^*X}^*g'\label{1manin}
\end{eqnarray}
for each $X,X'\in\mathfrak{g},g,g'\in\mathfrak{g}^*$, then we in fact recover the necessary and sufficient conditions for the coadjoint representations $\operatorname{ad}^*,\mathfrak{ad}^*$ to come together and form the {\bf standard Manin triple} \cite{Chari:1994pz}
\begin{equation}
    \mathfrak{d} = \mathfrak{g}~_{\operatorname{ad}^*}\bowtie_{\mathfrak{ad}^*}\mathfrak{g}^*\nonumber
\end{equation}
equipped with the standard Lie bracket
\begin{equation}
    \pmb{[} X+g,X'+g'\pmb{]} = [X,X'] + [g,g']_* + \operatorname{ad}_X^*g' + \mathfrak{ad}_g^*X - \mathfrak{ad}_{g'}^*X' - \operatorname{ad}_{X'}^*g.\label{maninbrac}
\end{equation}
It is known that this double Lie algebra structure $\mathfrak{d}$ integrates to a {\it Poisson-Lie group} $D$ \cite{Semenov1992}, and hence serves as a model for the {\bf Drinfel'd double} $\mathfrak{d}=D(\mathfrak{g})$.

\paragraph{Manin Rigidity}
We have demonstrated that the gauge theory built on the direct sum $\mathfrak{g}\oplus\mathfrak{g}^*$ of the Lie bialgebra $\mathfrak{g}$ and its dual $\mathfrak{g}^*$ manifests canonically the structure of a Manin triple if and only if  the combined gauge transformation under $\boldsymbol\lambda=\lambda+\tilde\lambda$ is unambiguous. In particular, the naturally induced coadjoint representations $\operatorname{ad}^*,\mathfrak{ad}^*$ of the Lie bialgebra and its dual on each other must necessarily satisfy Eq. \eqref{1manin}.

Now if we consider generic algebra action $\mathfrak{g}\rhd\mathfrak{g}^*$ and back-action $\mathfrak{g}\lhd\mathfrak{g}^*$, equipped with the same non-degenerate invariant bilinear form Eq. \eqref{pair}, we can prove the following gauge-enforced rigidity result:
\begin{corollary} \label{maninrigid}
The induced combined gauge transformations $\boldsymbol\lambda=\lambda+\tilde\lambda$ is unambiguous if and only if  the double 
\begin{equation}
    \mathfrak{d}'=\mathfrak{g}\bowtie\mathfrak{g}^*\nonumber
\end{equation}
is isomorphic to the standard Manin triple $\mathfrak{d} = \mathfrak{g}~_{\operatorname{ad}^*}\bowtie_{\mathfrak{ad}^*}\mathfrak{g}$.
\end{corollary}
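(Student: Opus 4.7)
The plan is to reduce the claim to Theorem \ref{mt} by first showing that the invariance of the bilinear form \eqref{pair} on $\mathfrak{d}'$ rigidly determines the cross-actions $\rhd,\lhd$ to be the coadjoint representations. Writing a generic mixed bracket in components as $[X,g] = (X\rhd g) + (g\rhd X)$, with $X\rhd g\in\mathfrak{g}^*$ and $g\rhd X\in\mathfrak{g}$, I would apply ad-invariance on the triple $(X,X',g)$:
\begin{equation}
\langle\langle [X,X'],g\rangle\rangle + \langle\langle X',[X,g]\rangle\rangle =0. \nonumber
\end{equation}
Since only the $\mathfrak{g}^*$-component of $[X,g]$ pairs with $X'\in\mathfrak{g}$, this reads $g([X,X']) + (X\rhd g)(X')=0$, which by definition forces $X\rhd g = \operatorname{ad}_X^*g$. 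The symmetric computation on $(g,g',X)$ yields $g\rhd X = \mathfrak{ad}_g^*X$. Hence the invariant pairing leaves no room for non-coadjoint actions in the generic double.

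Once this identification is in place, the corollary follows by invoking Theorem \ref{mt} in both directions. For the forward direction, assume the combined gauge transformation induced from the antisymmetrized rule \eqref{antisymgau} (with $\rhd,\lhd$ in place of $\operatorname{ad}^*,\mathfrak{ad}^*$) is unambiguous. Since the derivation of Eqs. \eqref{adcoh}, \eqref{adcohdual} used only the abstract Leibniz rule and the action properties, it carries over verbatim, yielding the compatibility conditions \eqref{1manin} between $\rhd,\lhd$. Combined with the pairing-enforced identification $\rhd=\operatorname{ad}^*,\lhd=\mathfrak{ad}^*$, these are exactly the defining axioms of the standard Manin triple, so the bracket \eqref{maninbrac} on $\mathfrak{d}'$ agrees with that on $\mathfrak{d}$ and the identity map furnishes the desired Lie-algebra isomorphism.

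For the reverse direction, if $\mathfrak{d}'\cong\mathfrak{d}$, then transporting the brackets along the isomorphism identifies $\rhd,\lhd$ with $\operatorname{ad}^*,\mathfrak{ad}^*$ and forces \eqref{1manin} to hold; Theorem \ref{mt} then immediately delivers the unambiguity of the combined gauge transformation.

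The main obstacle I anticipate is a bookkeeping one: in a generic double, $[X,g]$ has components in both summands, so the invariance equations must be unpacked carefully to isolate which component is being probed by the pairing, and the signs must be tracked against the conventions in \eqref{pair}. A secondary subtlety is ensuring that the antisymmetrization procedure of \eqref{antisymgau} is intrinsic --- i.e. that it is defined directly from $\rhd,\lhd$ without tacitly presupposing the Manin structure --- so that the derivation of \eqref{adcoh}, \eqref{adcohdual} may be legitimately reused in the generic setting.
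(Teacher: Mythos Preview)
Your proposal is correct and close in spirit to the paper's argument, but the order of the two key ingredients is reversed, and this difference is worth noting.

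The paper's proof reruns the derivation of Section~\ref{dd} with generic actions $\rhd,\lhd$ in place of $\operatorname{ad}^*,\mathfrak{ad}^*$, obtains compatibility conditions formally identical to Eq.~\eqref{1manin}, observes that the resulting bracket has the same shape as Eq.~\eqref{maninbrac}, and then asserts $\mathfrak{d}'\cong\mathfrak{d}$. The step from ``same shape of bracket'' to ``isomorphic'' is left implicit; it ultimately rests on exactly the pairing computation you carry out first. Your route front-loads that computation: you use invariance of the bilinear form Eq.~\eqref{pair} to force $\rhd=\operatorname{ad}^*$ and $\lhd=\mathfrak{ad}^*$ outright, which collapses $\mathfrak{d}'$ to $\mathfrak{d}$ on the nose, and then Theorem~\ref{mt} applies verbatim. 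This is cleaner and makes the role of the invariant pairing transparent, whereas the paper's version leaves that role tacit. You also treat the converse direction explicitly, which the paper does not spell out.

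The caveats you flag are apt but harmless here: the antisymmetrization in Eq.~\eqref{antisymgau} is indeed formulated purely in terms of the given actions $\rhd,\lhd$, so the derivation of Eqs.~\eqref{adcoh}, \eqref{adcohdual} transfers without circularity.
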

\begin{proof}
The same argument as in Section \ref{dd}, leading up to {\bf Theorem \ref{mt}}, goes through for the generic action/back-action $\rhd,\lhd$, such that the compatibility condition Eq. \eqref{1manin} must hold, with $\operatorname{ad}^*$ replaced by $\rhd$ and $\mathfrak{ad}^*$ replaced by $\lhd$. The induced Lie bracket must then take the same form as Eq. \eqref{maninbrac}, which forces $\mathfrak{d}' \cong \mathfrak{d}$.
\end{proof}
\noindent The rigidity of the Manin triple is well-known, but here we have demonstrated by gauge theoretic considerations. 

We shall generalize our method in the following to a {\it 2-gauge structure}, based on a Lie algebra crossed-module $\mathfrak{g}$. We shall seek to extract compatibility conditions for the (strict) coadjoint representations necessary for the notion of a {\it 2-Manin triple}. 

\subsection{3D topological action invariant under the Drinfel'd double}\label{combocov}
The fact that the Manin triple $\mathfrak{d}$ forms a Lie algebra implies that we can do gauge theory on it based on the standard bracket $[\cdot,\cdot]_\mathfrak{d} = \pmb{[}\cdot,\cdot\pmb{]}$ in Eq. \eqref{maninbrac}. In particular, given the combined connection ${\bf A}$ we can write a covariant curvature quantity 
\begin{eqnarray}
    {\bf F} &=& d{\bf A} + \frac{1}{2}\pmb{[} {\bf A}\wedge {\bf A}\pmb{]} \nonumber \\
    &=&(dA + \frac{1}{2}[A\wedge A] + \mathfrak{ad}_B^*(\wedge A)) + (dB + \frac{1}{2}[B\wedge B]_* + \operatorname{ad}_A^*(\wedge B))\nonumber \\
    &\equiv& \bar{F} + \bar{F}^*,\label{combocurv}
\end{eqnarray}
called the {\bf combined curvature}, which has also appeared in Ref. \cite{Girelli:2021pol}. The covariance of ${\bf F}$ is manifest; for an explicit proof, see Appendix  \ref{gencov}.

\paragraph{Chern-Simons theory and interacting BF theory.} Given the combined curvature ${\bf F}$, one may then assemble a gauge-invariant action and study its dynamics. Using the natural invariant non-degenerate symmetric bilinear form Eq. \eqref{pair} on $\mathfrak{d}$, we begin with the {\it Chern-Simons action}
\begin{equation}
    S_\text{CS}[{\bf A}] = \frac{1}{2}\int_M \langle\langle{\bf F}\wedge {\bf F}\rangle\rangle =
     \int_M \langle \bar F^*\wedge \bar F\rangle,\label{chernsimons}
\end{equation}
where $M$ is a 4-dimensional manifold. It can be clearly seen that, due to the covariance of the combined curvature ${\bf F}$, the action Eq. \eqref{chernsimons} is invariant under the Drinfel'd double $\mathfrak{d}=\mathfrak{g}\bowtie\mathfrak{g}^*$ gauge transformations.

If we write down the components of $\bar F,\bar F^*$ explicitly, then one may compute 
\begin{eqnarray}
    \langle \bar F^* \wedge \bar F\rangle &=& \frac{1}{2}d\left[\langle B \wedge (dA + \frac{1}{2}[A\wedge A])\rangle + \frac{1}{2}\langle [B\wedge B]_* \wedge A\rangle\right] \nonumber \\
    &\qquad& + \frac{1}{4}\langle [B\wedge B]_* \wedge [A\wedge A]\rangle + \langle \operatorname{ad}_A^*(\wedge B) \wedge \mathfrak{ad}_B^*(\wedge A)\rangle,\nonumber
\end{eqnarray}
where the equality stands up to a boundary term. 
By writing the final term as 
\begin{equation}
    \langle \operatorname{ad}_A^*(\wedge B) \wedge \mathfrak{ad}_B^*(\wedge A)\rangle = -\frac{1}{2} (\langle B\wedge [A\wedge \mathfrak{ad}_B^*(\wedge A)]\rangle + \langle [B\wedge \operatorname{ad}_A^*(\wedge B)]_*\wedge A\rangle),\nonumber
\end{equation}
we can make use of Eq. \eqref{1manin} to have
\begin{equation}
    [A\wedge \mathfrak{ad}_B^*(\wedge A)] = -\frac{1}{4}\mathfrak{ad}_B^*(\wedge [A\wedge A]),\qquad 
    [B\wedge \operatorname{ad}_A^*(\wedge B)]_* =-\frac{1}{4}\operatorname{ad}_A^*(\wedge [B\wedge B]_*) ,\nonumber
\end{equation}
which yields
\begin{eqnarray}
    \langle \operatorname{ad}_A^*(\wedge B) \wedge \mathfrak{ad}_B^*(\wedge A)\rangle &=& -\frac{1}{8} (\langle \mathfrak{ad}_B (\wedge B)\wedge [A\wedge A]\rangle+ \langle [B\wedge B]_* \wedge \operatorname{ad}_A^*(\wedge A)\rangle) \nonumber\\
    &=& -\frac{1}{4}\langle [B\wedge B]_*\wedge [A\wedge A]\rangle.\nonumber
\end{eqnarray}

This allows us to write the Chern-Simons action as
\begin{equation}
    S_\text{CS}[{\bf A}] = \frac{1}{2}\int_Z \langle B \wedge (dA + \frac{1}{2}[A\wedge A])\rangle + \frac{1}{2}\langle [B\wedge B]_* \wedge A\rangle,\nonumber
\end{equation}
where $Z= \partial M$ is the 3-dimensional boundary of $M$. If we recall the ordinary curvature 2-form
\begin{equation}
    F = dA + \frac{1}{2}[A\wedge A], \nonumber
\end{equation}
then we can rewrite
\begin{equation}
    S_\text{CS}[{\bf A}] = \frac{1}{2}\int_Z \langle B\wedge F\rangle - \frac{1}{2}\langle B\wedge \mathfrak{ad}_B^*(\wedge A)\rangle = \frac{1}{2}\int_Z \langle B\wedge F\rangle + \langle \frac{1}{2}[B\wedge B]_*\wedge A\rangle,\nonumber
\end{equation}
which is in fact a rewritting (up to a boundary term) of the standard BF theory with a volume term \cite{Dupuis:2020ndx}.

If we vary $B$, then we achieve the flatness condition 
\begin{equation}
    \delta_AS_\text{CS}=0\implies \bar F = dA +\frac{1}{2}[A\wedge A] - \mathfrak{ad}_B^*(\wedge A) =0\nonumber
\end{equation}
for the $\mathfrak{g}$-sector of the combined curvature Eq. \eqref{combocurv}. On the other hand, if we vary $A$, then we obtain the flatness condition 
\begin{equation}
    \delta_BS_\text{CS}=0\implies \bar F^* = dB + \frac{1}{2}[B\wedge B]_* - \operatorname{ad}_A^*(\wedge B) = 0\nonumber
\end{equation}
for the $\mathfrak{g}^*$-sector of the combined curvature. 

\paragraph{Pairings on the Drinfel'd double; non-standard Chern-Simons theory.}
The form of the Chern-Simons action Eq. \eqref{chernsimons} depends on the choice of the bilinear form $\langle\langle\cdot,\cdot\rangle\rangle$ on $\mathfrak{d}$. From the natural choice Eq. \eqref{pair} given by the evaluation pairing $(\cdot,\cdot):\mathfrak{g}^*\otimes\mathfrak{g}\rightarrow \mathbb{R}$, the above result yields an interacting BF theory. However, more general choices are possible. 

Given a bialgebra $(\mathfrak{g},\psi)$, we can form the structure of a Manin triple on the direct sum $\mathfrak{d}\cong \mathfrak{g}\oplus\mathfrak{g}^*$ by endowing it with a symmetric non-degenerate invariant bilinear form $\langle\langle\cdot,\cdot\rangle\rangle: \mathfrak{d}^{2\otimes}\rightarrow\mathbb{R}$, with respect to which $\mathfrak{d}^*\cong \mathfrak{d}$ is self-dual as bialgebras. It is clear that the natural evaluation pairing Eq. \eqref{pair} is merely one component of $\langle\langle\cdot,\cdot\rangle\rangle$. The other component is given by the alternative pairing \cite{Osei:2017ybk}
\begin{equation}
    \langle\langle X+g,X'+g'\rangle\rangle' = \langle X,X'\rangle+\langle g,g'\rangle \nonumber
\end{equation}
that is "diagonal" in the direct sum $\mathfrak{g}\oplus\mathfrak{g}^*$. It is invariant in the sense that 
\begin{eqnarray}
    \langle \operatorname{ad}_XX',X''\rangle + \langle X',\operatorname{ad}_XX''\rangle=0,&\quad& \langle \mathfrak{ad}_gg',g''\rangle + \langle g',\mathfrak{ad}_gg''\rangle =0\nonumber\\
    \langle \mathfrak{ad}_g^*X',X''\rangle + \langle X',\mathfrak{ad}_g^*X''\rangle=0,&\quad& \langle \operatorname{ad}_X^*g',g''\rangle + \langle g',\operatorname{ad}_X^*g''\rangle =0\nonumber
\end{eqnarray}
for each $X,X',X''\in\mathfrak{g}$ and $g,g',g''\in\mathfrak{g}^*$. Such a pairing $\langle\cdot,\cdot\rangle$ exists if, for instance, $\mathfrak{g}$ is semisimple, whence $\langle\cdot,\cdot\rangle$ is the Killing form. 


The Drinfel'd double $\mathfrak{d}=\mathfrak{g}\bowtie\mathfrak{g}^*$ has the same structure Eq. \eqref{maninbrac} as previously derived when equipped with this non-standard bilinear form  $\langle\langle\cdot,\cdot\rangle\rangle'$, but the Chern-Simons action Eq. \eqref{chernsimons} now reads
\begin{equation}
    S_\text{CS}'[{\bf A}] = \frac{1}{2}\int_M \langle\langle{\bf F}\wedge{\bf F}\rangle\rangle'= \frac{1}{2}\int_M \langle \bar{F}^*\wedge \bar{F}^*\rangle + \langle \bar{F}\wedge\bar{F}\rangle.\nonumber
\end{equation}
Standard results states that the Chern-Simons polynomial $\langle\langle{\bf F}\wedge{\bf F}\rangle\rangle'$ is exact, hence $S'_\text{CS}[{\bf A}]$ reduces to the boundary action
\begin{equation}
    S_\text{CS}'[{\bf A}] = \frac{1}{2}\int_Z \langle\langle {\bf A}\wedge d{\bf A}\rangle\rangle' + \frac{1}{3}\langle\langle {\bf A}\wedge\pmb{[}{\bf A}\wedge{\bf A}\pmb{]}\rangle\rangle'.\nonumber 
\end{equation}

From Eq. \eqref{combocurv}, we have
\begin{equation}
    \frac{1}{2}\pmb{[}{\bf A}\wedge{\bf A}\pmb{]} = (\frac{1}{2}[A\wedge A]+\mathfrak{ad}_B^*(\wedge A)) + (\frac{1}{2}[B\wedge B]_*+\operatorname{ad}_A^*(\wedge B)),\nonumber
\end{equation}
hence we see that
\begin{eqnarray}
     S_\text{CS}'[{\bf A}]&=& \frac{1}{2}\int_Z \langle A\wedge( dA+\frac{1}{3}[A\wedge A])\rangle + \langle B\wedge (dB+ \frac{1}{3}[B\wedge B]_*)\rangle \nonumber \\
     &\qquad& + \int_Z \langle A\wedge \mathfrak{ad}_B^*(\wedge A)\rangle + \langle B\wedge \operatorname{ad}_A^*(\wedge B)\rangle\nonumber\\
     &\equiv& \frac{1}{2}(S_\text{CS}[A]+S_\text{CS}[B])  + \int_Z \langle A\wedge \mathfrak{ad}_B^*(\wedge A)\rangle + \langle B\wedge \operatorname{ad}_A^*(\wedge B)\rangle,\nonumber
\end{eqnarray}
where we have defined the usual Chern-Simons action $S_\text{CS}[A],S_\text{CS}[B]$ in each of the individual $\mathfrak{g},\mathfrak{g}^*$-sectors. The rest are interaction terms. 


One may consider the Chern-Simons theory $S_\text{CS}''$ given by a linear combination of the alternative pairing with the canonical one
\begin{equation}
    \langle\langle\cdot,\cdot\rangle\rangle'' = \alpha\langle\langle \cdot,\cdot\rangle\rangle + \beta\langle\langle\cdot\,\cdot\rangle\rangle',\nonumber
\end{equation}
such that the non-degeneracy requires the real parameters $\alpha,\beta$ to satisfy $\alpha^2+ \beta^2 \neq 0$. We would then have
\begin{equation}
    S_\text{CS}''[{\bf A}] = \alpha S_\text{CS}[{\bf A}]+ \beta S_\text{CS}'[{\bf A}].\nonumber
\end{equation}
Such a theory had been considered in detail in \cite{Osei:2017ybk}.

\section{Lie algebra crossed-module and its dual}\label{algxmod}
In this section, we construct a pair of dual Lie algebra crossed-modules, just like we defined a pair of dual Lie algebras. We also discuss the motion of adjoint and co-adjoint representations in this set up.

\subsection{Lie algebra crossed-module.} 
We first review the notion of a Lie algebra crossed-module. Let $\mathfrak{g}_0,\mathfrak{g}_{-1}$ be a pair of Lie algebras, a {\bf Lie algebra crossed-module} is given in terms of a map 
\begin{equation}
    t: \mathfrak{g}_{-1}\rightarrow\mathfrak{g}_0,\nonumber
\end{equation}
called the "$t$-map" or the {\it differential}, together with an action $\rhd$ of $\mathfrak{g}_0$ on $\mathfrak{g}_{-1}$ such that the {\bf Pfeiffer identities}
\begin{equation}
    t(X\rhd Y) = [X,tY],\qquad tY\rhd Y' = [Y,Y'],\qquad \forall~X\in\mathfrak{g}_0,~Y,Y'\in\mathfrak{g}_{-1} \label{pfeif}
\end{equation}
are satisfied, in addition to the {\bf 2-Jacobi identities}
\begin{eqnarray}
[X,[X',X'']]+[X',[X'',X]]+[X'',[X,X']]=0,\nonumber \\
X\rhd (X' \rhd Y) - X' \rhd (X\rhd Y) - [X,X']\rhd Y = 0\label{2jacob}
\end{eqnarray} 
for all $X,X',X''\in\mathfrak{g}_0$ and $Y\in\mathfrak{g}_{-1}$. Notice here that, by the second Pfeiffer identity Eq. \eqref{pfeif}, the Lie algebra bracket $[\cdot,\cdot]$ on $\mathfrak{g}_{-1}$ is completely determined by the structures $t,\rhd$. A Lie algebra crossed-module is also called a \textit{strict Lie 2-algebra}. 


\begin{definition}
We call a Lie algebra crossed-module {\bf trivial} if $t=\operatorname{id}$, and {\bf skeletal} if $t=0$  \cite{Kim:2019owc,Chen:2013,Chen:2012gz,Bai_2013}. 
\end{definition}


We shall denote a Lie algebra crossed-module by $\mathfrak{g} = (\mathfrak{g}_{-1}\xrightarrow{t}\mathfrak{g}_0,\rhd,[\cdot,\cdot])$, where $[\cdot,\cdot]$ is the Lie bracket on $\mathfrak{g}_0$. This is indeed the minimal information one needs since, the Lie bracket of $\g_{-1}$ is determined from the second Pfeiffer identity in terms of the action and $t$-map.  The subscripts on the Lie algebras are called the {\it degrees of the grading} in $\mathfrak{g}$. It is a very useful tool for keeping track of the crossed-module structures.

Notice that there is a natural adjoint representation of $\mathfrak{g}$ on itself. Since we have two pieces, $\g_0$ and $\g_{-1}$, we can split this adjoint representation into two components, one coming from $\g_0$ and one from $\g_{-1}$. 
Indeed, we have the bracket of $\g_0$ on itself but also the action $\rhd$ of $\g_0$ on $\g_{-1}$. The adjoint representation of  the $\g_{-1}$ sector is also specified by the crossed-module action $\rhd$. 
This is denoted by
\begin{equation}
    \operatorname{ad}=(\operatorname{ad}_0,\operatorname{ad}_{-1}):\mathfrak{g}\rightarrow \operatorname{End}\mathfrak{g},\label{strictad}
\end{equation}
which comes in graded components $\operatorname{ad}_0,\operatorname{ad}_{-1}$ defined by
\begin{equation}
    \operatorname{ad}_0(X) = (\operatorname{ad}_X\equiv [X,\cdot]\, ,\, \chi_X\equiv X\rhd\cdot) \in \operatorname{End}(\mathfrak{g}_0\oplus\mathfrak{g}_{-1}),\qquad \operatorname{ad}_{-1}(Y) = \cdot \rhd Y \in \operatorname{Hom}(\mathfrak{g}_0,\mathfrak{g}_{-1})\nonumber 
\end{equation}
for each $X\in\mathfrak{g}_0,Y\in\mathfrak{g}_{-1}$. 

\begin{example}\label{exp1}
In general, there are two ways of associating a  Lie algebra $\mathfrak{l}$ to a Lie algebra crossed-module.
\begin{enumerate}
    \item {\bf The canonical embedding: trivial crossed-module}: for any 1-algebra $\mathfrak{l}$, we define the crossed-module
    \begin{equation}
        \operatorname{id}_\mathfrak{l}=(\mathfrak{g}_{-1}\xrightarrow{\operatorname{id}}\mathfrak{g}_0,\rhd,[\cdot,\cdot]),\qquad \mathfrak{g}_{-1}=\mathfrak{g}_0=\mathfrak{l} \nonumber
    \end{equation}
    whose $t$-map is the identity. The Pfeiffer identities Eq. \eqref{pfeif} force the action $\rhd=[\cdot,\cdot]$ to be the adjoint action. We call the association $\mathfrak{l}\mapsto\operatorname{id}_\mathfrak{l}$ the {\bf canonical embedding}, and in fact embeds the category of Lie algebras into the category of Lie algebra crossed-modules \cite{Bai_2013,Wag}.
    \item {\bf The suspension embedding: skeletal crossed-module}: given the semidirect product $V\rtimes\mathfrak{u}$, where $\mathfrak{u}$ is a Lie algebra acting on the (abelian) Lie algebra $V$, we can  associate a skeletal crossed-module structure to $V\rtimes\mathfrak{u}$, 
    \begin{equation}
        \mathfrak{g}^0=(\mathfrak{g}_{-1}\xrightarrow{0}\mathfrak{g}_0,\rhd,[\cdot,\cdot]),\qquad \mathfrak{g}_{-1}=V,~\mathfrak{g}_0=\mathfrak{u} \nonumber
    \end{equation}
    to it. The Pfeiffer identities Eq. \eqref{pfeif} enforce $\mathfrak{g}_{-1}=V$ to be Abelian (ie. a vector space), but otherwise imposes no constraints on the action $\rhd$. Provided $V$ is Abelian, the crossed-module structure of $\mathfrak{g}^0$ coincides with the Lie bracket of $V\rtimes\mathfrak{u}$
    \begin{equation}
        [X+Y,X'+Y'] = [X,X] + X\rhd Y' - X'\rhd Y,\qquad X,X'\in\mathfrak{g},~Y,Y'\in V.\label{seminull}
    \end{equation}
    Aside from the grading (ie. degree information) inherent in the crossed-module $\mathfrak{g}^0$, there is no distinction between the algebraic structures of $\mathfrak{g}^0$ and $V\rtimes\mathfrak{u}$. We call the association $V\rtimes\mathfrak{u}\mapsto \mathfrak{g}^0$ the {\bf suspension embedding}. 
\end{enumerate}
\end{example}

\subsection{Dual crossed-module and Lie bialgebra crossed-modules.} In the following, we shall also need the notion of a {\it dual Lie algebra crossed-module}. As we mentioned already, we only need to dualize the minimal data encoding the crossed-module structure.

A crossed-module is defined in terms of the Lie algebra $\g_0$, the space $\g_1$, the action of $\g_0$ on $\g_{-1}$ and the $t$-map. The Lie algebra structure is inherited from the $t$-map and the action. Canonically, the duality relation would be between opposite grading. This means that if $\g_0$ (resp. $\g_{-1}$) has degree 0 (resp. $-1$), then $\g_{0}^*$ will have degree $-1$ (resp. $0$).

Hence the dual crossed-module would be characterized by the Lie algebra $\g^*_{-1}$ of degree 0, the (vector) space $\g_{0}^*$ which would become a Lie algebra if we specify the action $\rhd^*$ and the dual $t$-map. This means that to define the dual crossed-module, we need to define a cocycle on $\g_{-1}$ to define the degree 0 Lie algebra $\g^*_{-1}$. We need a cocycle  which would give the action $\rhd^*$. Finally, dualizing the $t$-map will provide the right $t$-map between $\g_0^*$ and $\g_{-1}$. The Pfeiffer identities for the dual structures are then translated in terms of the cocycle properties.

\medskip 

More explicitly, the dual structures are induced by a pair of Lie algebra cochains
\begin{equation}
    \delta_{-1}: \mathfrak{g}_{-1}\rightarrow \mathfrak{g}_{-1}\wedge \mathfrak{g}_{-1},\qquad \delta_{0}: \mathfrak{g}_0\rightarrow (\mathfrak{g}_{-1}\otimes\mathfrak{g}_0)\oplus(\mathfrak{g}_0\otimes \mathfrak{g}_{-1}) \label{dualcocy} 
\end{equation}
such that we induce the crossed-module structures $[\cdot,\cdot]_*,\rhd^*$ in the following way
\begin{equation}
    [f,f']_*(Y) = (f\wedge f')(\delta_{-1}Y),\qquad (f\rhd ^* g)(X) = (f\wedge g)(\delta_0 X)\nonumber
\end{equation}
for each $X\in\mathfrak{g}_0,Y\in\mathfrak{g}_{-1}$ and $g\in\mathfrak{g}_0^*,f,f'\in\mathfrak{g}_{-1}^*$ in the dual. Here, duality is once again taken with respect to the canonical evaluation pairing
\begin{equation}
    (f+g,X+Y) = g(X) + f(Y),\nonumber
\end{equation}
which induces the natural invariant non-degenerate symmetric bilinear form \cite{Bai_2013}
\begin{equation}
    \langle\langle (X+f) + (Y+g), (X'+f') + (Y'+g')\rangle\rangle = g(X') + g'(X) + f(Y') + f'(Y)\label{2bilin}
\end{equation}
on the direct sum $(\mathfrak{g}_0\oplus\mathfrak{g}_{-1}^*)\oplus(\mathfrak{g}_{-1}\oplus\mathfrak{g}_0^*)$ as a vector space, where we collected together the terms of the same degree.

We now introduce a map $\tilde t: \mathfrak{g}_0^*\rightarrow\mathfrak{g}_{-1}^*$ such that the dual analogue of the Pfeiffer identities Eq. \eqref{pfeif}
\begin{equation}
    \tilde t(f\rhd^* g) =[f,\tilde t g]_*,\qquad \tilde t g\rhd^* g' = [g,g']_*,\qquad \forall~f\in\mathfrak{g}_{-1}^*,~g,g'\in\mathfrak{g}_0^* \nonumber
\end{equation}
are satisfied. When written in terms of the 2-cochains $(\delta_{-1},\delta_0)$, they are equivalent to\footnote{To see this, we for instance evaluate $\delta_0\tilde t^*$ to yield $(f\wedge g)(\delta_0\tilde t^* Y) = (f\rhd^* g)(\tilde t^*(Y)) = (\tilde t(f\rhd^* g))(Y)$, while $(f\wedge g)((\tilde t^*\otimes 1 + 1\otimes \tilde t^*)\delta_{-1}(Y)) = (f\wedge \tilde tg)(\delta_{-1}Y) = [f,\tilde t g]_*(Y)$.}
\begin{equation}
    \delta_0 \tilde t^* = (\tilde t^*\otimes 1 + 1\otimes \tilde t^*)\delta_{-1},\qquad (\tilde t^*\otimes 1 - 1\otimes \tilde t^*) \delta_0 = 0,\label{dualcoh1}
\end{equation}
where $\tilde t^*: \mathfrak{g}_{-1}\rightarrow\mathfrak{g}_0$ is the dual map of $\tilde t$. Notice that, once again, the dual bracket $[\cdot,\cdot]_*$ on $\mathfrak{g}_0^*$ is determined completely by the structures $\tilde t,\rhd^*$ from the second Pfeiffer identity.

Now in addition to the Pfeiffer identities, we must enforce the analogues of the 2-Jacobi identities Eq. \eqref{2jacob}. These read
\begin{eqnarray}
[f,[f',f'']_*]_*+[f',[f'',f]_*]_*+[f'',[f,f']_*]_*=0,\nonumber \\
f\rhd^* (f' \rhd^* g) - f' \rhd^* (f\rhd^* g) - [f,f']_*\rhd^* g = 0,\label{dual2jacob}
\end{eqnarray}
where $f,f',f''\in\mathfrak{g}_{-1}^*,g\in\mathfrak{g}_0^*$, which is equivalent to the equivariance of the structures $\rhd^*,[\cdot,\cdot]_*$ under the coadjoint representation\footnote{Notice the action $\rhd^*$ in the dual crossed-module $\mathfrak{g}^*[1]$ is distinct from the coaction $\chi^*$ of $\mathfrak{g}_0$ on $\mathfrak{g}^*_{-1}$.} $\operatorname{ad}^*=(\operatorname{ad}^*_0,\operatorname{ad}_{-1}^*)$ obtained by dualizing Eq. \eqref{strictad}:
\begin{eqnarray}
    \operatorname{ad}_X^*(f\rhd^* g) &=& (\chi_X^*f)\rhd^*g - f\rhd^*(\operatorname{ad}_X^*g),\nonumber \\ \chi_X^*[f,f']* &=& [\chi_X^*f,f'] + [f,\chi_X^*f'],\nonumber \\
    \operatorname{ad}_{-1}^*(Y)[f,f'] &=&  f\rhd^* \operatorname{ad}_{-1}^*(Y)(f') - f'\rhd^* \operatorname{ad}_{-1}^*(Y)(f).\nonumber
\end{eqnarray}
If written in terms of the 2-cochain $(\delta_{-1},\delta_0)$, we obtain explicitly the compatibility relations
\begin{eqnarray}
    \delta_0([X,X']) &=& (X\rhd \otimes 1 + 1\otimes \operatorname{ad}_X)\delta_0(X') - (X'\rhd \otimes 1 + 1\otimes\operatorname{ad}_{X'})\delta_0(X),\nonumber \\
    \delta_{-1}(X\rhd Y) &=& (X\rhd \otimes 1 + 1\otimes X\rhd)\delta_{-1}(Y) + \delta_0(X)(\rhd Y\otimes 1 + 1\otimes \rhd Y).\label{dualcoh2}
\end{eqnarray}
If Eqs. \eqref{dualcoh1} and \eqref{dualcoh2} are satisfied, then $\mathfrak{g}^*[1]=(\mathfrak{g}_0^*\xrightarrow{\tilde t}\mathfrak{g}_{-1}^*,[\cdot,\cdot]_*,\rhd^*)$ is a bona fide crossed-module\footnote{In keeping with the mathematical literature, we have kept the notation "$[1]$", which indicates a degree-shift in the grading.}. We call such a 2-cochain $(\delta_{-1},\delta_0)$ a {\bf 2-cocycle}, and $(\mathfrak{g};\delta_{-1},\delta_0)$ a {\bf Lie bialgebra crossed-module}.

\begin{remark}\label{2bialgsym}
Similar to bialgebras, the bialgebra crossed-module structure is symmetric under dualization: if $\mathfrak{g}$ is a Lie bialgebra crossed-module, then so is its dual $\mathfrak{g}^*[1]$, which is equipped with the 2-cocycle
\begin{equation}
    \delta_{-1}^*:\mathfrak{g}_0^*\rightarrow\mathfrak{g}_0^*\wedge\mathfrak{g}_0^*,\qquad \delta_0^*:\mathfrak{g}_{-1}^*\rightarrow (\mathfrak{g}_{-1}^*\otimes\mathfrak{g}_0^*) \oplus(\mathfrak{g}_0^*\otimes\mathfrak{g}_{-1}^*) \nonumber
\end{equation}
dual to the crossed-module structure $([\cdot,\cdot],\rhd)$ of $\mathfrak{g}$, satisfying analogues of Eqs. \eqref{dualcoh1} and \eqref{dualcoh2}. However, this means that $\tilde t^*$ must be related to the original crossed-module map $t$; in fact, we {\it must} have $\tilde t = t^*$ in a bialgebra crossed-module \cite{Chen:2013,Bai_2013}. The situation is summarized below:
\begin{equation}
    \begin{tikzcd}
                                                                   & \text{2-Cocycle}            & \text{components}              & \text{... are dual to ...} & \text{Action} & \text{Bracket}    \\
\mathfrak{g}:                                                      & \delta_{-1} \arrow[rrrrd]   & \delta_0 \arrow[rrd, dashed]   &                          & \rhd          & {[\cdot,\cdot]}   \\
{\mathfrak{g}^*[1]:} \arrow[u, "\tilde t = t^*", no head, dotted] & \delta_{-1}^* \arrow[rrrru] & \delta_0^* \arrow[rru, dashed] &                          & \rhd^*        & {[\cdot,\cdot]_*}
\end{tikzcd}\nonumber
\end{equation}
\end{remark}

\begin{example}\label{exp2}
Consider the examples given in {\bf Example \ref{exp1}}.
\begin{enumerate}
    \item {\bf The canonical bicrossed-module:} It is clear that any Lie 2-bialgebra structure on $\operatorname{id}_\mathfrak{l}$ collapses to just the data on the Lie algebra $\mathfrak{l}$: indeed, Eq. \eqref{dualcoh1} implies that the 2-cocycle $(\delta_{-1},\delta_0)$ is merely two copies of the usual Lie algebra 1-cocycle $\psi\in Z^1(\mathfrak{l},\mathfrak{l}^{2\wedge})$.
    
%
Due to the relation $\tilde t=t^*=\operatorname{id}$, the dual 2-algebra $\operatorname{id}_\mathfrak{l}^*[1]$ is also trivial, whose crossed-module structure is determined by the 2-cocycle $(\delta_{-1},\delta_0)=(\psi,\psi)$. Moreover, the coherence condition Eq. \eqref{dualcoh1} implies that the dual bracket $[\cdot,\cdot]_*$ induced by $\psi$ constitutes the crossed-module structures in $\operatorname{id}_{\mathfrak{l}}^*[1]$.

As 2-bialgebras, we thus have $\operatorname{id}_\mathfrak{l}^*[1]=\operatorname{id}_{\mathfrak{l}^*}$; namely duality commutes with the canonical embedding. We call $(\operatorname{id}_\mathfrak{l};\psi,\psi)$ the {\bf canonical bialgebra crossed-module} associated to the bialgebra $(\mathfrak{l};\psi)$.

\item {\bf The suspension bialgebra crossed-module:} Given the semidirect product $V\rtimes\mathfrak{u}$, taking the dual in general "flips" the action $(V\rtimes\mathfrak{u})^* = \mathfrak{u}^*\rtimes V^*$. Take a 1-cocycle $\psi\in Z^1(V\rtimes\mathfrak{u},(V\rtimes\mathfrak{u})^{2\wedge})$ on the semidirect product such that
\begin{eqnarray}
    ((f+g)\wedge (f'+g'))(\psi(X+Y))&=& [f+g,f'+g']_*(X+Y)\nonumber \\
    &=&[f,f']_*(Y)+ [g,g']_*(X) + f\rhd^*g'(X) - f'\rhd^* g(X) \nonumber
\end{eqnarray}
for each $f,f'\in V^*,g,g'\in\mathfrak{u}^*$ and $X\in\mathfrak{u},Y\in V$, we see that components of the 2-cochain $(\delta_{-1},\delta_0)$ in fact make an appearance:
\begin{equation}
   [f,f'](Y) = (f\wedge f')( \delta_{-1}Y),\qquad (f\rhd^* g)(X) = (f\wedge g)(\delta_0X).\nonumber
\end{equation}
However, the term involving $[g,g']_*$ is missing, which suggests that $\mathfrak{u}^*$ {\it must} be Abelian. This is consistent with the fact that the $t$-map $\tilde t = t^*=0$ on the dual crossed-module $(\mathfrak{g}^0)^*[1]$ is also trivial. When this is the case, Eq. \eqref{dualcoh1} becomes vacuous while Eq. \eqref{dualcoh2} is equivalent to the cocycle condition for $\psi$. As a consequence, we have the following proposition.

\begin{proposition}\label{suscocy}
If the 1-cocycle $\psi|_\mathfrak{u}$ restricted to $\mathfrak{u}\subset V\rtimes\mathfrak{u}$ has trivial image in $\mathfrak{u}^{2\wedge}$ (hence $\mathfrak{u}^*$ is Abelian), then $\psi=\delta_{-1}+\delta_0$ determines a 2-cocycle $(\delta_{-1},\delta_0)$ for its skeletal suspension $\mathfrak{g}^0$.  
\end{proposition}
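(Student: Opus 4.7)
The plan is to show that, under the stated hypothesis, the standard 1-cocycle condition (in the sense of Eq. \eqref{bialgcoh}) for $\psi$ on the semidirect product $V\rtimes\mathfrak{u}$ decomposes, degree by degree, into the 2-cocycle conditions Eq. \eqref{dualcoh2} for $(\delta_{-1},\delta_0)$ on $\mathfrak{g}^0$. Since $\tilde t = t^* = 0$ for the skeletal suspension, Eq. \eqref{dualcoh1} is automatically vacuous, so Eq. \eqref{dualcoh2} is the only nontrivial requirement.

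First, I would exploit the grading to extract the components of $\psi$. The tensor square $(V\rtimes\mathfrak{u})^{\otimes 2}$ decomposes as $V^{\otimes 2}\oplus (V\otimes\mathfrak{u})\oplus(\mathfrak{u}\otimes V)\oplus\mathfrak{u}^{\otimes 2}$. Define $\delta_{-1}$ as the $V\wedge V$-component of $\psi|_V$ and $\delta_0$ as the $(V\otimes\mathfrak{u})\oplus(\mathfrak{u}\otimes V)$-component of $\psi|_\mathfrak{u}$; the hypothesis that $\psi|_\mathfrak{u}$ has trivial image in $\mathfrak{u}^{2\wedge}$ ensures that $\delta_0$ captures the full $\psi|_\mathfrak{u}$. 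These assignments are precisely the shapes demanded by Eq. \eqref{dualcocy}.

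Next, I would unfold the 1-cocycle identity $\psi([\xi,\xi']) = \xi\cdot\psi(\xi') - \xi'\cdot\psi(\xi)$ (where $\cdot$ is the diagonal adjoint action on the tensor square) using the semidirect bracket $[X+Y,X'+Y'] = [X,X'] + X\rhd Y' - X'\rhd Y$ from Eq. \eqref{seminull}, case by case on pairs of definite degree. For $\xi=X,\xi'=X'\in\mathfrak{u}$, since $[X,X']\in\mathfrak{u}$ and the diagonal adjoint action of $\mathfrak{u}$ preserves the mixed components, projecting onto $(V\otimes\mathfrak{u})\oplus(\mathfrak{u}\otimes V)$ yields directly the first equation of Eq. \eqref{dualcoh2}. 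For $\xi=X\in\mathfrak{u},\xi'=Y\in V$, the left-hand side is $\delta_{-1}(X\rhd Y)\in V\wedge V$; on the right, $X\cdot\delta_{-1}(Y)$ gives $(X\rhd\otimes 1 + 1\otimes X\rhd)\delta_{-1}(Y)$, while $-Y\cdot\delta_0(X)$, upon using $[Y,X'] = -X'\rhd Y$ and the abelianness of $V$, reshuffles into the expression $\delta_0(X)(\rhd Y\otimes 1 + 1\otimes \rhd Y)$ appearing in the second equation of Eq. \eqref{dualcoh2}. For $\xi=Y,\xi'=Y'\in V$, both sides vanish because $[Y,Y']=0$ and the diagonal action of $V$ on $V\wedge V$ is zero; the remaining mixed case $(\xi,\xi')=(Y,X')$ is just the antisymmetric counterpart of case (ii). Thus every component of the 1-cocycle condition is accounted for, and $(\delta_{-1},\delta_0)$ satisfies Eq. \eqref{dualcoh2}.

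The main obstacle is the bookkeeping in the mixed case $(X,Y)$: one has to carefully translate the diagonal $\mathfrak{u}$- and $V$-actions on an element of $(V\otimes\mathfrak{u})\oplus(\mathfrak{u}\otimes V)$, using that $\operatorname{ad}_Y$ on $\mathfrak{u}$ is $X'\mapsto -X'\rhd Y$ (from the semidirect bracket), in order to recognize the somewhat unusual notation $\delta_0(X)(\rhd Y\otimes 1 + 1\otimes\rhd Y)$. Once that matching is in place, the rest of the argument is a clean degree-by-degree unfolding, and the conclusion follows since Eq. \eqref{dualcoh1} is trivially met for $t=0$.
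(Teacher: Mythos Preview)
Your proposal is correct and follows essentially the same approach as the paper: the paper simply asserts, in the paragraph preceding the proposition, that for $t=0$ Eq.~\eqref{dualcoh1} becomes vacuous while Eq.~\eqref{dualcoh2} is equivalent to the 1-cocycle condition for $\psi$, and you have supplied the degree-by-degree unfolding that makes that assertion precise. Your case analysis on pairs $(X,X')$, $(X,Y)$, $(Y,Y')$ is exactly the content of that claimed equivalence.
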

\noindent This implies that $(\mathfrak{g}^*)^0 = (\mathfrak{g}^0)^*[1]$; namely the duality commutes with the suspension embedding, provided the hypothesis of the proposition is true. We call $(\mathfrak{g}^0;\delta_{-1},\delta_0)$ the {\bf suspension bialgebra crossed-module} associated to the bialgebra $(V\rtimes\mathfrak{u};\psi)$.
\end{enumerate}
\end{example}

Similar to the original crossed-module, there is also a natural adjoint representation of $\mathfrak{g}^*[1]$ on itself. This is denoted 
\begin{equation}
    \mathfrak{ad}=(\mathfrak{ad}_0,\mathfrak{ad}_{-1}):\mathfrak{g}^*[1]\rightarrow\operatorname{End}\mathfrak{g}^*[1].\label{dualstrictad}
\end{equation}
It also comes in graded components given by 
\begin{equation}
    \mathfrak{ad}_0(f) = ([f,\cdot]_*\equiv \mathfrak{ad}_f\, , \, f\rhd\cdot \equiv \eta_f) \in \operatorname{End}(\mathfrak{g}_{-1}^*\oplus\mathfrak{g}_0^*),\qquad \mathfrak{ad}_{-1}(g) = \cdot \rhd^* g \in \operatorname{Hom}(\mathfrak{g}_{-1}^*,\mathfrak{g}_0^*),\nonumber 
\end{equation}
where $f\in\mathfrak{g}_{-1}^*,g\in\mathfrak{g}_0^*$. 

\subsection{Coadjoint representations}\label{seccoadj}
By dualizing the adjoint representations Eqs. \eqref{strictad} and \eqref{dualstrictad}, we obtain the {\bf coadjoint representations}
\begin{eqnarray}
    (\operatorname{ad}^*_0,\operatorname{ad}^*_{-1}):\mathfrak{g}\rightarrow \operatorname{End}\mathfrak{g}^*[1],&\quad&
    (\mathfrak{ad}_0^*,\mathfrak{ad}_{-1}^*):\mathfrak{g}^*[1]\rightarrow\operatorname{End}\mathfrak{g},\nonumber\\
    \operatorname{ad}_0^*=(\operatorname{ad}^*,\chi^*):\g_0 \rightarrow \operatorname{End}(\g_0^*\oplus\g_{-1}^*),&\quad& \mathfrak{ad}_0^*=(\mathfrak{ad}^*,\eta^*): \g_{-1}^*\rightarrow \operatorname{End}(\g_{-1}\oplus\g_0),\nonumber\\
    \operatorname{ad}_{-1}^*\equiv\Delta:\mathfrak{g}_{-1}\rightarrow\operatorname{Hom}(\mathfrak{g}_{-1}^*,\mathfrak{g}_0^*),&\quad& \mathfrak{ad}_{-1}^*\equiv \tilde\Delta:\mathfrak{g}_0^*\rightarrow\operatorname{Hom}(\mathfrak{g}_0,\mathfrak{g}_{-1}).\label{rep}
\end{eqnarray}
Explicitly for each $X,X'\in\mathfrak{g}_0,Y\in\mathfrak{g}_{-1}$ and $g,g'\in\mathfrak{g}_0^*,f\in\mathfrak{g}_{-1}^*$, they are defined in graded components by
\begin{eqnarray}
    (\operatorname{ad}^*_Xg)(X') = -g([X,X']),&\quad& f'(\mathfrak{ad}^*_fY) = -[f,f']_*(Y),\nonumber \\
    (\chi_X^*f)(Y) = -f(X\rhd Y), &\quad& g(\eta_f^*X) =- (f\rhd^* g)(X),\nonumber \\
    (\Delta_Y(f))(X) = - f(X\rhd Y),&\quad& f(\tilde\Delta_g(X)) =- (f\rhd^* g)(X).\nonumber
\end{eqnarray}
It can then be seen that the canonical evaluation pairing $(f+g,X+Y) = g(X) + f(Y)$ --- and hence its induced pairing Eq. \eqref{2bilin} --- is {\it by definition} invariant under the coadjoint representations Eq. \eqref{rep}. The first Pfeiffer identities $t\chi = \operatorname{ad}t, \tilde t\eta=\mathfrak{ad}\tilde t$ then lead to
\begin{eqnarray}
    \chi^*_Xt^*=t^*\operatorname{ad}^*_X,&\quad& \eta^*_f\tilde t^* = \tilde t^*\mathfrak{ad}^*_f,\label{comm}\\
    \Delta_Y\circ \tilde t = \operatorname{ad}_{\tilde t^*Y}^*,&\quad& \tilde\Delta_g\circ t = \mathfrak{ad}_{t^* g}^*.\label{abhom} 
\end{eqnarray}
If $\operatorname{ad}^*,\mathfrak{ad}^*$ satisfy Eqs. \eqref{comm}, \eqref{abhom}, then Eq. \eqref{rep} define {\bf strict coadjoint representations} of $\mathfrak{g}$ and $\mathfrak{g}^*[1]$ on each other. These shall play a very important role in the following.

\section{Recovering the 2-Drinfel'd double as a  2-gauge symmetry from dual 2-gauge theories}\label{2bfdd}
Similarly to the 1-gauge case, we are going to define a pair of 2-gauge theories, based on dual crossed-modules. Because they are dual, we will be able to define an action of the dual 2-gauge on the 2-gauge data and vice-versa. Then demanding that the order does not matter will put constraints which are the higher gauge equivalent of the matched pair of Lie algebras. Since now we have to deal with both 1- and 2-gauge transformations, there are more compatibility rules to consider.  

\subsection{Dual 2-gauge theories}
Given the principal 2-bundle $\mathcal{P}\rightarrow M$ with structure 2-group $\mathcal{G}$, it is equipped a {\it 2-connection} $(A,\Sigma)$ consisting of a $\mathfrak{g}_0$-valued 1-form $A$ and $\mathfrak{g}_{-1}$-valued 2-form $\Sigma$.

The 2-gauge (infinitesimal) transformations are given by \cite{Baez:2002jn,Girelli:2003ev,Mikovic:2016xmo} (recall $[\cdot,\cdot]=\operatorname{ad}$)
\begin{equation}
    \lambda:\begin{cases}A\rightarrow A^\lambda = A + d_A\lambda \\ \Sigma\rightarrow \Sigma^\lambda = \Sigma+ \lambda\rhd\Sigma\end{cases},\qquad L:\begin{cases}A\rightarrow A^L = A + tL\\ \Sigma\rightarrow \Sigma^L = \Sigma + d_AL + \frac{1}{2}[L\wedge L]\end{cases},\nonumber 
\end{equation}
where $d_A\lambda = d\lambda + [A,\lambda]$ and $\lambda\in\Omega^0(M)\otimes\mathfrak{g}_0$ and $L\in\Omega^1(M)\otimes\mathfrak{g}_{-1}$ are gauge parameters. Given the following quantities
\begin{equation}
    \mathcal{F}=F - t\Sigma,\qquad \mathcal{G}=d_A\Sigma\label{2flat}
\end{equation}
where $F=dA+\frac{1}{2}[A\wedge A]$ is the curvature of $A$, one can then compute that they in fact transform covariantly \cite{Martins:2010ry,Radenkovic:2019qme}:
\begin{equation}
 \lambda : \begin{cases}\mathcal{F}\rightarrow \mathcal{F}^\lambda = \mathcal{F} + [\lambda,\mathcal{F}] \\ \mathcal{G} \rightarrow \mathcal{G}^\lambda = \mathcal{G} + \lambda\rhd \mathcal{G} \end{cases}, \qquad
 L: \begin{cases} \mathcal{F}\rightarrow \mathcal{F}^L= \mathcal{F}  \\ 
  \mathcal{G}\rightarrow \mathcal{G}^L= \mathcal{G} + \mathcal{F}\wedge^\rhd L
 \end{cases}. \label{2BFgaugetrans}
\end{equation}
The quantity $\mathcal{F}$ is known as {\bf fake-flatness}, while $\mathcal{G}$ is the {\bf 2-curvature}. 

\paragraph{Dual 2-gauge structure.} Let us consider now a dual principal 2-bundle $\mathcal{P}^*\rightarrow X$, which is equipped with a dual 2-connection $(C,B)$ valued in the strict 2-algebra $\mathfrak{g}^*[1]$. We thus have the corresponding 2-gauge transformations (recall $[\cdot,\cdot]_*=\mathfrak{ad}$)
\begin{equation}
    \tilde\lambda: \begin{cases}C\rightarrow C^{\tilde\lambda} = C + d_C\tilde\lambda \\ B\rightarrow B^{\tilde\lambda} = B + \tilde\lambda \rhd^* B\end{cases},\qquad \tilde L:\begin{cases}C\rightarrow C^{\tilde L} = C + \tilde t \tilde L \\ B\rightarrow B^{\tilde L} = B + d_C\tilde L + \frac{1}{2}[\tilde L\wedge \tilde L]_*\end{cases},\nonumber
\end{equation}
where $d_C\tilde\lambda=d\tilde\lambda + [C,\tilde\lambda]_*$ and $\tilde\lambda\in\Omega^0(M)\otimes\mathfrak{g}_{-1}^*$ and $\tilde L\in\Omega^1(M)\otimes\mathfrak{g}_0^*$ are gauge parameters.

The fake-flatness and 2-curvature quantities are given by
\begin{equation}
    \tilde{\mathcal{F}} = \tilde F - \tilde tB,\qquad \tilde{\mathcal{G}} = d_CB, \label{2dualflat} 
\end{equation}
where $\tilde F= d_CC=dC+\frac{1}{2}[C\wedge C]_*$ is the curvature of $C$. One may compute that
\begin{equation}
 \tilde\lambda : \begin{cases}\tilde{\mathcal{F}}\rightarrow \tilde{\mathcal{F}}^{\tilde\lambda} = \tilde{\mathcal{F}} + [\tilde\lambda,\mathcal{F}]_* \\ \tilde{\mathcal{G}} \rightarrow \tilde{\mathcal{G}}^{\tilde\lambda} = \tilde{\mathcal{G}} + \tilde\lambda\rhd^* \mathcal{G} \end{cases}, \qquad
 \tilde L: \begin{cases} \tilde{\mathcal{F}}\rightarrow \tilde{\mathcal{F}}^{\tilde L}= \tilde{\mathcal{F}}  \\ 
 \tilde{ \mathcal{G}}\rightarrow \tilde{\mathcal{G}}^{\tilde{L}}= \tilde{\mathcal{G}} + \tilde{\mathcal{F}}\wedge^{\rhd^*} \tilde L
 \end{cases}, \label{dual2BFgaugetrans}
\end{equation}
form which we see that the quantities $\tilde{\mathcal{F}},\tilde{\mathcal{G}}$ are indeed covariant.

\paragraph{The coadjoint action and back-action.} Now recall that the crossed-module $\mathfrak{g}$ and its dual $\mathfrak{g}^*[1]$ define natural representations on each other given in Eq. \eqref{rep}. With these, the 2-connections $(A,\Sigma)$ shall transform under the dual gauge parameters $\tilde\lambda,\tilde L$, as follows:
\begin{eqnarray}
    \lambda: \begin{cases}B\rightarrow B^\lambda = B + \operatorname{ad}_\lambda^* B \\ C\rightarrow C^\lambda = C + \chi_\lambda^* C \end{cases},&\qquad& L: \begin{cases} B\rightarrow B^L= B + \Delta(C\wedge L) \\ C\rightarrow C^L= C\end{cases}, \label{2BFdualgaugetrans}
\end{eqnarray}
and similarly for the dual 2-connections $(C,B)$ as
\begin{eqnarray}
    \tilde\lambda: \begin{cases}\Sigma\rightarrow \Sigma^{\tilde\lambda} = \Sigma + \mathfrak{ad}^*_{\tilde\lambda}\Sigma \\ A\rightarrow A^{\tilde\lambda} = A + \eta_{\tilde\lambda}^* A \end{cases},&\qquad& \tilde L: \begin{cases} \Sigma\rightarrow \Sigma^{\tilde L}= \Sigma + \tilde\Delta(A\wedge \tilde L) \\ A\rightarrow A^{\tilde L}= A\end{cases}. \label{dual2BFdualgaugetrans}
\end{eqnarray}
These transformations have also appeared previously in the study of 2-BF theories \cite{Martins:2010ry, Radenkovic:2019qme}, which are also sometimes known as "$BFCG$-theory" \cite{Asante:2019lki}.

As in Section \ref{bfdd}, we posit that the gauge transformations given in Eqs. \eqref{2BFgaugetrans}, \eqref{dual2BFgaugetrans}, \eqref{2BFdualgaugetrans}, \eqref{dual2BFdualgaugetrans} should allow us to derive conditions such that the direct sum $\mathfrak{g} \oplus \mathfrak{g}^*[1]$ forms the {\bf 2-Manin triple}
\begin{equation}
    \mathfrak{d} = \mathfrak{g}~_{\operatorname{ad}^*}\bowtie_{\mathfrak{ad}^*}\mathfrak{g}^*[1].\nonumber
\end{equation}
In fact, we shall recover Eqs. \eqref{comm}, \eqref{abhom} as well, purely from the considerations of the 2-gauge symmetries present in each of the 2-algebra sectors.

\subsection{The 2-Drinfel'd double as 2-gauge symmetry}\label{2dd}
We follow the same strategy as in Section \ref{dd}, and use Eqs. \eqref{2BFgaugetrans}, \eqref{dual2BFgaugetrans}, \eqref{2BFdualgaugetrans}, \eqref{dual2BFdualgaugetrans} to perform the gauge transformations.

We begin by defining the combined 2-gauge parameter by
\begin{equation}
    \boldsymbol\lambda = \lambda+\tilde\lambda\in\Omega^0(M)\otimes(\mathfrak{g}_0\oplus \mathfrak{g}_{-1}^*),\qquad {\bf L}=L+\tilde L\in\Omega^1(M)\otimes(\mathfrak{g}_{-1}\oplus\mathfrak{g}_0^*). \nonumber
\end{equation}
We grouped them so that they have the same degree, as forms but also in terms of the grading.  

We are going to consider the different combinations we can make in different order and impose that the order does not matter. This will impose a set of constraints which will be equivalent to the definition of the 2-Drinfel'd double.

\subsubsection{Combined 1-gauge transformations} 
As in Section \ref{dd}, the 1-connection $A$, say, transforms not only under $\lambda$ but also $\tilde\lambda$, and similarly for $C$. By considering pure gauges $A = d\lambda,C=d\tilde\lambda$, the induced gauge transformations are
\begin{equation}
    \lambda \rightarrow\lambda+ \eta_{\tilde\lambda}^*\lambda+ o(\lambda d\tilde\lambda),\qquad \tilde\lambda\rightarrow \tilde\lambda+ \chi_\lambda^*\tilde\lambda + o(\tilde\lambda d\lambda), \nonumber
\end{equation}
modulo variations in the 1-gauge parameters. 

\paragraph{1-gauge transformations of the  1-form connections.}
We can follow an analogous treatment as previously, but with $\mathfrak{ad}^*,\operatorname{ad}^*$ replaced with $\eta^*,\chi^*$ in accordance with Eqs. \eqref{2BFgaugetrans}, \eqref{dual2BFgaugetrans}, \eqref{2BFdualgaugetrans}, \eqref{dual2BFdualgaugetrans}. Running the same arguments as in Section \ref{dd}, we introduce here $C$-dependent gauge transformations on $A$ and vice versa, in order to properly antisymmetrize the compatibility conditions. 
\begin{equation}
    A\rightarrow A+d_A\lambda - \eta_C^*\lambda,\qquad C\rightarrow C+d_C\tilde\lambda - \chi_A^*\tilde\lambda.\nonumber
\end{equation}
This leads us to consider the {\bf combined 1-connection}
\begin{equation}
    {\bf A} = A+C\in \Omega^1(M)\otimes(\mathfrak{g}_0\bowtie\mathfrak{g}_{-1}^*)\nonumber
\end{equation}
valued in the {\it degree-$0$} components of the direct sum $\mathfrak{g}\bowtie\mathfrak{g}^*[1]$.

Moreover, this antisymmetrization procedure introduces the modified covariant derivatives 
\begin{equation}
    D_{\bf A}^{(0)}\lambda = d_A\lambda - \eta_C^*\lambda,\qquad D_{\bf A}^{(0)} \tilde\lambda = d_C\tilde\lambda - \chi_A^*\tilde\lambda,\label{covdev1}
\end{equation}
for the combined 1-connection ${\bf A}=A+C$, in analogy with Eq. \eqref{covdev}. We therefore have the analogous result:
\begin{lemma}
The gauge transformation (including antisymmetrized terms)
\begin{equation}
    {\bf A}\rightarrow {\bf A}^{\boldsymbol\lambda} = {\bf A} + (D_{\bf A}^{(0)}\lambda +\eta_{\tilde\lambda}^*A)+( D_{\bf A}^{(0)}\tilde\lambda \chi_\lambda^*C ) \nonumber 
\end{equation}
is unambiguous (independent of the order at which $\lambda,\tilde\lambda$ acts) modulo terms of order $o(\lambda^2)+o(\tilde\lambda^2)+o(\tilde\lambda d\lambda)+o(\lambda d\tilde\lambda)$ \underline{iff} the {\bf first compatibility conditions}
\begin{eqnarray}
     \eta^*_{\tilde\lambda}[A,\lambda] &=& [\eta^*_{\tilde\lambda}A,\lambda]+ [A,\eta^*_{\tilde\lambda}\lambda] - \eta_{\chi_A^* \tilde\lambda}^*\lambda + \eta_{\chi_\lambda^*\tilde\lambda}^*A,\nonumber \\
    \chi^*_\lambda[C,\tilde\lambda]_* &=&[\chi_\lambda^*C,\tilde\lambda]_*+[C,\chi_\lambda^*\tilde\lambda]_* - \chi^*_{\eta_C^*\lambda}\tilde\lambda + \chi^*_{\eta_{\tilde\lambda}^*\lambda}C\label{match1}
\end{eqnarray}
are satisfied.
\end{lemma}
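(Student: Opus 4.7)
The plan is to replicate, in the present 2-gauge setting, the argument of Theorem \ref{mt}, with the coadjoint pair $(\operatorname{ad}^*,\mathfrak{ad}^*)$ replaced by the degree-$0$ coadjoint pair $(\chi^*,\eta^*)$. Since only the 1-form connections $A,C$ are involved — the 2-forms $\Sigma,B$ are inert under both $\lambda$ acting on $C$-type objects via $\chi^*$ and $\tilde\lambda$ acting on $A$ via $\eta^*$ at the level of 1-connections — the computation is entirely parallel to the one carried out in Section \ref{dd}. The only conceptual novelty is bookkeeping the degrees: both $A$ and $C$ sit in degree $0$, which is exactly why we bundle them into the combined 1-connection $\mathbf{A}=A+C$.

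I would first compute $\overrightarrow{\boldsymbol\lambda}:A$, applying $\lambda$ via the $\operatorname{ad}$-type rule of Eq.~\eqref{2BFgaugetrans} and then $\tilde\lambda$ via $\eta^*$ of Eq.~\eqref{dual2BFdualgaugetrans}, together with the antisymmetrizing modification $A\mapsto A+d_A\lambda-\eta_C^*\lambda$ already introduced just above the statement. Working to first order in each of $\lambda$ and $\tilde\lambda$ and dropping $o(\lambda\,d\tilde\lambda),o(\tilde\lambda\,d\lambda),o(\lambda^2),o(\tilde\lambda^2)$, one keeps only the genuinely "crossed" pieces:
\begin{equation}
\eta^*_{\tilde\lambda}[A,\lambda]\quad\text{versus}\quad [\eta^*_{\tilde\lambda}A,\lambda]+[A,\eta^*_{\tilde\lambda}\lambda]-\eta^*_{\chi^*_A\tilde\lambda}\lambda.\nonumber
\end{equation}
Repeating the exercise for $\overleftarrow{\boldsymbol\lambda}:A$ (apply $\tilde\lambda$ first, then $\lambda$) and taking the difference, one finds exactly the obstruction displayed as the first line of Eq.~\eqref{match1}, provided one explicitly restores antisymmetry between $A$ and $\lambda$ by adjoining the term $\eta^*_{\chi^*_\lambda\tilde\lambda}A$. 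The point is that this extra term is not inserted by hand: it is forced upon us by the modified transformation $A\mapsto A-\eta_C^*\lambda$, because when one now applies a $\tilde\lambda$-transformation the already-present $-\eta_C^*\lambda$ varies as $C\mapsto C+\chi^*_\lambda\tilde\lambda$ (to the relevant order), producing precisely $\eta^*_{\chi^*_\lambda\tilde\lambda}A$ in $\overrightarrow{\boldsymbol\lambda}:A$ but not in $\overleftarrow{\boldsymbol\lambda}:A$.

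By the symmetry $\lambda\leftrightsquigarrow\tilde\lambda$, $A\leftrightsquigarrow C$, $\chi^*\leftrightsquigarrow\eta^*$ — which exchanges the two crossed-modules $\mathfrak{g}$ and $\mathfrak{g}^*[1]$ at the level of their degree-$0$ parts — the same computation performed on $C$ yields the second line of Eq.~\eqref{match1}. The covariant-derivative form $D_\mathbf{A}^{(0)}$ in Eq.~\eqref{covdev1} then repackages the antisymmetrized transformation exactly as in the statement, so that imposing cancellation of the order-mismatch on $\mathbf{A}=A+C$ is equivalent to the pair \eqref{match1}. The converse direction is immediate: if \eqref{match1} holds, retracing the calculation shows that $\overrightarrow{\boldsymbol\lambda}\mathbf{A}-\overleftarrow{\boldsymbol\lambda}\mathbf{A}$ lies in the specified order of small quantities.

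The main technical obstacle is not any single identity but the careful accounting of which cross-terms are "quadratic" (and hence discardable) versus "linear-times-linear" in the two sectors separately (and hence must be matched). In particular, one must be disciplined about the fact that the antisymmetrization modifications $-\eta_C^*\lambda$ and $-\chi_A^*\tilde\lambda$ themselves transform under the opposite sector, so they generate exactly the second-order terms $\eta^*_{\chi^*_\lambda\tilde\lambda}A$ and $\chi^*_{\eta^*_{\tilde\lambda}\lambda}C$ whose presence is essential for the compatibility conditions to take the manifestly antisymmetric form of \eqref{match1}. Once this is handled as in the proof of Theorem \ref{mt}, the lemma follows with no further input beyond the definitions in Eqs.~\eqref{2BFgaugetrans}--\eqref{dual2BFdualgaugetrans}.
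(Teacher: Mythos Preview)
Your proposal is correct and follows exactly the route the paper takes: the paper explicitly says, just before the lemma, that one runs ``the same arguments as in Section~\ref{dd}'' with $(\operatorname{ad}^*,\mathfrak{ad}^*)$ replaced by $(\chi^*,\eta^*)$, and this is precisely what you do. Your account of the antisymmetrization mechanism (the $-\eta_C^*\lambda$ term picking up the needed extra piece when $C$ is updated in the second step) matches the paper's heuristic in Section~\ref{dd}; one small slip is that under the $\tilde\lambda$-step the relevant variation of $C$ is $C\mapsto C-\chi_A^*\tilde\lambda$ (from the modified rule), not $C\mapsto C+\chi_\lambda^*\tilde\lambda$, which feeds through to $\eta^*_{\chi_A^*\tilde\lambda}\lambda$ rather than $\eta^*_{\chi_\lambda^*\tilde\lambda}A$ --- but since the compatibility condition \eqref{match1} is antisymmetric under $A\leftrightarrow\lambda$, this does not affect the conclusion.
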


\paragraph{1-gauge transformations of the 2-form connections.}
Less trivially, the 2-connection $\boldsymbol\Sigma$ also transforms under the 1-gauge $\lambda+\tilde\lambda$. Consider first $\Sigma$. To study how it transforms, we compute the action of consecutive transformations in $\lambda$ and $\tilde\lambda$, with Eqs. \eqref{2BFgaugetrans},  \eqref{dual2BFdualgaugetrans} 
\begin{eqnarray}
    \overrightarrow{\boldsymbol\lambda}&:& \Sigma\xrightarrow{\lambda}\Sigma+\lambda\rhd \Sigma \nonumber \\
    &\qquad& \xrightarrow{\tilde\lambda} \Sigma + \mathfrak{ad}_{\tilde\lambda}^*\Sigma + \lambda\rhd \Sigma \nonumber \\
    &\qquad& \qquad + \lambda \rhd \mathfrak{ad}_{\tilde\lambda}^*\Sigma + (\eta_{\tilde\lambda}^*\lambda)\rhd \Sigma + \underbrace{(\eta_{\tilde\lambda}^*\lambda)\rhd (\mathfrak{ad}_{\tilde\lambda}^*\Sigma)}_{\sim o(\tilde\lambda^2)},\nonumber
\end{eqnarray}
while the other order gives
\begin{eqnarray}
    \overleftarrow{\boldsymbol\lambda} &:& \Sigma\xrightarrow{\tilde\lambda} \Sigma + \mathfrak{ad}_{\tilde\lambda}^*\Sigma \nonumber \\
    &\qquad& \xrightarrow{\lambda} \Sigma + \lambda\rhd\Sigma + \mathfrak{ad}_{\tilde\lambda}^*\Sigma \nonumber \\
    &\qquad&\qquad \mathfrak{ad}_{\tilde\lambda}^*(\lambda\rhd\Sigma) + \mathfrak{ad}_{\chi_\lambda^*\tilde\lambda}^*\Sigma + \underbrace{\mathfrak{ad}_{\chi_\lambda^*\tilde\lambda}^*(\lambda\rhd \Sigma)}_{\sim o(\lambda^2)}.\nonumber
\end{eqnarray}
The difference between them, modulo quadratic terms in the small gauge parameters $\lambda,\tilde\lambda$, then reads
\begin{equation}
    \mathfrak{ad}_{\tilde\lambda}^*(\lambda\rhd\Sigma) + \mathfrak{ad}_{\chi_\lambda^*\tilde\lambda}^*\Sigma - \lambda \rhd \mathfrak{ad}_{\tilde\lambda}^*\Sigma - (\eta_{\tilde\lambda}^*\lambda)\rhd \Sigma,\label{1diff2}
\end{equation}
and we must force it to vanish.

This argument can be repeated for the dual sector, with the replacements
\begin{equation}
    \Sigma\leftrightsquigarrow B,\qquad \chi^*\leftrightsquigarrow\eta^*,\qquad \mathfrak{ad}^*\leftrightsquigarrow\operatorname{ad}^*,\qquad \Delta\leftrightsquigarrow\tilde\Delta. \nonumber
\end{equation}
More precisely, we see from Eqs.  \eqref{dual2BFgaugetrans}, \eqref{2BFdualgaugetrans} that $B$ transforms as
\begin{eqnarray}
    \overrightarrow{\boldsymbol\lambda} &:& \Sigma\xrightarrow{\lambda} B + \operatorname{ad}_{\tilde\lambda}^*B \nonumber \\
    &\qquad& \xrightarrow{\tilde\lambda} B + \tilde\lambda\rhd^*B + \operatorname{ad}_{\lambda}^*B \nonumber \\
    &\qquad&\qquad \operatorname{ad}_{\lambda}^*(\tilde\lambda\rhd^*B) + \operatorname{ad}_{\eta_{\tilde\lambda}^*\lambda}^*B + \underbrace{\mathfrak{ad}_{\chi_\lambda^*\tilde\lambda}^*(\tilde\lambda\rhd^*B)}_{\sim o(\tilde\lambda^2)},\nonumber
\end{eqnarray}
while
\begin{eqnarray}
    \overleftarrow{\boldsymbol\lambda}&:& B\xrightarrow{\tilde\lambda}B+\tilde\lambda\rhd^*B  \nonumber \\
    &\qquad& \xrightarrow{\lambda} B + \operatorname{ad}_{\lambda}^*B + \tilde\lambda\rhd^*B \nonumber \\
    &\qquad& \qquad + \tilde\lambda \rhd^* \operatorname{ad}_{\lambda}^*B + (\chi_{\lambda}^*\tilde\lambda)\rhd^* B + \underbrace{(\chi_{\lambda}^*\tilde\lambda)\rhd^* (\operatorname{ad}_{\lambda}^*B)}_{\sim o(\lambda^2)}.\nonumber
\end{eqnarray}
Their difference modulo $\lambda^2,\tilde\lambda^2$ then reads
\begin{equation}
    \operatorname{ad}_{\lambda}^*(\tilde\lambda\rhd^*B) + \operatorname{ad}_{\eta_{\tilde\lambda}^*\lambda}^*B -  \tilde\lambda \rhd^* \operatorname{ad}_{\lambda}^*B - (\chi_{\lambda}^*\tilde\lambda)\rhd^* B,\label{2diff2}
\end{equation}
and we also require this to vanish. 

\paragraph{Antisymmetrization procedure for the 2-form connection.}
Similar to the 1-gauge case, we must antisymmetrize the last term $\mathfrak{ad}^*_{\chi_\lambda^*\tilde\lambda}\Sigma$ with respect to $\lambda,\Sigma$ on Eq. \eqref{1diff2}, and similarly with respect to $\tilde\lambda,B$ on Eq. \eqref{1diff2}. As such, we must modify how the 2-connections $\Sigma,B$ transform under the 1-gauge parameters $\lambda,\tilde\lambda$, respectively:
\begin{equation}
    \Sigma \rightarrow \Sigma + \lambda\rhd\Sigma -\mathfrak{ad}_B^*\lambda,\qquad B\rightarrow B+ \tilde\lambda\rhd^* B - \operatorname{ad}_\Sigma^*\tilde\lambda.\label{antisym2gau} 
\end{equation}
In contrast to the 1-group case, however, it is a bit subtle to see why such terms are necessary. 

The subtlety here is that the coadjoint action in the degree-(-1) sectors $\mathfrak{g}_{-1},\mathfrak{g}_0^*$ are determined by the second Pfeiffer identity, whence terms such as $\mathfrak{ad}_B^*\lambda,\operatorname{ad}_\Sigma^*\tilde\lambda$ can be written in terms of the maps $\Delta,\tilde\Delta$ by Eq. \eqref{abhom}. To be explicit, the second Pfeiffer identity Eq. \eqref{pfeif} gives
\begin{equation}
    \operatorname{ad}_YY' = [Y,Y'] = tY\rhd Y',\nonumber
\end{equation}
whence pairing against $f\in\mathfrak{g}^*_{-1}$ gives
\begin{equation}
    (\operatorname{ad}_Y^*f)(Y') = f([Y',Y]) = f(tY'\rhd Y) = (-\Delta_{Y}(f))(tY')\nonumber
\end{equation}
for any $Y,Y'\in\mathfrak{g}_{-1}$. Similarly in the dual sector, we have
\begin{equation}
    g'(\mathfrak{ad}_g^*X) = (\tilde t g')(-\tilde\Delta_g(X))\nonumber
\end{equation}
for each $g,g'\in\mathfrak{g}_0^*$ and $X\in\mathfrak{g}_0$. This allows us to rewrite
\begin{equation}
    \mathfrak{ad}_B^*\lambda = -\tilde\Delta_B(\lambda),\qquad \operatorname{ad}_\Sigma^*\tilde\lambda = -\Delta_\Sigma(\tilde\lambda).\label{2pfeifabhom}
\end{equation}

Now, if we compute from Eq. \eqref{antisym2gau} that
\begin{eqnarray}
    \overrightarrow{\boldsymbol\lambda} : \Sigma\rightarrow \dots - \mathfrak{ad}_B^*\lambda + \mathfrak{ad}_{\operatorname{ad}_\Sigma^*\tilde\lambda}^*\lambda,&\quad& \overleftarrow{\boldsymbol\lambda}: \Sigma\rightarrow \dots -\mathfrak{ad}_B^*\lambda,\nonumber \\
    \overrightarrow{\boldsymbol\lambda}: B\rightarrow \dots - \operatorname{ad}_\Sigma^*\tilde\lambda,&\quad& \overleftarrow{\boldsymbol\lambda}: B\rightarrow \dots - \operatorname{ad}_\Sigma^*\tilde\lambda + \operatorname{ad}_{\mathfrak{ad}_\Sigma^*\tilde\lambda},\nonumber
\end{eqnarray}
where "$\dots$" indicate terms we have encountered already previously. Then, by Eq. \eqref{2pfeifabhom}, the difference
\begin{equation}
    \mathfrak{ad}_{\operatorname{ad}_\Sigma^*\tilde\lambda}^*\lambda +  \operatorname{ad}_{\mathfrak{ad}_\Sigma^*\tilde\lambda}= \tilde\Delta_{\Delta_\Sigma(\tilde\lambda)}(\lambda) + \Delta_{\tilde\Delta_B(\lambda)}(\tilde\lambda)\nonumber
\end{equation}
in fact yields the desired antisymmetrization terms. Appending these terms to Eqs. \eqref{1diff2}, \eqref{2diff2} leads to the desired compatibility conditions 
\begin{eqnarray}
    \mathfrak{ad}_{\tilde\lambda}^*(\lambda\rhd \Sigma) &=& \lambda\rhd(\mathfrak{ad}^*_{\tilde\lambda}\Sigma)+ (\eta_{\tilde\lambda}^*\lambda)\rhd\Sigma  -\mathfrak{ad}^*_{\chi_\lambda^*\tilde\lambda}\Sigma + \tilde\Delta_{\Delta_{\Sigma}(\tilde\lambda)}(\lambda),\label{match11}\\
     \operatorname{ad}_\lambda^*(\tilde\lambda\rhd^*B)&=&\tilde\lambda\rhd^* (\operatorname{ad}_\lambda^* B)+ (\chi_\lambda^*\tilde\lambda)\rhd^* B- \operatorname{ad}^*_{\eta_{\tilde\lambda}^*\lambda}B  + \Delta_{\tilde\Delta_B(\lambda)} (\tilde\lambda)\label{match12}.
\end{eqnarray}
The $B$-dependent 1-gauge transformation on $\Sigma$ (and vice versa) suggest we should define a {\bf combined 2-connection}
\begin{equation}
    \boldsymbol\Sigma = \Sigma + B \in\Omega^2(M)\otimes(\mathfrak{g}_{-1}\oplus\mathfrak{g}_0^*)\nonumber
\end{equation}
valued in the {\it degree-$(-1)$} components of the direct sum $\mathfrak{g}\oplus\mathfrak{g}^*[1]$.

Moreover, a modified covariant action is also introduced by the combined 2-connection $\boldsymbol\Sigma=\Sigma+B$ via the antisymmetrization procedure:
\begin{eqnarray}
    D_{\boldsymbol\Sigma}^{(0)}\lambda &=& \lambda\rhd \Sigma -\mathfrak{ad}_B^*\lambda = \operatorname{ad}_{-1}(\Sigma)(\lambda)- \mathfrak{ad}_B^*\lambda,\nonumber\\ D_{\boldsymbol\Sigma}^{(0)}\tilde\lambda &=& \tilde\lambda\rhd^*B - \operatorname{ad}_\Sigma^*\tilde\lambda=\mathfrak{ad}_{-1}(B)(\tilde\lambda) - \operatorname{ad}_\Sigma^*\tilde\lambda.\label{covdev2}
\end{eqnarray}
This yields the following result:
\begin{lemma}
The 1-gauge transformation (including antisymmetrized terms in Eq. \eqref{antisym2gau})
\begin{equation}
    \boldsymbol\Sigma\rightarrow\boldsymbol\Sigma^{\boldsymbol\lambda} =\boldsymbol\Sigma + (D_{\boldsymbol\Sigma}^{(0)} \lambda + \mathfrak{ad}_{\tilde\lambda}^*\Sigma) + (D_{\boldsymbol\Sigma}^{(0)} \tilde\lambda +\operatorname{ad}_\lambda^*B) \nonumber
\end{equation}
is unambiguous modulo terms of the order $o(\lambda^2) + o(\tilde\lambda^2)+o(\tilde\lambda d\lambda)+ o(\lambda d\tilde\lambda)$ \underline{iff} the {\bf second compatibility conditions} Eqs. \eqref{match11}, \eqref{match12} are satisfied.
\end{lemma}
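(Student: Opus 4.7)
The plan is to replay the argument that established the previous lemma on the combined 1-connection ${\bf A}$, but now in degree $-1$ and with the added subtlety that the corrections introduced by Eq.~\eqref{antisym2gau} mix gradings. First, I would compute $\boldsymbol\Sigma^{\overrightarrow{\boldsymbol\lambda}}$ and $\boldsymbol\Sigma^{\overleftarrow{\boldsymbol\lambda}}$ using the transformation laws of $\Sigma$ under $\lambda$ (Eq.~\eqref{2BFgaugetrans}) and under $\tilde\lambda$ (Eq.~\eqref{dual2BFdualgaugetrans}), and dually of $B$ under $\tilde\lambda$ (Eq.~\eqref{dual2BFgaugetrans}) and under $\lambda$ (Eq.~\eqref{2BFdualgaugetrans}). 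The non-antisymmetrized portion of the discrepancy $\boldsymbol\Sigma^{\overrightarrow{\boldsymbol\lambda}}-\boldsymbol\Sigma^{\overleftarrow{\boldsymbol\lambda}}$ has already been extracted above as Eq.~\eqref{1diff2} in the $\Sigma$-sector and Eq.~\eqref{2diff2} in the $B$-sector, after discarding terms of order $o(\lambda^2)$, $o(\tilde\lambda^2)$, $o(\lambda d\tilde\lambda)$, $o(\tilde\lambda d\lambda)$.

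Second, I would feed the $B$- and $\Sigma$-dependent corrections of Eq.~\eqref{antisym2gau} into these computations. In the order $\overleftarrow{\boldsymbol\lambda}$, the subtracted piece $-\mathfrak{ad}_B^*\lambda$ must be evaluated on the already-shifted $B - \operatorname{ad}_\Sigma^*\tilde\lambda$, producing an additional cross-term $\mathfrak{ad}_{\operatorname{ad}_\Sigma^*\tilde\lambda}^*\lambda$; by the analogous computation one finds the cross-term $\operatorname{ad}_{\mathfrak{ad}_B^*\lambda}^*\tilde\lambda$ on the $B$-side. These are exactly the ``missing'' antisymmetrizing contributions with respect to $\lambda\leftrightarrow\Sigma$ and $\tilde\lambda\leftrightarrow B$, but only once one invokes Eq.~\eqref{2pfeifabhom}---itself a consequence of the second Pfeiffer identity---to rewrite them as $\tilde\Delta_{\Delta_\Sigma(\tilde\lambda)}(\lambda)$ and $\Delta_{\tilde\Delta_B(\lambda)}(\tilde\lambda)$ respectively. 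Appending these to Eqs.~\eqref{1diff2}, \eqref{2diff2} and demanding that the totals vanish yields precisely Eqs.~\eqref{match11}, \eqref{match12}; conversely, assuming the compatibility conditions collapses both orderings to agree modulo the stated orders, which gives the ``iff''.

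The main obstacle is bookkeeping across gradings. Unlike the 1-gauge case, the required antisymmetrization does not arise directly from the coadjoint actions $\mathfrak{ad}^*,\operatorname{ad}^*$ of $\mathfrak{g}^*[1]$ on $\mathfrak{g}$, because those act on objects of the ``wrong'' degree when applied to $\Sigma$ and $B$; one is forced to use the second Pfeiffer identity to re-express the relevant terms through the genuine degree-$(-1)$ representations $\Delta,\tilde\Delta$ before the antisymmetrization can be recognized as the expected one. A secondary, more technical, obstacle is verifying that the dropped cross-terms (for instance $(\eta_{\tilde\lambda}^*\lambda)\rhd(\mathfrak{ad}_{\tilde\lambda}^*\Sigma)$ and its dual) genuinely lie in one of the four allowed quadratic orders, which requires careful tracking of where derivatives on the gauge parameters land under the Leibniz rule.
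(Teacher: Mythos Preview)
Your proposal is correct and follows essentially the same route as the paper: the paper assembles the discrepancies Eqs.~\eqref{1diff2} and \eqref{2diff2} from the two orderings, introduces the antisymmetrized corrections of Eq.~\eqref{antisym2gau}, invokes the second Pfeiffer identity via Eq.~\eqref{2pfeifabhom} to recast the resulting cross-terms $\mathfrak{ad}_{\operatorname{ad}_\Sigma^*\tilde\lambda}^*\lambda$ and $\operatorname{ad}_{\mathfrak{ad}_B^*\lambda}^*\tilde\lambda$ as $\tilde\Delta_{\Delta_\Sigma(\tilde\lambda)}(\lambda)$ and $\Delta_{\tilde\Delta_B(\lambda)}(\tilde\lambda)$, and then reads off Eqs.~\eqref{match11}, \eqref{match12}. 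The only cosmetic difference is that the paper records the extra cross-term on the $\overrightarrow{\boldsymbol\lambda}$ side rather than the $\overleftarrow{\boldsymbol\lambda}$ side, but since only the difference of the two orderings matters this is immaterial.
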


\subsubsection{Combined 2-gauge transformations} Let us now study the 2-gauge transformations under the  2-gauge parameter ${\bf L}= L + \tilde L$. As $A,C$ do not transform respectively under $\tilde L,L$, we see that there is no ambiguity in ${\bf A}$ under ${\bf L}$:
\begin{equation}
    {\bf A} \rightarrow {\bf A}^{\bf L} = {\bf A} + tL + \tilde t \tilde L = {\bf A} + T{\bf L},\nonumber
\end{equation}
where we introduced the total $t$-map $T=t\otimes 1+1\otimes \tilde t$. 
On the other hand, the 2-connections $\Sigma,B$ do transform under $\tilde L$ and $L$ respectively, but only as shifts by $\tilde\Delta_{\tilde L}(\wedge A),\Delta_L(\wedge C)$, respectively, that do not depend on $\Sigma,B$ themselves. Therefore, there is no induced transformations of $L$ and $\tilde L$ on each other, and terms of cubic order $o(L^2\tilde L)+o(\tilde L^2 L)$ do not occur. Consequently, the quadratic terms $L^2,\tilde L^2$ do not effect our results, and we shall neglect them in the following.

Now we consider the 2-gauge shift transformations on the 2-connections $\Sigma,B$, and demand the difference in the different order to be zero. Similar to before, this leads to a modification of the 2-gauge transformation, because we must antisymmetrize the terms in these differences with respect to $\tilde L,A$ and $L,C$, respectively. This can be accomplished by inserting the terms $-\operatorname{ad}_A^*\tilde L$ into how $B$ transforms under $\tilde L$ in Eq. \eqref{2BFdualgaugetrans}. Similarly, we insert the term $-\mathfrak{ad}_C^*L$ into how $\Sigma$ transforms under $L$ in Eq. \eqref{2BFgaugetrans}. In other words, we introduce the 2-gauge covariant derivatives
\begin{equation}
    D^{(1)}_{\bf A}L = d_AL - \mathfrak{ad}_C^*(\wedge L),\qquad D^{(1)}_{\bf A}\tilde L = d_C\tilde L - \operatorname{ad}_A^*(\wedge \tilde L) \label{2covdev2}
\end{equation}
of the matched pair $\mathfrak{g}_{-1}\bowtie\mathfrak{g}_0^*$ in degree-(-1). 

We can check explicitly now that the order of the transformations does not matter provided we have some consistency conditions that we shall now derive. We compute directly with Eqs. \eqref{2BFgaugetrans}, \eqref{dual2BFgaugetrans} and the modified 2-gauge covariant derivatives Eq. \eqref{2covdev2}:
\begin{eqnarray}
    \overrightarrow{\bf L}&:&\Sigma\xrightarrow{L} \Sigma+d_AL- \mathfrak{ad}_C^*(\wedge L) \nonumber \\
    &\qquad& \xrightarrow{\tilde L}\Sigma + d_AL + \tilde\Delta_{\tilde L}(\wedge A) - \mathfrak{ad}_C^*(\wedge L)\nonumber \\
    &\qquad&\qquad - \mathfrak{ad}_{\tilde t\tilde L}^*(\wedge L).\nonumber\\
    \overleftarrow{\bf L} &:& \Sigma\xrightarrow{\tilde L}\Sigma + \tilde\Delta_{\tilde L}(\wedge A) \nonumber \\
    &\qquad&\xrightarrow{L} \Sigma + d_AL+ \tilde\Delta_{\tilde L}(\wedge A) - \mathfrak{ad}_C^*(\wedge L)\nonumber \\
    &\qquad&\qquad + \tilde\Delta_{\tilde L}(\wedge tL).\nonumber
\end{eqnarray}
Similarly, we may compute with Eqs.\eqref{2BFdualgaugetrans}, \eqref{dual2BFdualgaugetrans} that
\begin{eqnarray}
    \overleftarrow{\bf L}&:&B\xrightarrow{\tilde L} B+d_C\tilde L - \operatorname{ad}_A^*(\wedge \tilde L) \nonumber \\
    &\qquad& \xrightarrow{L}B + d_C\tilde L + \Delta_{L}(\wedge C) - \operatorname{ad}_A^*(\wedge \tilde L)\nonumber \\
    &\qquad&\qquad - \operatorname{ad}_{tL}^*(\wedge \tilde L).\nonumber\\
    \overrightarrow{\bf L} &:& B\xrightarrow{L}B + \Delta_{L}(\wedge C) \nonumber \\
    &\qquad&\xrightarrow{\tilde L} B + d_C\tilde L+ \Delta_{L}(\wedge C) - \operatorname{ad}_A^*(\wedge \tilde L)\nonumber \\
    &\qquad&\qquad +\Delta_{L}(\wedge \tilde t\tilde L).\nonumber
\end{eqnarray}

Now the difference in these transformations reads
\begin{equation}
    (\Delta_L(\wedge \tilde t \tilde L) +  \operatorname{ad}_{tL}^*(\wedge \tilde L))+( \tilde\Delta_{\tilde L}(\wedge tL)+\mathfrak{ad}_{\tilde t\tilde L}^*(\wedge L) ),\nonumber
\end{equation}
from which we see that the condition $\tilde t=t^*$ implies
\begin{equation}
    (\Delta_L(\wedge \tilde t \tilde L) -  \operatorname{ad}_{\tilde t^* L}^*(\wedge \tilde L))+( \tilde\Delta_{\tilde L}(\wedge tL)-\mathfrak{ad}_{ t^*\tilde L}^*(\wedge L) )=0,\nonumber
\end{equation}
which vanishes by Eq. \eqref{abhom}.

We therefore have the following result:
\begin{lemma}
\label{strictrep1}
The combined gauge transformation (including antisymmetrized terms)
\begin{equation}
    \boldsymbol\Sigma \xrightarrow{\bf L} \boldsymbol\Sigma^{\bf L} = \boldsymbol\Sigma + (D_{\bf A}^{(1)}L + \tilde\Delta_{\tilde L}(\wedge A) ) + (D_{\bf A}^{(1)}\tilde L +\Delta_L(\wedge C))\nonumber
\end{equation}
is unambiguous {\bf exactly} \underline{iff} $\tilde t=t^*$ and  $\Delta=\operatorname{ad}_{-1}^*, \tilde\Delta=\mathfrak{ad}_{-1}^*$ satisfy Eq. \eqref{abhom}.
\end{lemma}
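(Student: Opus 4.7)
The plan is to mirror the iterated gauge-transformation strategy of the preceding two lemmas, now applied to the 2-gauge parameter ${\bf L}=L+\tilde L$ acting on the combined 2-connection $\boldsymbol\Sigma=\Sigma+B$. A structural feature of this sector that I would exploit at the outset is that $A,C$ transform under ${\bf L}$ only by the linear shifts $A\mapsto A+tL$, $C\mapsto C+\tilde t\tilde L$, and that the shifts $\tilde\Delta_{\tilde L}(\wedge A),\Delta_L(\wedge C)$ of the 2-connections are linear in the 1-connections rather than in $\Sigma,B$ themselves. Consequently no back-reaction of the 2-gauge parameters on one another is induced and no cubic terms of order $o(L^2\tilde L)+o(\tilde L^2 L)$ arise. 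This observation is what upgrades the ``modulo higher-order terms'' caveat of the previous lemmas to the strict ``exactly iff'' conclusion asserted here.

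For the sufficiency direction I would directly compute $\overrightarrow{\bf L}\boldsymbol\Sigma$ and $\overleftarrow{\bf L}\boldsymbol\Sigma$ using the modified 2-gauge covariant derivatives of Eq.~\eqref{2covdev2}. The only pieces that fail to cancel come from propagating the shifts of $A,C$ through the antisymmetrization terms $\mathfrak{ad}_C^*(\wedge L),\operatorname{ad}_A^*(\wedge \tilde L)$ in the second gauge step, leaving a residue of four terms that splits cleanly into two pairs --- one valued in $\mathfrak{g}_0^*$ and one in $\mathfrak{g}_{-1}$. To see each pair cancel I would invoke the hypothesis $\tilde t=t^*$ to rewrite $tL=\tilde t^*L$ and $\tilde t\tilde L=t^*\tilde L$, then apply the two relations in \eqref{abhom} identifying $\Delta\circ\tilde t$ with $\operatorname{ad}^*_{\tilde t^*(\cdot)}$ and $\tilde\Delta\circ t$ with $\mathfrak{ad}^*_{t^*(\cdot)}$. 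The antisymmetry of the wedge product on pairs of 1-forms supplies the sign needed to turn these equalities into cancellations.

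For the necessity direction I would argue from the fact that the residue must vanish identically as a tensorial expression in arbitrary $L\in\Omega^1\otimes\mathfrak{g}_{-1}$ and $\tilde L\in\Omega^1\otimes\mathfrak{g}_0^*$. Because the two grading pieces decouple, this yields two independent bilinear operator identities between the structure maps $t,\tilde t,\Delta,\tilde\Delta$ and the coadjoint representations $\operatorname{ad}^*,\mathfrak{ad}^*$. The identifications $\Delta=\operatorname{ad}^*_{-1}$ and $\tilde\Delta=\mathfrak{ad}^*_{-1}$ are built in by definition \eqref{rep}, so the only freedom left in these identities is in the relation between the two $t$-maps; matching domains and codomains (equivalently, pairing the residue equations against arbitrary elements using the canonical form \eqref{2bilin}) forces $\tilde t=t^*$ together with the two coherence relations \eqref{abhom}.

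The main obstacle I expect is not the logical structure but the sign bookkeeping in the wedge-product contractions. The four residue terms produced by the text's computation are displayed with $+$ signs, whereas the relations \eqref{abhom} naively deliver equalities between paired terms; the pairwise cancellation must therefore be extracted from the antisymmetry of $\wedge$ on 1-forms, which is easy to state but delicate to keep straight. Modulo this careful tracking, the proof is a transparent direct calculation, and the ``exactly iff'' conclusion follows at once from the absence of any neglected higher-order corrections in this 2-gauge sector.
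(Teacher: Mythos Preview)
Your proposal is correct and follows essentially the same route as the paper: both compute the two orderings $\overrightarrow{\bf L},\overleftarrow{\bf L}$ on $\Sigma,B$ with the antisymmetrized covariant derivatives Eq.~\eqref{2covdev2}, obtain the same four-term residue valued in $\mathfrak{g}_{-1}\oplus\mathfrak{g}_0^*$, and then invoke $\tilde t=t^*$ together with Eq.~\eqref{abhom} to cancel it (the paper also flags the wedge-product sign step you anticipate). Your treatment of the necessity direction is in fact more explicit than the paper's, which simply asserts the ``iff'' from the form of the residue.
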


\subsubsection{Mixed 1- and 2-gauge transformations}
Let us now study the combined gauge transformations under the mixed parameter $(\tilde \lambda,L)$, and dually under $(\lambda,\tilde L)$. Toward this, we must study how these gauge parameters interact. 

First, as $C$ is invariant under $L$, so is the pure-gauge 1-connection $C=d\tilde\lambda$ and hence there are no terms involving the action of $L$ on $\tilde\lambda$. On the other hand, a pure-gauge 2-connection $\Sigma =d_AL$ admits a gauge transformation
\begin{equation}
    d_AL \rightarrow d_AL + \mathfrak{ad}_{\tilde\lambda}^*d_AL . 
\end{equation}
By Leibniz rule, we see that we have
\begin{equation}
    dL \rightarrow dL + \mathfrak{ad}_{\tilde\lambda}^*dL = dL + d(\mathfrak{ad}_{\tilde\lambda}^*L) - \mathfrak{ad}_{d\tilde\lambda}^*\wedge L,\nonumber
\end{equation}
hence this induces an action of $\tilde\lambda$ on $L$:
\begin{equation}
    L\rightarrow L + \mathfrak{ad}_{\tilde\lambda}^*L + o(d\tilde\lambda L) \nonumber
\end{equation}
modulo terms proportional to $d\tilde\lambda$. 

On the dual side, we similarly have no action of $\tilde L$ on $\lambda$, as $A$ is invariant under $\tilde L$. On the other hand, we have
\begin{equation}
    d\tilde L \rightarrow d\tilde L + \operatorname{ad}_{\lambda}^*d\tilde L = d\tilde L + d(\operatorname{ad}_{\lambda}^*\tilde L) - \operatorname{ad}_{d\lambda}^*\wedge\tilde L,\nonumber
\end{equation}
which induces the action of $\lambda$ on $\tilde L$:
\begin{equation}
    \tilde L\rightarrow \tilde L + \operatorname{ad}_\lambda^*L + o(d\lambda \tilde L) \nonumber
\end{equation}
modulo terms of order $o(d\lambda \tilde L)$.

\paragraph{On 1-connections.}
With this acquired, we now consider how the 1-connections transform under these mixed gauge transformations. First as $\tilde L$ acts trivially on $A$, and similarly $L$ acts trivially on $C$, we see that there is in fact no ambiguity 
\begin{equation}
    A \xrightarrow{(\lambda,\tilde L)}A +d_A\lambda,\qquad C\xrightarrow{(\tilde\lambda,L)} C + d_C\tilde\lambda.\nonumber 
\end{equation}
On the other hand, we see that we have
\begin{eqnarray}
    \overrightarrow{(\tilde\lambda,L)}&:& A\xrightarrow{\tilde\lambda} A + \eta_{\tilde\lambda}^*A \xrightarrow{L} A + tL + \eta_{\tilde\lambda}^*A+ \eta_{\tilde\lambda}^*tL,\nonumber \\
    \overleftarrow{(\tilde\lambda,L)}&:& A\xrightarrow{L} A + tL \xrightarrow{\tilde\lambda} A + \eta_{\tilde\lambda}^*A + tL + t\mathfrak{ad}_{\tilde\lambda}^*L,\nonumber
\end{eqnarray}
and similarly
\begin{eqnarray}
    \overrightarrow{(\lambda,\tilde L)}&:& C\xrightarrow{\lambda} C + \chi_{\lambda}^*C \xrightarrow{\tilde L} C + \tilde t\tilde L + \chi_{\lambda}^*\tilde t\tilde L,\nonumber \\
    \overleftarrow{(\lambda,\tilde L)}&:& C\xrightarrow{\tilde L} C + \tilde t\tilde L \xrightarrow{\lambda} C + \chi_{\lambda}^*C + \tilde t\tilde L + \tilde t\operatorname{ad}_{\lambda}^*L.\nonumber
\end{eqnarray}
The difference in these gauge transformations is
\begin{equation}
    (\eta_{\tilde\lambda}^*tL - t\mathfrak{ad}_{\tilde\lambda}^*L) + (\chi_{\lambda}^*\tilde t\tilde L -  \tilde t\operatorname{ad}_{\lambda}^*\tilde L),\nonumber
\end{equation}
and the condition for them to vanish --- with $\tilde t = t^*$ understood --- is in fact nothing but the Pfeiffer identities
\begin{equation}
    \eta_{\tilde\lambda}^*\tilde t^*L = \tilde t^*\mathfrak{ad}_{\tilde\lambda}^*L,\qquad \chi_\lambda^*t^*\tilde L = t^* \operatorname{ad}_\lambda^*\tilde L\nonumber
\end{equation}
in Eq. \eqref{comm}. In other words, we have the following result:
\begin{lemma}
\label{strictrep2}
The mixed gauge transformations
\begin{equation}
    A \xrightarrow{(\tilde\lambda, L)} A + tL + \mathfrak{ad}_{\tilde\lambda}^*A + \mathfrak{ad}_{\tilde\lambda}^*tL,\qquad C \xrightarrow{(\lambda, \tilde L)} C + \tilde t\tilde L + \operatorname{ad}_{\lambda}^*C + \operatorname{ad}_{\lambda}^*\tilde t\tilde L \nonumber
\end{equation}
are unambiguous modulo the order $o(d\lambda \tilde L) + o(d\tilde\lambda L)+o(\tilde\lambda^2)+o(\lambda^2)$ \underline{iff} $\tilde t=t^*$ and $\operatorname{ad}_0^*=(\operatorname{ad}^*,\chi^*),\mathfrak{ad}_0^*=(\mathfrak{ad}^*,\eta^*)$ satisfy Eq. \eqref{comm}.
\end{lemma}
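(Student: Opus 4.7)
The plan is to carry out the two orderings of the mixed transformation, identify the leading mismatch, and show it is precisely the obstruction controlled by Eq.~\eqref{comm}. This essentially formalises the direct computation already displayed immediately before the lemma statement, with the extra input of the relation $\tilde t=t^{*}$ established in Lemma~\ref{strictrep1}.

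First, I would invoke the degree-counting observation made in the preceding paragraphs: since $A$ is invariant under $\tilde L$ and $C$ under $L$, no ambiguity arises from how the 1-form and the 2-form gauge parameter of the \emph{same} sector act on the 1-connections. The only potentially order-dependent situations are therefore the mixed pairs $(\tilde\lambda,L)$ acting on $A$ and $(\lambda,\tilde L)$ acting on $C$. For these I would use the transformation laws Eqs.~\eqref{2BFgaugetrans} and \eqref{dual2BFdualgaugetrans}, together with the induced action $L\mapsto L+\mathfrak{ad}_{\tilde\lambda}^{*}L$ (and dually $\tilde L\mapsto \tilde L+\operatorname{ad}_{\lambda}^{*}\tilde L$), which was just obtained from the pure-gauge argument applied to $\Sigma=d_{A}L$ and $B=d_{C}\tilde L$.

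Next, I would compute $\overrightarrow{(\tilde\lambda,L)}A$ and $\overleftarrow{(\tilde\lambda,L)}A$ explicitly, as done in the text. Up to the declared negligible terms of order $o(\lambda^{2})+o(\tilde\lambda^{2})+o(d\lambda\,\tilde L)+o(d\tilde\lambda\,L)$, the two orderings differ by
\begin{equation*}
\eta_{\tilde\lambda}^{*}\,tL \; - \; t\,\mathfrak{ad}_{\tilde\lambda}^{*}L .
\end{equation*}
Using $\tilde t=t^{*}$ (equivalently $t=\tilde t^{\,*}$), this residue rewrites as $\eta_{\tilde\lambda}^{*}\tilde t^{\,*}L-\tilde t^{\,*}\mathfrak{ad}_{\tilde\lambda}^{*}L$, and vanishes for arbitrary $\tilde\lambda,L$ precisely when $\eta_{\tilde\lambda}^{*}\tilde t^{\,*}=\tilde t^{\,*}\mathfrak{ad}_{\tilde\lambda}^{*}$, the second intertwining in \eqref{comm}. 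A symmetric computation on the dual side produces the residue $\chi_{\lambda}^{*}\tilde t\tilde L-\tilde t\operatorname{ad}_{\lambda}^{*}\tilde L$, whose vanishing is the first relation in \eqref{comm}. For the converse, since $L,\tilde L,\lambda,\tilde\lambda$ are arbitrary differential forms, the vanishing of these linear-in-parameter obstructions forces the operator-level identities of \eqref{comm}; moreover $\tilde t=t^{*}$ is needed even to phrase the comparison, since it is what places the two terms in a common codomain.

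The main subtlety is not computational but structural: the mismatch term must be read through the dualisation $\tilde t=t^{*}$ previously established, so that the degree-shifting intertwiners of $\mathfrak{g}$ with $\mathfrak{g}^{*}[1]$ assemble into the coherent dual pair \eqref{comm}. Once this identification is made, the remainder is routine bookkeeping of the cross terms at order $\lambda L$ and $\tilde\lambda\tilde L$, and the equivalence claimed in the lemma is immediate.
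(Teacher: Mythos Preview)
Your proposal is correct and follows essentially the same approach as the paper: you compute the two orderings of the mixed transformation on $A$ (resp.\ $C$), extract the residue $\eta_{\tilde\lambda}^{*}tL - t\,\mathfrak{ad}_{\tilde\lambda}^{*}L$ (resp.\ $\chi_{\lambda}^{*}\tilde t\tilde L - \tilde t\,\operatorname{ad}_{\lambda}^{*}\tilde L$), and identify its vanishing with Eq.~\eqref{comm} via $\tilde t=t^{*}$. One minor remark: your claim that $\tilde t=t^{*}$ is needed ``to place the two terms in a common codomain'' is slightly off---both terms already live in $\mathfrak{g}_0$ regardless; the identity $\tilde t=t^{*}$ is what lets you \emph{recognise} the residue as precisely the intertwiner condition in Eq.~\eqref{comm}.
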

\noindent We therefore see that {\bf Lemmas \ref{strictrep1}, \ref{strictrep2}} make Eqs. \eqref{rep} define bona fide strict coadjoint representations, as anticipated.

\paragraph{On 2-connections.} We now turn to the mixed gauge transformations of the 2-connections $\Sigma,B$. Let us begin with $(\lambda,\tilde L)$. First, we have
\begin{eqnarray}
    \overrightarrow{(\lambda,\tilde L)} &:& \Sigma \xrightarrow{\lambda}\Sigma + \lambda\rhd \Sigma \xrightarrow{\tilde L}\Sigma + \lambda \rhd\Sigma+\tilde\Delta_{\tilde L}(\wedge A)  \nonumber \\
    &\qquad& \qquad +\lambda\rhd \tilde\Delta_{\tilde L}(\wedge A),\nonumber \\
    \overleftarrow{(\lambda,\tilde L)} &:& \Sigma\xrightarrow{\tilde L}\Sigma+\tilde\Delta_{\tilde L}(\wedge A)\nonumber \\
    &\qquad & \xrightarrow{\lambda} \Sigma + \lambda\rhd \Sigma + \tilde\Delta_{\tilde L}(\wedge A)\nonumber \\
    &\qquad&\qquad + \tilde\Delta_{\operatorname{ad}_{\lambda}^*\tilde L}(\wedge A) +  \tilde\Delta_{\tilde L}(\wedge d_A\lambda) + \underbrace{\tilde\Delta_{\operatorname{ad}_\lambda^*\tilde L}(\wedge d_A\lambda)}_{\sim o(\lambda^2)}.\nonumber
\end{eqnarray}
Modulo terms of order $o(d\lambda \tilde L)+ o(\lambda^2)$, the difference is the quantity
\begin{equation}
    \tilde\Delta_{\tilde L}(\wedge [A,\lambda]) + \tilde\Delta_{\operatorname{ad}_{\lambda}^*\tilde L}(\wedge A)- \lambda\rhd \tilde\Delta_{\tilde L}(\wedge A).\nonumber
\end{equation}
In order to force this quantity to vanish, the last two terms must be antisymmetrized with respect to $A$ and $\lambda$. This can be accomplished by the covariant derivative Eq. \eqref{2covdev2}, from which we acquire the terms
\begin{equation}
    - \tilde\Delta_{\operatorname{ad}_A^*(\wedge \tilde L)}(\lambda) + A\wedge^\rhd \tilde\Delta_{\tilde L}(\lambda),\nonumber
\end{equation}
and achieve the compatibility condition
\begin{equation}
    \tilde\Delta_{\tilde L}(\wedge [A,\lambda]) =\lambda\rhd \tilde\Delta_{\tilde L}(\wedge A) - A\wedge^\rhd \tilde\Delta_{\tilde L}(\lambda) -\tilde\Delta_{\operatorname{ad}_{\lambda}^*\tilde L}(\wedge A)+\tilde\Delta_{\operatorname{ad}_A^*(\wedge \tilde L)}(\lambda).\label{match21}
\end{equation}

Next, we consider the dual transformation $(\tilde\lambda,L)$ on the 2-connection $B$. We notice that the above argument can be run through analogously; indeed, we have
\begin{eqnarray}
    \overrightarrow{(\tilde\lambda,L)} &:& B \xrightarrow{\tilde \lambda}B + \tilde\lambda\rhd^* B \xrightarrow{L}B + \tilde\lambda \rhd^*B+\Delta_{L}(\wedge C)  \nonumber \\
    &\qquad& \qquad +\tilde\lambda\rhd^* \Delta_{L}(\wedge C),\nonumber \\
    \overleftarrow{(\tilde\lambda,L)} &:& B\xrightarrow{ L}B+ \Delta_{ L}(\wedge C)\nonumber \\
    &\qquad & \xrightarrow{\tilde\lambda} B + \tilde\lambda\rhd^* B + \Delta_{ L}(\wedge C)\nonumber \\
    &\qquad&\qquad + \Delta_{\mathfrak{ad}_{\tilde\lambda}^* L}(\wedge C) +  \Delta_{ L}(\wedge d_C\tilde\lambda) + \underbrace{\Delta_{\mathfrak{ad}_{\tilde\lambda}^*L}(\wedge d_C\tilde\lambda)}_{\sim o(\tilde\lambda^2)}.\nonumber
\end{eqnarray}
Modulo terms of order $o(d\tilde\lambda L)+o(\tilde\lambda^2)$, the difference reads
\begin{equation}
    \Delta_{ L}(\wedge [C,\tilde\lambda]_*) + \Delta_{\mathfrak{ad}_{\tilde\lambda}^* L}(\wedge C)- \tilde\lambda\rhd^* \Delta_{ L}(\wedge C).\nonumber
\end{equation}
Antisymmetrizing the last two terms with the covariant derivative Eq. \eqref{covdev2}, and setting this difference to zero yields the dual set of Eq. \eqref{match21}:
\begin{equation}
    \Delta_{ L}(\wedge [C,\tilde\lambda]_*) =\tilde\lambda\rhd^* \Delta_{ L}(\wedge C) - C\wedge^{\rhd^*} \Delta_{L}(\tilde\lambda) -\Delta_{\mathfrak{ad}_{\tilde\lambda}^* L}(\wedge C)+\Delta_{\mathfrak{ad}_C^*(\wedge L)}(\tilde\lambda).\label{match22}
\end{equation}

Finally, the question now is what would one yield by considering the mixed gauge $(\tilde \lambda,L)$ on $\Sigma$, or dually the mixed gauge $(\lambda,\tilde L)$ on $B$? We in fact do not get anything new; indeed, we directly compute 
\begin{eqnarray}
    \overrightarrow{(\tilde\lambda,L)} &:& \Sigma\xrightarrow{\tilde\lambda} \Sigma + \mathfrak{ad}_{\tilde\lambda}^*\Sigma \nonumber \\
    &\qquad& \xrightarrow{L} \Sigma + d_AL + - \mathfrak{ad}_C^*(\wedge L) + \mathfrak{ad}_{\tilde\lambda}^*\Sigma \nonumber \\
    &\qquad&\qquad + \mathfrak{ad}_{\tilde\lambda}^*d_AL \nonumber \\    \overleftarrow{(\tilde\lambda,L)}&:& \Sigma\xrightarrow{L} \Sigma + d_AL -\mathfrak{ad}_C^*(\wedge L)\nonumber \\
    &\qquad&\xrightarrow{\tilde\lambda} \Sigma + d_AL  + \mathfrak{ad}_{\tilde\lambda}^*\Sigma-\mathfrak{ad}_C^*(\wedge L) \nonumber \\
    &\qquad& \qquad + d_A\mathfrak{ad}_{\tilde\lambda}^*L (\eta_{\tilde\lambda}^*A)\wedge^\rhd L + \nonumber \\
    &\qquad& \qquad +\underbrace{(\eta_{\tilde\lambda}^*A)\wedge^\rhd \mathfrak{ad}_{\tilde\lambda}^*L}_{o(\tilde\lambda^2)}.\nonumber
\end{eqnarray}
By neglecting terms of order $o(d\tilde\lambda L)+ o(\tilde\lambda^2)$, the difference reads
\begin{equation}
    \mathfrak{ad}_{\tilde\lambda}^*(A\wedge^\rhd L) - A\wedge^\rhd \mathfrak{ad}_{\tilde\lambda}^*L - (\eta_{\tilde\lambda}^*A)\wedge^\rhd L \nonumber .\nonumber
\end{equation}
Setting this quantity to zero requires us to antisymmetrize the last two terms with respect to $A$ and $L$.

To do this, we introduce the terms in the modified covariant derivative Eq. \eqref{covdev2}. This yields the compatibility condition
\begin{equation}
     \mathfrak{ad}_{\tilde\lambda}^*(A\wedge^\rhd L) = A\wedge^\rhd \mathfrak{ad}_{\tilde\lambda}^*L-L\wedge^\rhd \mathfrak{ad}_{\tilde\lambda}^*A - (\eta_{\tilde\lambda}^*A)\wedge^\rhd L + (\eta_{\tilde\lambda}^*L)\wedge^\rhd A.\nonumber
\end{equation}
Dually, the difference in how $B$ transforms under $(\lambda,\tilde L)$ gives rise to the compatibility condition
\begin{equation}
    \operatorname{ad}_{\lambda}^*(C\wedge^{\rhd^*}\tilde L) = C\wedge^{\rhd^*} \operatorname{ad}_{\lambda}^*\tilde L-\tilde L\wedge^{\rhd^*} \operatorname{ad}_{\lambda}^*C - (\chi_{\lambda}^*C)\wedge^{\rhd^*} \tilde L + (\chi_{\lambda}^*\tilde L)\wedge^{\rhd^*} C.\nonumber
\end{equation}
These conditions bear striking resemblance to the second compatibility conditions Eqs. \eqref{match11}, \eqref{match12}; they in fact lead to the same condition (cf. Eq. \eqref{2manin2}).


\subsubsection{The 2-Manin triple} We have derived the following result:
\begin{theorem}\label{2mt}
The combined 2-gauge transformations (including antisymmetrized terms) of the combined 2-connection $({\bf A},\boldsymbol\Sigma)=(A+C,\Sigma+B)$
\begin{eqnarray}
    \boldsymbol\lambda &:&\begin{cases}{\bf A} \rightarrow {\bf A}^{\boldsymbol\lambda}={\bf A} + (D_{\bf A}^{(0)}\lambda  + \eta_{\tilde\lambda}^*A) + (D_{\bf A}^{(0)}\tilde\lambda +\chi_{\lambda}^*C) \\ \boldsymbol\Sigma \rightarrow \boldsymbol\Sigma^{\boldsymbol\lambda} =\boldsymbol\Sigma + (D_{\boldsymbol\Sigma}^{(0)}\lambda+ \mathfrak{ad}_{\tilde\lambda}^*\Sigma) + (D_{\boldsymbol\Sigma}^{(0)}\tilde\lambda+ \operatorname{ad}_\lambda^*B)\end{cases},\nonumber \\ 
    {\bf L} &:&\begin{cases} {\bf A}\rightarrow {\bf A}^{\bf L} = {\bf A} + (tL) + (\tilde t\tilde L) \\ \boldsymbol\Sigma \rightarrow\boldsymbol\Sigma^{\bf L} = \boldsymbol\Sigma + (D_{\bf A}^{(1)}L+ \tilde\Delta_{\tilde L}(\wedge A)) + (D_{\bf A}^{(1)}\tilde L+\Delta_L(\wedge C)) \end{cases}, \nonumber
\end{eqnarray}
in terms of the modified covariant derivatives Eqs. \eqref{covdev1}, \eqref{covdev2}, \eqref{2covdev2}, are unambiguous (ie. independent of the order at which \underline{any} of the gauge parameters $\lambda,\tilde\lambda,L,\tilde L$ act) modulo terms of the order 
\begin{equation}
    o(d\lambda(\tilde\lambda+\tilde L)) + o(d\tilde\lambda (\lambda+L)) + o(d\tilde L L) + o(dL \tilde L)+o(\lambda^2) + o(\tilde\lambda^2)\nonumber
\end{equation}
\underline{iff} 
\begin{enumerate}
    \item the maps in Eq. \eqref{rep},
    \begin{equation}
    \operatorname{ad}^*:\mathfrak{g}\rightarrow\operatorname{End}\mathfrak{g}^*[1],\qquad \mathfrak{ad}^*:\mathfrak{g}^*[1]\rightarrow\operatorname{End}\mathfrak{g},\nonumber
\end{equation}
satisfy Eqs. \eqref{comm}, \eqref{abhom} and hence define strict coadjoint representations, and
\item the {\bf compatibility conditions} Eqs. \eqref{match1}, \eqref{match11}, \eqref{match12}, \eqref{match21}, \eqref{match22} are satisfied.
\end{enumerate}
\end{theorem}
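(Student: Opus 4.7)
The plan is to assemble the theorem from the four preceding lemmas and computations in Section \ref{2dd}, since each family of combined gauge transformations has already been analyzed in isolation. The strategy amounts to: (i) enumerate every pair of consecutive gauge actions on every sector of the combined 2-connection, (ii) invoke the corresponding lemma to read off the compatibility condition forced by unambiguity, and (iii) verify the converse by checking that, once all of the modified covariant derivatives Eqs. \eqref{covdev1}, \eqref{covdev2}, \eqref{2covdev2} and strict coadjoint data are in place, every order-difference vanishes up to the stated error terms.

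First, I would organize the bookkeeping by grouping the order-difference calculations according to which gauge parameter pair acts on which component of $({\bf A},\boldsymbol\Sigma)$. The degree structure eliminates many combinations trivially: $L,\tilde L$ do not act on $C,A$ respectively, and the shift transformations by $L,\tilde L$ on $\Sigma,B$ do not depend on $\Sigma,B$ themselves, so terms of order $o(L^2\tilde L)+o(\tilde L^2 L)$ are absent and $L,\tilde L$ do not induce transformations of each other. This reduces the problem to the cases already covered: pure-$\boldsymbol\lambda$ on ${\bf A}$, pure-$\boldsymbol\lambda$ on $\boldsymbol\Sigma$, pure-${\bf L}$ on $\boldsymbol\Sigma$, and the two mixed transformations $(\tilde\lambda,L)$ and $(\lambda,\tilde L)$ on both ${\bf A}$ and $\boldsymbol\Sigma$.

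Second, I would invoke the preceding lemmas in turn. The first Lemma in Section \ref{2dd} produces Eq. \eqref{match1} from the order-ambiguity of $\boldsymbol\lambda$ acting on ${\bf A}$; the second Lemma produces Eqs. \eqref{match11}, \eqref{match12} from its action on $\boldsymbol\Sigma$. Lemma \ref{strictrep1} handles pure ${\bf L}$ on $\boldsymbol\Sigma$ and forces $\tilde t=t^*$ together with Eq. \eqref{abhom}. Lemma \ref{strictrep2} handles the mixed $(\tilde\lambda,L)$ and $(\lambda,\tilde L)$ on ${\bf A}$ and imposes Eq. \eqref{comm}, thereby upgrading $(\operatorname{ad}^*,\mathfrak{ad}^*)$ to strict coadjoint representations of $\mathfrak{g}$ and $\mathfrak{g}^*[1]$ on each other. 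The remaining mixed calculations on $\boldsymbol\Sigma$ have been carried out explicitly just above the theorem and yield Eqs. \eqref{match21}, \eqref{match22}, together with the additional identities on $\mathfrak{ad}_{\tilde\lambda}^*(A\wedge^\rhd L)$ and $\operatorname{ad}_{\lambda}^*(C\wedge^{\rhd^*}\tilde L)$ that are equivalent to Eqs. \eqref{match11}, \eqref{match12} (the remark preceding the theorem). Collecting these gives the full list of conditions stated.

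Third, I would verify the converse direction by running the same calculations in reverse: assuming $\tilde t=t^*$, Eqs. \eqref{comm}, \eqref{abhom}, and the five compatibility conditions, each order-difference computed in the preceding subsections is a linear combination of these identities and hence vanishes identically up to the claimed error terms $o(d\lambda(\tilde\lambda+\tilde L))+o(d\tilde\lambda(\lambda+L))+o(d\tilde L L)+o(dL\tilde L)+o(\lambda^2)+o(\tilde\lambda^2)$. I expect the main obstacle to be bookkeeping rather than conceptual: keeping the antisymmetrization procedure coherent across the five sectors, making sure that every term introduced into one covariant derivative (e.g.\ $-\mathfrak{ad}_B^*\lambda$ in $D_{\boldsymbol\Sigma}^{(0)}\lambda$) does not generate a new unpaired term when some other parameter acts on it, and verifying that the rewritings Eq. \eqref{2pfeifabhom} via the second Pfeiffer identity correctly convert the ``extra'' antisymmetrization terms into the $\Delta,\tilde\Delta$-form in which Eqs. \eqref{match21}, \eqref{match22} are stated. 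Once this bookkeeping is trusted, the theorem follows by direct assembly.
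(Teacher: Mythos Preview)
Your proposal is correct and mirrors the paper's approach exactly: the paper presents Theorem \ref{2mt} with the preamble ``We have derived the following result,'' treating it as the assembly of the preceding lemmas and explicit order-difference computations in Section \ref{2dd}, with no separate proof beyond that assembly. Your enumeration of which lemma yields which compatibility condition matches the paper's organization, including the observation that the mixed $(\tilde\lambda,L)$ and $(\lambda,\tilde L)$ computations on $\boldsymbol\Sigma$ produce conditions equivalent to Eqs. \eqref{match11}, \eqref{match12}.
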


These are in fact the necessary and sufficient conditions derived in Ref. \cite{Bai_2013} for a strict Lie 2-bialgebra $\mathfrak{g}$ to form the {\bf standard 2-Manin triple}\footnote{As a Drinfel'd double is equivalent to a Manin triple \cite{Chari:1994pz}, by abuse of language, we will also freely interchange the name 2-Drinfel'd double with 2-Manin triple. }
\begin{equation}
    \mathfrak{d} = \mathfrak{g}~_{\operatorname{ad}^*}\bowtie_{\mathfrak{ad}^*}\mathfrak{g}^*[1] \nonumber
\end{equation}
with its dual $\mathfrak{g}^*[1]$. Indeed, if we distill all of the above compatibility conditions to their Lie algebra values, then we obtain\footnote{Note certain signs are different due to the antisymmetry of the wedge product on forms.}
\begin{eqnarray}
    \eta_f^*[X,X'] &=& [\eta_f^*X,X']+[X,\eta_f^*X'] - \eta_{\chi^*_X f}^*X' + \eta_{\chi^*_{X'}f}^*X,\nonumber \\
    \chi_X^*[f,f']_* &=& [\chi_X^*f,f']_* + [f,\chi_X^*f]_* - \chi_{\eta_f^*X}^*f' + \chi_{\eta_{f'}^*X}f,\label{2manin1} \\
    \mathfrak{ad}_f^*(X\rhd Y) &=& X\rhd (\mathfrak{ad}_f^*Y) + (\eta_f^*X)\rhd Y - \mathfrak{ad}_{\chi_X^*f}Y + \tilde\Delta_{\Delta_Y(f)}(X),\nonumber \\
    \operatorname{ad}_X^*(f\rhd^* g)&=& f\rhd^* (\operatorname{ad}_X^* g) + (\chi_X^*f)\rhd^* g - \operatorname{ad}_{\eta_f^*X}g + \Delta_{\tilde\Delta_g(X)}(f),\label{2manin2} \\
    \tilde\Delta_g([X,X']) &=& X\rhd \tilde\Delta_g(X') + X'\rhd \tilde\Delta_g(X) - \tilde\Delta_{\operatorname{ad}_X^*g}(X') + \tilde\Delta_{\operatorname{ad}_{X'}^*g}(X),\nonumber \\
    \Delta_Y([f,f']_*) &=& f\rhd^* \Delta_Y(f') + f'\rhd^*\Delta_Y(f) - \Delta_{\mathfrak{ad}_f^*Y}(f') + \Delta_{\mathfrak{ad}_{f'}^*Y}(f)\label{2manin3}
\end{eqnarray}
for each $X,X'\in\mathfrak{g}_0,f,f'\in\mathfrak{g}_{-1}^*,Y\in\mathfrak{g}_{-1},g\in\mathfrak{g}_0^*$.

Given the compatibility conditions Eqs. \eqref{2manin1}, \eqref{2manin2}, \eqref{2manin3}, the 2-Manin triple $\mathfrak{d}$ forms Lie algebra crossed-module. The $t$-map is given by
\begin{equation}
    T=t+\tilde t = t + t^*: \mathfrak{g}_{-1}\oplus\mathfrak{g}_0^* \rightarrow \mathfrak{g}_0\oplus\mathfrak{g}_{-1}^*,\nonumber
\end{equation}
and the crossed-module structures
\begin{eqnarray}
\pmb{[} X+f,X'+f' \pmb{]} &=& [X,X']+[f,f']_*+ \chi_X^*f' - \chi_{X'}^*f + \eta_{f}^*X' - \eta_{f'}^*X, \nonumber \\
(X+f)~\mathrlap{\triangleright}{\rhd}~ (Y+g) &=&   X \rhd Y  + f\rhd^* g + \tilde\Delta_g(X) - \operatorname{ad}_X^*g  + \Delta_Y(f)- \mathfrak{ad}_{f}^*Y\label{2maninbrac}
\end{eqnarray}
for each $X,X'\in\mathfrak{g}_0,Y\in\mathfrak{g}_{-1},f,f'\in\mathfrak{g}_{-1}^*,g\in\mathfrak{g}_0^*$. The degree-(-1) bracket 
\begin{eqnarray}
   \pmb{[} Y+g,Y'+g'\pmb{]} &=& T(Y+g)~\mathrlap{\triangleright}{\rhd} ~(Y'+g') \nonumber \\
   &=& [Y, Y']  + [g, g']_* + \tilde\Delta_{g'}(tY) - \operatorname{ad}_{tY}^*g' + \Delta_{Y'}(t^* g) - \mathfrak{ad}_{t^* g}^*Y'\nonumber
\end{eqnarray}
is given by the second Pfeiffer identity from Eq. \eqref{2maninbrac}.

Crucially, the canonical crossed-module structure on the 2-Manin triple $\mathfrak{d} = \mathfrak{g}\bowtie\mathfrak{g}^*[1]$ is defined in accordance with the grading. When written in terms of the components, we see that the "double" presentation and its crossed-module structure are related $$(\mathfrak{g}_{-1},\mathfrak{g}_0) \bowtie (\mathfrak{g}_0^*,\mathfrak{g}_{-1}^*) \leftrightsquigarrow (\mathfrak{g}_{-1},\mathfrak{g}_0^*) \xrightarrow{T} (\mathfrak{g}_0,\mathfrak{g}_{-1}^*),$$ by an exchange $\mathfrak{g}_0\leftrightsquigarrow \mathfrak{g}^*$ of a component at degree-0. 


\begin{example}\label{exp3}
Consider the bialgebra crossed-modules $(\operatorname{id}_\mathfrak{l};\psi,\psi)$ and $(\mathfrak{g}^0;\delta_{-1},\delta_0)$ given in {\bf Example \ref{exp2}}.
\begin{enumerate}
    \item {\bf Canonical 2-Manin triple:} Fix the 1-cocycle $\psi=\delta_{-1}=\delta_0$ on $\mathfrak{l}$. Due to the Pfeiffer identities, all of the components of Eq. \eqref{rep} collapse to merely the coadjoint representation:
    \begin{equation}
        \chi^* = \operatorname{ad}^* = -\Delta,\qquad \eta^* = \mathfrak{ad}^* = -\tilde\Delta.\nonumber
    \end{equation}
    Denote by $\mathfrak{1}_\mathfrak{l}$ the 2-Manin triple obtained from the canonical bialgebra crossed-module $(\operatorname{id}_\mathfrak{l};\psi,\psi)$ associated to the bialgebra $(\mathfrak{l};\psi)$, then we see that the compatibility conditions Eqs. \eqref{2manin1}, \eqref{2manin2}, \eqref{2manin3} collapse to merely three copies of the compatibility condition Eq. \eqref{1manin} for $\mathfrak{d}$.
    
    This means that the bialgebra crossed-modules $\mathfrak{1}_\mathfrak{l}= \operatorname{id}_\mathfrak{d}$ in fact coincide, as they have the same crossed-module structures given by Eq. \eqref{2maninbrac}. In essence, the canonical embedding commutes with the "Manin-ization" $\mathfrak{l}\mapsto \mathfrak{d}$; this fact had also been noted in Ref. \cite{Bai_2013}. We call $\mathfrak{1}_\mathfrak{l}=\operatorname{id}_\mathfrak{d}$ the {\bf canonical 2-Manin triple} associated to the bialgebra $(\mathfrak{l};\psi)$.

    \item {\bf Suspension 2-Manin triple:} Fix the 1-cocyce $\psi = \delta_{-1} + \delta_0$ on $V\rtimes\mathfrak{u}$ as given in {\bf Proposition \ref{suscocy}}. Summing the crossed-module structure Eq. \eqref{2maninbrac} on the 2-Manin triple $\mathfrak{d}^0$ of the suspension $\mathfrak{g}^0$ yields 
    \begin{eqnarray}
        \pmb{[}(X+Y)+(f+g),(X'+Y')+(f'+g')\pmb{]}_{\mathfrak{d}} &=&\pmb{[}X+f,X'+f'\pmb{]}_{\mathfrak{d}^0}  \nonumber \\
        &\qquad& + (X+f) ~\mathrlap{\triangleright}{\rhd}~ (Y'+g') \nonumber \\
        &\qquad& - (X'+f')~\mathrlap{\triangleright}{\rhd}~ (Y+g),\label{semi2manin}
    \end{eqnarray}
    where $X\in\mathfrak{u},g\in\mathfrak{u}^*,Y\in V,f\in V^*$. This coicides with the Lie bracket Eq. \eqref{maninbrac} on the Manin triple $\mathfrak{d}=\mathfrak{d}(V\rtimes\mathfrak{u})$ of the semidirect product $V\rtimes \mathfrak{u}$. 

    If we denote by $\mu \mathfrak{d}^0$ the 2-Manin triple $\mathfrak{d}^0$ with the grading ignored, then we see that $\mathfrak{d}=\mu\mathfrak{d}^0$ have identical algebraic structures. We call $\mathfrak{d}^0$ the {\bf suspension 2-Manin triple} associated to the bialgebra $(V\rtimes\mathfrak{u};\psi)$.
\end{enumerate}
\end{example}

It is clear from the Lie brackets Eq. \eqref{2maninbrac} that the components of $\mathfrak{d}$ with the same degree, namely $\mathfrak{g}_0\bowtie\mathfrak{g}_{-1}^*$ and $\mathfrak{g}_{-1}\bowtie\mathfrak{g}_0^*$, in fact themselves form Drinfel'd doubles. In other words, a Lie bialgebra crossed-module $(\mathfrak{g},\mathfrak{g}^*[1])$ defines two Lie bialgebras $(\mathfrak{g}_0,\mathfrak{g}_{-1}^*),(\mathfrak{g}_{-1},\mathfrak{g}_0^*)$ \cite{Chen:2013,Bai_2013}.

\subsection{Strict Lie 2-bialgebras and 2-Manin rigidity}
So far, the results we have obtained were all expressed in terms of Lie algebra crossed-modules. However, it is well-known that Lie algebra crossed-modules are equivalent to {\it strict Lie 2-algebras} \cite{Wag,Chen:2013,Bai_2013}. We shall see that {\it all} of our results carry over to the known 2-algebra scenario.

Recall a (strict) {\bf Lie 2-algebra} is a two-term graded vector space $\mathfrak{g}=\mathfrak{g}_{-1}\oplus \mathfrak{g}_0$ equipped with a degree-1 differential $l_1:\mathfrak{g}_{-1}\rightarrow\mathfrak{g}_0$ and a graded Lie bracket $l_2=[\cdot,\cdot]:\mathfrak{g}_i\wedge \mathfrak{g}_j\rightarrow\mathfrak{g}_{i+j}$ (where $i,j=0,-1$), such that the {\bf Koszul relations}
\begin{eqnarray}
    [X,X'] = -[X',X],\qquad [X,Y] = -[Y,X],\qquad [Y,Y']=0,\nonumber \\
    l_1 [X,Y] = [X,l_1Y],\qquad [l_1Y,Y']=[Y,l_1Y'],\nonumber \\
    {[[X,X'],X''] }+ [[X',X''],X] + [[X'',X],X']=0,\nonumber \\
    {[[X,X'],Y]} + [[X',Y],X] + [[Y,X],X']=0\nonumber
\end{eqnarray}
are satisfied for each $X,X',X''\in \mathfrak{g}_0$ and $Y,Y'\in\mathfrak{g}_{-1}$.

If we redefine $l_1=t:\mathfrak{g}_{-1}\rightarrow\mathfrak{g}_0$, and put
\begin{equation}
l_2|_{\Lambda^2\mathfrak{g}_0} = [\cdot,\cdot]_{\mathfrak{g}_0},\qquad l_2|_{\mathfrak{g}_0\wedge\mathfrak{g}_{-1}} = \mathfrak{g}_{0}\rhd \mathfrak{g}_{-1}, \label{translation}
\end{equation}
then we see that the above Koszul relations above reproduce precisely the Pfeiffer identities Eq. \eqref{pfeif}, as well as the 2-Jacobi identities Eqs. \eqref{2jacob}. Moreover, if the 2-cochain $(\delta_{-1},\delta_0)$ Eq. \eqref{dualcocy} introduced in Section \ref{algxmod} satisfy the coherence conditions Eqs. \eqref{dualcoh1}, \eqref{dualcoh2}, then $(\delta_{-1},\delta_0)$ is in fact a {\it Lie 2-algebra 2-cocycle}.

\begin{proposition}\label{2algxmod}
Lie algebra crossed-modules (resp. bialgebra crossed-modules) coincide with strict Lie 2-algebras (resp. Lie 2-bialgebras) as defined in Ref. \cite{Bai_2013}. 
\end{proposition}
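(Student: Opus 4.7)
The plan is to verify the claimed equivalence by building explicit translation dictionaries in both directions and then checking that each axiom on one side maps precisely onto an axiom on the other. The result is essentially a bookkeeping theorem, so the proof should not introduce new computational content beyond the identifications already anticipated in Eq.~\eqref{translation}.

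For the first half of the statement (crossed-modules versus strict Lie 2-algebras), I would fix the underlying graded vector space $\mathfrak{g}=\mathfrak{g}_{-1}\oplus\mathfrak{g}_0$ and impose $l_1 = t$, $l_2|_{\Lambda^2\mathfrak{g}_0} = [\cdot,\cdot]_{\mathfrak{g}_0}$, and $l_2|_{\mathfrak{g}_0\wedge\mathfrak{g}_{-1}} = \rhd$. The remaining component $l_2|_{\Lambda^2\mathfrak{g}_{-1}}$ is then forced by the second Pfeiffer identity to take the form $l_2(Y,Y') = l_1(Y)\rhd Y'$, and one would check directly that this is antisymmetric in $Y,Y'$ using the first Pfeiffer identity together with the mixed 2-Jacobi identity for $\rhd$. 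Having fixed this dictionary, the Koszul compatibilities $l_1[X,Y]=[X,l_1Y]$ and $[l_1Y,Y'] = [Y,l_1Y']$ coincide literally with the Pfeiffer identities \eqref{pfeif}, and the two graded Jacobi relations for $l_2$ coincide with the 2-Jacobi identities \eqref{2jacob}. The converse direction is immediate: from a strict Lie 2-algebra one reads off $(t,[\cdot,\cdot]_{\mathfrak{g}_0},\rhd)$ and reverses the axiomatic verification.

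For the second half (bialgebra crossed-modules versus Lie 2-bialgebras of \cite{Bai_2013}), I would reinterpret the graded cochain $(\delta_{-1},\delta_0)$ of Eq.~\eqref{dualcocy} as a single degree-$(+1)$ cochain on the Chevalley--Eilenberg-type complex of the Lie 2-algebra $\mathfrak{g}$ with values in $\mathfrak{g}^{2\wedge}$. The coherence relation \eqref{dualcoh1}, which enforces $\tilde t = t^\ast$ to be compatible with the dual bracket and action, translates into the statement that this cochain intertwines the internal differential induced by $l_1$. The equivariance relations \eqref{dualcoh2}, which express that the dual data $(\tilde t,[\cdot,\cdot]_\ast,\rhd^\ast)$ again form a crossed-module, translate into the closedness condition $\partial(\delta_{-1}+\delta_0)=0$ in the adjoint Lie 2-algebra complex of \cite{Bai_2013}. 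The converse direction follows by reading off the graded components from a 2-algebra 2-cocycle.

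The main obstacle will be the bookkeeping in the second half: the Lie 2-algebra cohomology of \cite{Bai_2013} involves several graded differentials acting simultaneously, and matching signs and degrees between the crossed-module presentation (in which $\mathfrak{g}_{-1}$ and the shifted dual $\mathfrak{g}_0^\ast$ both sit in degree $-1$) and the Koszul-graded 2-algebra presentation (in which the grading enters multiplicatively through tensor products) requires care. Once the signs line up, the proposition reduces to the translation dictionary already set up in Section~\ref{algxmod}, with no additional computational input.
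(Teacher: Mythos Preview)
Your approach matches the paper's: the proposition is stated without a formal proof environment, the paper having observed in the paragraph immediately preceding it that the dictionary Eq.~\eqref{translation} makes the Koszul relations coincide with the Pfeiffer identities Eq.~\eqref{pfeif} and the 2-Jacobi identities Eq.~\eqref{2jacob}, and that the coherence conditions Eqs.~\eqref{dualcoh1}, \eqref{dualcoh2} on $(\delta_{-1},\delta_0)$ are precisely the Lie 2-algebra 2-cocycle conditions of \cite{Bai_2013}.

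One correction, though: your sentence about $l_2|_{\Lambda^2\mathfrak{g}_{-1}}$ is wrong. That component of the graded bracket is \emph{zero} by degree reasons --- a degree-0 bracket of two degree-$(-1)$ elements would land in degree $-2$, which is absent in a two-term complex (this is the Koszul relation $[Y,Y']=0$ listed in the paper). The Lie bracket $[Y,Y']_{\mathfrak{g}_{-1}} = tY\rhd Y'$ is not a component of $l_2$; it lives purely on the crossed-module side as data derived from $t$ and $\rhd$ via the second Pfeiffer identity. What the Koszul relation $[l_1Y,Y'] = [Y,l_1Y']$ encodes is exactly the antisymmetry $tY\rhd Y' = -tY'\rhd Y$ of that derived bracket. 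With this adjustment, your dictionary and axiom-matching go through as you describe.
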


Now, given the natural coadjoint representations Eq. \eqref{rep} of $\mathfrak{g}$ and its dual $\mathfrak{g}^*[1]$ on each other, the  conditions Eqs. \eqref{2manin1}, \eqref{2manin2}, \eqref{2manin3} in fact coincide with those derived in Ref. \cite{Bai_2013}. These conditions are necessary for $\mathfrak{g}\oplus\mathfrak{g}^*[1]$ to form a {\bf matched pair} of 2-algebras, hence our notion of the standard 2-Manin triple coincides with that given in Ref. \cite{Bai_2013} as well. 

\paragraph{2-Manin rigidity.}
Suppose, given a Lie 2-bialgebra $\mathfrak{g}$ and its dual $\mathfrak{g}^*[1]$, we introduce a pair of generic graded action/back-action
\begin{equation}
    \bar\rhd: \mathfrak{g}\rightarrow\operatorname{End}\mathfrak{g}^*[1],\qquad \bar\lhd:\mathfrak{g}^*[1]\rightarrow\operatorname{End}\mathfrak{g}.\nonumber
\end{equation}
Then we in fact have the following gauge-enforced result
\begin{corollary}\label{2maninrigid}
Given $\tilde t=t^*$, the induced combined 2-gauge transformations $(\boldsymbol\lambda,{\bf L}) = (\lambda+\tilde\lambda,L+\tilde L)$ are unambiguous up to the order given in {\bf Theorem \ref{2mt}} iff
\begin{enumerate}
    \item the action/back-action pair $\bar\rhd,\bar\lhd$ form strict coadjoint representations, and 
    \item the double
    \begin{equation}
        \mathfrak{d}' = \mathfrak{g}\overline{\bowtie}\mathfrak{g}^*[1] \nonumber
    \end{equation}
    is isomorphic to the standard 2-Manin triple $\mathfrak{d} = \mathfrak{g}~_{\operatorname{ad}^*}\bowtie_{\mathfrak{ad}^*}\mathfrak{g}^*[1]$.
\end{enumerate}
\end{corollary}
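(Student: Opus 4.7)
The approach would be to mimic exactly the strategy of Corollary \ref{maninrigid}, but at the 2-gauge level by leveraging the full machinery assembled in Section \ref{2dd}. The plan is to substitute the generic graded action/back-action $(\bar\rhd,\bar\lhd)$ in place of the canonical strict coadjoint representations $(\operatorname{ad}^*,\mathfrak{ad}^*)$ throughout every combined-gauge computation leading up to Theorem \ref{2mt}, and to track what ``unambiguity modulo the stated small quantities'' forces upon $(\bar\rhd,\bar\lhd)$ and the resulting double $\mathfrak{d}'$.

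Concretely, I would first redo the combined 1-gauge analysis of the 1-connections $A,C$ using only the degree-$0$ components of $\bar\rhd,\bar\lhd$: the arguments that produced Eq. \eqref{match1} go through verbatim and constrain those degree-$0$ components. Next, I would run the same argument on the 2-connections $\Sigma,B$ and on the 2-gauge parameters $L,\tilde L$. The central point is that the analysis leading to Lemmas \ref{strictrep1} and \ref{strictrep2} never uses special properties of $\operatorname{ad}^*,\mathfrak{ad}^*$ beyond the Pfeiffer identities \eqref{pfeif} and the hypothesis $\tilde t = t^*$, which we are given. Consequently, unambiguity of the mixed 1- and 2-gauge transformations forces the degree-$(-1)$ components of $\bar\rhd,\bar\lhd$ to coincide with maps $\Delta,\tilde\Delta$ satisfying Eq. \eqref{abhom}, while the degree-$0$ components must satisfy Eq. \eqref{comm}. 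This establishes claim (1): $(\bar\rhd,\bar\lhd)$ defines strict coadjoint representations in the sense of Eq. \eqref{rep}.

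Once the strict coadjoint structure is fixed, the remaining compatibility conditions extracted from the antisymmetrized mixed transformations are exactly the analogues of Eqs. \eqref{match11}, \eqref{match12}, \eqref{match21}, \eqref{match22}. Collecting these with Eqs. \eqref{comm}, \eqref{abhom}, one arrives at the full system \eqref{2manin1}--\eqref{2manin3}, which is precisely the matched-pair condition identified in \cite{Bai_2013}. The induced crossed-module structure on $\mathfrak{d}' = \mathfrak{g}\,\overline\bowtie\,\mathfrak{g}^*[1]$ must then assemble into the standard form \eqref{2maninbrac} --- it is the unique crossed-module bracket compatible with the data --- from which the isomorphism $\mathfrak{d}'\cong\mathfrak{d} = \mathfrak{g}\,_{\operatorname{ad}^*}\!\bowtie_{\mathfrak{ad}^*}\!\mathfrak{g}^*[1]$ follows immediately.

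The subtle step, and the one most in need of care, will be the antisymmetrization procedure that originally relied on Eq. \eqref{2pfeifabhom} to rewrite $\mathfrak{ad}_B^*\lambda,\operatorname{ad}_\Sigma^*\tilde\lambda$ in terms of $\Delta,\tilde\Delta$. For the generic pair $(\bar\rhd,\bar\lhd)$, one must verify that the analogous rewriting is genuinely \emph{forced} by unambiguity (given $\tilde t = t^*$ and the second Pfeiffer identity), rather than an artefact of the canonical coadjoint ansatz. The key observation is that the second Pfeiffer identity is a structural property of the crossed-module $\mathfrak{g}$ itself, not of the chosen action/back-action, so the relations in Eq. \eqref{2pfeifabhom} propagate automatically and pin down the degree-$(-1)$ data of $(\bar\rhd,\bar\lhd)$ with no additional freedom.
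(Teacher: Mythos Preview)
Your proposal is correct and follows essentially the same approach as the paper's proof: rerun the entire Section \ref{2dd} argument with the generic pair $(\bar\rhd,\bar\lhd)$ in place of $(\operatorname{ad}^*,\mathfrak{ad}^*)$, so that unambiguity forces the analogues of Eqs.~\eqref{comm}, \eqref{abhom} (yielding claim 1) and of Eqs.~\eqref{match1}, \eqref{match11}, \eqref{match12}, \eqref{match21}, \eqref{match22} (yielding the matched-pair conditions), whence the induced crossed-module structure must take the form Eq.~\eqref{2maninbrac} and $\mathfrak{d}'\cong\mathfrak{d}$. Your additional remark on the antisymmetrization step and Eq.~\eqref{2pfeifabhom} is a helpful elaboration that the paper's terse proof omits.
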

\begin{proof}
The same argument as in Section \ref{2dd}, leading up to {\bf Theorem \ref{2mt}}, goes through for the generic action/back-action $\bar\rhd,\bar\lhd$ such that, given $\tilde t = t^*$,
\begin{enumerate}
    \item they satisfy the analogues of Eqs. \eqref{comm} and \eqref{abhom}, and
    \item they satisfy the analogues of Eqs. \eqref{match1}, \eqref{match11}, \eqref{match12}, \eqref{match21}, \eqref{match22},
\end{enumerate}
with $\operatorname{ad}$ replaced by $\bar\rhd$ and $\mathfrak{ad}$ replaced by $\bar\lhd$. The induced crossed-module structure for $\mathfrak{d}'$ must then take the same form Eq. \eqref{2maninbrac}, which forces $\mathfrak{d}'\cong\mathfrak{d}$.
\end{proof}
\noindent This rigidity of 2-Manin triples was proven in Ref. \cite{Bai_2013}, and here we have demonstrated that this is enforced by gauge-theoretic considerations. In contrast to Corollary \ref{maninrigid}, however, the 2-gauge theory not only enforces this rigidity, but also the conditions Eqs. \eqref{comm}, \eqref{abhom} necessary for $\bar\rhd,\bar\lhd$ to form strict coadjoint representations.

\section{4D topological action invariant under the 2-Drinfel'd double}\label{combo2cov}
As $\mathfrak{d}$ is itself a (strict) Lie 2-algebra, we can study the 2-gauge theory associated to it. Here, we seek to construct curvature quantities that are covariant under the combined 2-gauge transformation $(\boldsymbol\lambda,{\bf L})$, briefly {\it 2-covariant}. In analogy with Section \ref{combocov}, we can define a combined curvature that reads
\begin{eqnarray}
    {\bf F} &=& d{\bf A} + \frac{1}{2}\pmb{[}{\bf A}\wedge{\bf A}\pmb{]} \nonumber\\
    &=&(dA + \frac{1}{2}[A\wedge A] + \eta_C^*(\wedge A))\nonumber \\
    &\qquad& + (dC + \frac{1}{2}[C\wedge C]_* + \chi_A^*(\wedge C)) \nonumber \\
    &\equiv& \bar F+ \bar F^*,\label{combo1curv}
\end{eqnarray}
which yields the {\bf combined fake-flatness}
\begin{equation}
    \boldsymbol{\mathcal{F}} = {\bf F} - T\boldsymbol\Sigma = (\bar F - t\Sigma) + (\bar F^* - t^*B) \equiv \bar{\mathcal{F}} + \bar{\mathcal{F}}^*,\label{comboff}
\end{equation}
by making use of the crossed-module map $T = t+ \tilde t$ for $\mathfrak{d}$. The bracket can be computed as
\begin{equation}
    \llbracket \boldsymbol{\mathcal{F}},\boldsymbol\lambda\rrbracket = ([\bar{\mathcal{F}},\lambda]+ \eta_{\tilde\lambda}^*\bar{\mathcal{F}} - \eta_{\bar{\mathcal{F}}^*}^*\lambda)+([\bar{\mathcal{F}}^*,\tilde\lambda]_*+ \chi_\lambda^*\bar{\mathcal{F}}^* - \chi_{\bar{\mathcal{F}}}^*\tilde\lambda).\nonumber
\end{equation}
Analogously, we define the {\bf combined 2-curvature}
\begin{eqnarray}
    \boldsymbol{\mathcal{G}} &=& d\boldsymbol\Sigma + {\bf A}\wedge^{\mathrlap{\triangleright}{\rhd}}\boldsymbol\Sigma \nonumber\\
    &=& (d\Sigma + A\wedge^\rhd \Sigma + \tilde\Delta_B(\wedge A)-\mathfrak{ad}_C^*(\wedge \Sigma)) \nonumber \\
    &\qquad& + (dB + C\wedge^{\rhd^*} + \Delta_\Sigma(\wedge C) -\operatorname{ad}_A^*(\wedge B) )\nonumber \\
    &\equiv & \bar{\mathcal{G}}+ \bar{\mathcal{G}}^*.\label{combo2c} 
\end{eqnarray}
The 2-covariance of $\boldsymbol{\mathcal{F}},\boldsymbol{\mathcal{G}}$ can once again be inferred immediately from the crossed-module structure of the standard 2-Manin triple $\mathfrak{d}$; for an explicit proof, see Appendix \ref{gencov}. It in fact allows us to rewrite the combined 2-gauge transformations in {\bf Theorem \ref{2mt}} as 
\begin{equation}
        \boldsymbol\lambda :\begin{cases}{\bf A}\rightarrow{\bf A}^{\boldsymbol\lambda} = {\bf A} + d_{\bf A}\boldsymbol\lambda \\ \boldsymbol\Sigma \rightarrow\boldsymbol\Sigma^{\boldsymbol\lambda} = \boldsymbol\Sigma + \boldsymbol\lambda ~\mathrlap{\triangleright}{\rhd}~\boldsymbol
    \Sigma\end{cases},\qquad
    {\bf L}:\begin{cases}{\bf A}\rightarrow{\bf A}^{\bf L} = {\bf A} + T{\bf L} \\ \boldsymbol\Sigma^{\bf L} = \boldsymbol\Sigma + d_{\bf A}{\bf L}\end{cases}.\label{combo2gau}
\end{equation}
This is a very useful and compact way of organizing the combined 2-gauge transformations.


\begin{example}\label{exp4}
Recall the canonical and suspension 2-Manin triples considered in {\bf Example \ref{exp3}}.
\begin{enumerate}
    \item  Pick a 1-cocycle $\psi$ on $\mathfrak{l}$. We demonstrate the gauge-theoretic consequences of $\mathfrak{1}_\mathfrak{l}=\operatorname{id}_\mathfrak{d}$. First, the 2-covariant quantities in the 2-Manin triple $\mathfrak{1}_\mathfrak{l}$ are given by Eq. \eqref{comboff} and \eqref{combo2c}. Recalling that all components of Eq. \eqref{rep} collapse to merely the coadjoint representation $\operatorname{ad}^*$, the sum of these 2-covariant quantities can in fact be written as
\begin{eqnarray}
    \bar{\mathcal{G}} + \bar{\mathcal{G}}^* &=& d(\Sigma+B) + \pmb{[}(A+C)\wedge(\Sigma+B)\pmb{]},\nonumber \\ \bar{\mathcal{F}}+\bar{\mathcal{F}}^*& =& d(A+C) + \frac{1}{2}\pmb{[}(A+C)\wedge (A+C)\pmb{]}-(\Sigma +B), \nonumber
\end{eqnarray}
where $\pmb{[}\cdot,\cdot\pmb{]}$ denotes the bracket Eq. \eqref{maninbrac} of the Manin triple $\mathfrak{d}$ of $\mathfrak{l}$. By interpreting $(A+C,\Sigma+B)$ as the combined 2-connection on $\operatorname{id}_\mathfrak{d}$, we see that $\mathfrak{1}_\mathfrak{l}$ and $\operatorname{id}_\mathfrak{d}$ host identical 2-covariant quantities, which is consistent with the fact that $\mathfrak{1}_\mathfrak{l}=\operatorname{id}_\mathfrak{d}$. 

\item Pick a 1-cocycle $\psi$ on $\mathfrak{h}=V\rtimes\mathfrak{u}$. As in {\bf Example \ref{exp3}}, let $\mathfrak{d}(\mathfrak{h})$ denote the Drinfel'd double associated to the bialgebra $(\mathfrak{h};\psi)$, and let $\mathfrak{d}^0$ denote the 2-Drinfel'd double associated to the skeletal suspension $\mathfrak{d}^0$ of $\mathfrak{h}$. We demonstrate the gauge-theoretic consequences of $\mathfrak{d}(\mathfrak{h})=\mu\mathfrak{d}^0$, where $\mu$ is the map that forgets the grading in the crossed-module. As the form-degree of the fields depend on the grading, there is a procedure that "promotes" the covariant quantities on $\mathfrak{d}(\mathfrak{h})$ to 2-covariant quantities based on $\mathfrak{d}^0$. 

The components $\bar F \in \mathfrak{h},\bar F^*\in\mathfrak{h}^*$ in the combined curvature ${\bf F} = \bar F+ \bar F^* \in\mathfrak{d}(\mathfrak{h})$  in Eq. \eqref{combocurv} can be written as %
\begin{equation}
    \bar F= dA + \frac{1}{2}[A,A]+\mathfrak{ad}_b^*(\wedge A),\qquad \bar F^* = db+ \frac{1}{2}[b\wedge b]_* + \operatorname{ad}_A^*(\wedge b). \nonumber
\end{equation}
By decomposing into components of the semidirect product $\mathfrak{h}=V\rtimes\mathfrak{u}$, we have
\begin{eqnarray}
    \bar F &=&dA|_\mathfrak{u}+ \frac{1}{2}[A|_{\mathfrak{u}}\wedge A|_\mathfrak{u}]+\eta_{b|_{V^*}}^*(\wedge A|_\mathfrak{u})\nonumber \\ 
    &\qquad& + dA|_V  + A|_\mathfrak{u}\rhd A|_V  +  \mathfrak{ad}_{b|_{V^*}}^*(\wedge A|_V) - \tilde\Delta_{b|_{\mathfrak{u}^*}}(\wedge A|_{\mathfrak{u}}),\nonumber \\
    \bar F^*&=& db|_{V^*}+ \frac{1}{2}[b|_{V^*}\wedge b|_{V^*}]_* + \chi_{A|_{\mathfrak{u}}}^*(\wedge b|_{V^*}) \nonumber\\
    &\qquad& + db|_{\mathfrak{u}^*}  + b|_{V^*}\rhd^* b|_{\mathfrak{u}^*} + \operatorname{ad}_{A|_{\mathfrak{u}}}^*(\wedge b|_{\mathfrak{u}^*}) - \Delta_{A|_{V}}(\wedge b|_{V^*}),\label{semidirectcurv}
\end{eqnarray}
where we have defined, for each $X\in\mathfrak{u},Y\in V,f\in V^*,g\in\mathfrak{u}^*$, the notations
\begin{equation}
    (\chi_X^*f)(Y) = \Delta_Y(f)(X) = -f(X\rhd Y),\qquad g(\eta^*_fX) = f(\tilde\Delta_g(X)) = -(f\rhd^*g)(X),\nonumber
\end{equation}
as per convention. This notation is convenient, as it allows us to identify
\begin{eqnarray}
    A|_{\mathfrak{u}} = A^0,&\quad& b|_{V^*} = C^0\nonumber \\ 
     \bar F|_{\mathfrak{u}} = \bar F^0,&\quad& \bar F^*|_{V^*} = (\bar F^0)^*, \nonumber
\end{eqnarray}
where ${\bf F}^0=\bar F^0 + (\bar F^0)^*$ the 2-covariant 1-curvature defined in Eq. \eqref{combo1curv}. 

Now what of the other components $A|_V,b|_{\mathfrak{u}^*}$ of the connections? If we bump up the form-degrees of these quantities to 2-forms, and rename them $\Sigma^0,B^0$, respectively, then the curvature quantities $\bar F|_V,\bar F^*|_{\mathfrak{u}^*}$ become 3-forms. From the formulas Eq. \eqref{semidirectcurv} above, these 3-forms are given by 
\begin{eqnarray}
    d\Sigma^0  + A|_\mathfrak{u}\rhd \Sigma^0  +  \mathfrak{ad}_{b|_{V^*}}^*(\wedge \Sigma^0) - \tilde\Delta_{B^0}(\wedge A|_{\mathfrak{u}}) &=& d\Sigma^0  + A^0\rhd \Sigma^0  +  \mathfrak{ad}_{C^0}^*(\wedge \Sigma^0) - \tilde\Delta_{B^0}(\wedge A^0),\nonumber\\
    dB^0  + b_{V^*}\rhd^* B^0 + \operatorname{ad}_{A|_{\mathfrak{u}}}^*(\wedge B^0) + - \Delta_{\Sigma^0}(\wedge b|_{V^*})&=& dB^0  + C^0\rhd^* B^0 + \operatorname{ad}_{A^0}^*(\wedge B^0) - \Delta_{\Sigma^0}(\wedge V^0),\nonumber
\end{eqnarray}
which is precisely the 2-covariant 2-curvature $\boldsymbol{\mathcal{G}}^0=\bar{\mathcal{G}}^0 + (\bar{\mathcal{G}}^0)^*$ given in Eq. \eqref{combo2c} for the 2-Drinfel'd double $\mathfrak{d}^0$, with the combined 2-connection given by
\begin{equation}
    ({\bf A}^0,\boldsymbol\Sigma^0) = (A^0+C^0,\Sigma^0+B^0).\nonumber
\end{equation}
In this sense, the suspension of the Drinfel'd double to make it a 2-Drinfel'd double can be seen equivalently as considering the bumping in dimensions of the forms $A|_V,b|_{\mathfrak{u}^*}\rightsquigarrow \Sigma^0,B^0$. 
\end{enumerate}
\end{example}

\subsection{BF theory and interacting 2-BF theory.}
Using the 2-covariant quantities $\boldsymbol{\mathcal{F}},\boldsymbol{\mathcal{G}}$ in Eq. \eqref{comboff}, \eqref{combo2c}, we can consider a 2-gauge theory based on the 2-Manin triple $\mathfrak{d} = \mathfrak{g}\bowtie\mathfrak{g}^*[1]$. First, given the bilinear form Eq. \eqref{2bilin} on $\mathfrak{d}$, we can write down the {\bf BF theory}
\begin{eqnarray}
    S_\textbf{BF}[{\bf A},\boldsymbol\Sigma] &=& \frac{1}{2}\int_M \langle\langle \boldsymbol\Sigma \wedge \boldsymbol{\mathcal{F}}\rangle\rangle = \frac{1}{2}\int_M \langle B\wedge \bar{\mathcal{F}}\rangle + \langle \bar{\mathcal{F}}^*\wedge \Sigma\rangle \nonumber \\
    &=& \frac{1}{2}\int_M \langle B\wedge \bar F\rangle + \langle\bar{F}^*\wedge \Sigma\rangle - \left[ \langle B\wedge t\Sigma\rangle + \langle  t^* B \wedge \Sigma\rangle\right]\label{monsbf}
\end{eqnarray}
on a 4-dimensional manifold $M$, where $\bar F$ and $\bar F^*$ are defined in Eq. \eqref{combo1curv}. 

Notice that the last two terms are in fact identical:
\begin{equation}
    \langle t^* B \wedge \Sigma\rangle = \langle B\wedge t \Sigma\rangle.\nonumber
\end{equation}
Moreover, by decomposing into components, each of the first two terms read explicitly
\begin{eqnarray}
    \langle B\wedge \bar F\rangle &=& \langle B\wedge F\rangle + \langle B\wedge \eta_C^*(\wedge A)\rangle,\nonumber\\
    \langle \bar F^*\wedge \Sigma\rangle &=& \langle dC + \frac{1}{2}[C\wedge C]_* + \chi_A^*(\wedge C)\wedge\Sigma\rangle,\nonumber
\end{eqnarray}
where $F = dA+\frac{1}{2}[A\wedge A]$ is the conventional 1-curvature on $\mathfrak{g}_0$. An integration by parts (neglecting the boundary term $d\langle C\wedge \Sigma\rangle$) yields
\begin{equation}
    \langle \bar{F}^*\wedge \Sigma\rangle = -\langle C\wedge \mathcal{G}\rangle + \frac{1}{2}\langle [C\wedge C]_*\wedge \Sigma\rangle\nonumber
\end{equation}
in terms of the conventional 2-curvature $\mathcal{G}=d\Sigma + A\wedge^\rhd \Sigma$ on $\mathfrak{g}_{-1}$. Thus we see that the  BF theory Eq. \eqref{monsbf} can be written as 
\begin{eqnarray}
    S_\textbf{BF}[{\bf A},\boldsymbol\Sigma] &=& \frac{1}{2}\int_M \langle B\wedge \mathcal{F}'\rangle - \langle C\wedge \mathcal{G}\rangle \nonumber \\
    &\qquad& + \frac{1}{2}\int_M \langle \frac{1}{2}[C\wedge C]_*\wedge \Sigma\rangle + \langle B\wedge \eta_C^*(\wedge A)\rangle ,\nonumber
\end{eqnarray}
which is a {\bf 2-BF theory with interaction terms}, where $\mathcal{F}' = F - 2t\Sigma$ is a kind of "doubled" fake-flatness.

\medskip 

The BF action is   invariant under  the 2-Drinfel'd double $\mathfrak{d}=\mathfrak{g}\bowtie\mathfrak{g}^*[1]$.  From Eq. \eqref{combo2gau} and from invariance of the bilinear form $\langle\langle\cdot,\cdot\rangle\rangle$ we have 
\begin{eqnarray}
    \langle \langle \boldsymbol\lambda~\mathrlap{\triangleright}{\rhd}~\boldsymbol{\Sigma}\wedge \boldsymbol{\mathcal{F}}\rangle\rangle =  \langle\langle \boldsymbol\Sigma\wedge \llbracket\boldsymbol\lambda,\boldsymbol{\mathcal{F}}\rrbracket\rangle\rangle,\qquad \langle\langle ({\bf A} \wedge^{\mathrlap{\triangleright}{\rhd}}{\bf L})\wedge \boldsymbol{\mathcal{F}}\rangle\rangle = -\langle\langle {\bf L}\wedge \llbracket{\bf A}\wedge \boldsymbol{\mathcal{F}}\rrbracket\rangle\rangle.\nonumber
\end{eqnarray}
The first equation indicates that the {\it integrand} $\langle\langle\boldsymbol\Sigma\wedge\boldsymbol{\mathcal{F}}\rangle\rangle$ is invariant under the 1-gauge. On the other hand, the second equation must be supplemented by a term $\langle\langle d{\bf L}\wedge \boldsymbol{\mathcal{F}}\rangle\rangle$, which coincides with $\langle {\bf L}\wedge d\boldsymbol{\mathcal{F}}\rangle\rangle$ modulo a boundary term. We thus have
\begin{equation}
    \langle \langle d_{\bf A}{\bf L}\wedge\boldsymbol{\mathcal{F}}\rangle\rangle = -\langle \langle{\bf L}\wedge d_{\bf A}\boldsymbol{\mathcal{F}}\rangle\rangle =0\nonumber
\end{equation}
from the Bianchi identity $d_{\bf A}{\bf F}=0$. This is essentially the same basic computation performed in Ref. \cite{Mikovic:2016xmo}.

\medskip

By varying the field $B$, we receive the equation of motion (EOM)
\begin{equation}
    \delta_BS_\textbf{BF}=0\implies \bar{\mathcal{F}}' = dA + \frac{1}{2}[A\wedge A] + \eta_C^*(\wedge A) -2t\Sigma =0\nonumber
 \end{equation}
for the doubled version of the fake-flatness Eq. \eqref{comboff} in the $\mathfrak{g}$-sector. To vary the field $C$, we recall that we may rewrite
\begin{equation}
    \langle B\wedge \eta_C^*(\wedge A)\rangle = -\langle( C\wedge^{\rhd^*}B)\wedge A\rangle = \langle C\wedge \tilde\Delta_B(\wedge A)\rangle,\label{deltaeq}
\end{equation}
then we obtain the flatness EOM
\begin{equation}
    \delta_CS_\textbf{BF} =0\implies \bar{\mathcal{G}}' = d\Sigma + A\wedge^{\rhd}\Sigma - \tilde\Delta_B(\wedge A) + \mathfrak{ad}_C^*(\wedge\Sigma) = 0 \nonumber
\end{equation}
for the $\mathfrak{g}$-sector of the combined 2-curvature Eq. \eqref{combo2c}, except the signs of the latter two terms are reversed.

On the other hand, it is clear that if we vary the fields $(A,\Sigma)$, then we obtain the dual counterparts of the above flatness EOMs
\begin{eqnarray}
    \delta_A S_\textbf{BF}=0&\Rightarrow& \bar{\mathcal{G}}'^* = dB + C\wedge^{\rhd^*}B - \Delta_\Sigma(\wedge C) + \operatorname{ad}_A^*(\wedge B) = 0\nonumber\\
    \delta_BS_\textbf{BF}=0&\Rightarrow& \bar{\mathcal{F}}'^* = dC + \frac{1}{2}[C\wedge C]_* + \chi_A^*(\wedge C) - 2\tilde tB = 0\nonumber
\end{eqnarray}
in the $\mathfrak{g}^*[1]$-sector. This means that the monster BF theory Eq. \eqref{monsbf} exhibits the 2-Drinfel'd double $\mathfrak{d}$ as symmetry, where the bilinear form Eq. \eqref{2bilin} is chosen.

\medskip

\begin{example}
As a specific example, let us consider the  skeletal crossed-module $(\R^6\xrightarrow{t=0}\so(3,1), \rhd,[\cdot,\cdot]_{\so(3,1)})$ which is the natural symmetry structure of  a $\so(3,1)$ $BF$ theory \cite{Baez:2010ya}. It can be interpreted as a 2-Drinfel'd double if one uses the Iwasawa decomposition of $\so(3,1)\approx \su(2)\bowtie\an_2$, and the decomposition $\R^6\approx \R^3\times \R^3$.  By re-ordering appropriately the components as was discussed in Section \ref{2dd}, the symmetries of the $\so(3,1)$ $BF$ theory  can also be seen as given in terms of a  pair of dual 2-gauge theories given in terms of the skeletal crossed-modules $\g=(\R^3 \xrightarrow{t=0}\su(2), \rhd, [\cdot,\cdot]_{\su(2)})$ and $\g^*[1]=(\R^3 \xrightarrow{t^*=0}\an_2, \rhd^*, [\cdot,\cdot]_{\an_2})$. Said otherwise, the $\so(3,1)$ $BF$ theory can roughly be viewed as the "Chern-Simons" formulation of a 2-gauge topological theory. Note that upon integration of the different crossed-modules, we would recover non-trivial Poisson 2-groups hence quantum 2-groups upon quantization, since the crossed-modules are equipped with non-trivial cocycles. 
\end{example}

Our construction emphasizes that there are hidden 2-gauge symmetries within the well-known 4D $BF$ theories. This point was already made in the specific case of the  2-Poincar\'e case \cite{Mikovic:2011si}. Our construction has important consequences for the discretization/quantization of the theory. Indeed, in 3D, quantizing either the Chern-Simons formulation or the $BF$ formulation means we construct our quantum states using different (quantum) groups. Of course, since it is the same theory, it means that there are relations among the different amplitudes: it is well-known that the Turaev-Viro amplitude can be related to the Chern-Simons amplitude \cite{ROBERTS, Freidel:2004nb}. If mathematically the two formulations are related/equivalent, physically some might be more relevant than others. For example, in the $BF$ formulation, one has access to the frame field hence the gravitational field. The loop quantum gravity formulation for 3D gravity relies on the $BF$ formulation \cite{Delcamp:2018sef}.  In the Chern-Simons formulation, the frame field is combined with the spin connection so that in this case it might more difficult to address physical questions specifically about the frame field. 

\medskip 

We would like to point out that something similar would happen in 4D. One can construct the quantum states of 4D BF theory as a gauge theory as it is now standard \cite{Baez:1999sr}. The new aspect we want to emphasize is that there  should also be an equivalent description of the quantum amplitudes in terms of a 2-gauge theory, presumably given in terms of the amplitude of the Yetter type \cite{Yetter:1993dh}. A first step in this direction was accomplished in the 2-Poincar\'e case, where it was shown that the theory could be discretized as a standard lattice gauge theory or instead as a 2-gauge theory \cite{Girelli:2021zmt}.  Once again, one formulation might be better suited than the other one according to the physical questions one intends to ask.

\subsection{Pairings on the 2-Drinfel'd double; the BF-BB theory.}  Similar to the 1-gauge case, the natural bilinear form Eq. \eqref{2bilin} was induced from the canonical evaluation pairing $(f+g,X+Y) = g(X) + f(Y)$, where $f\in\mathfrak{g}_{-1}^*,g\in\mathfrak{g}_0^*,Y,\in\mathfrak{g}_{-1},X\in\mathfrak{g}_0$. In this section, we explore the possibility of introducing an alternative pairing $\langle\langle\cdot,\cdot\rangle\rangle'$ on the direct sum $\mathfrak{g}\oplus\mathfrak{g}^*[1]$ of the crossed-modules.

\medskip

We proceed in direct analogy with the 1-algebra case. Given a bialgebra crossed-module $(\mathfrak{g};\delta_{-1},\delta_0)$, we can endow a 2-Manin triple structure on the direct sum $\mathfrak{d}\cong \mathfrak{g}\oplus\mathfrak{g}_{-1}^*\cong(\mathfrak{g}_{-1}\oplus\mathfrak{g}_0^*)\oplus(\mathfrak{g}_0\oplus\mathfrak{g}_{-1}^*)$ by equipping it with a non-degenerate symmetric invariant bilinear form $\langle\langle\cdot,\cdot\rangle\rangle$. In contrast to the 1-algebra case, it can be seen that the natural evaluation pairing Eq. \eqref{2bilin} was merely one of {\it four} possible pairing components on $\mathfrak{d}$.

\medskip

Specifically, given any $(Y+g)+(X+f)\in (\mathfrak{g}_{-1}\oplus\mathfrak{g}_0^*)\oplus(\mathfrak{g}_0\oplus\mathfrak{g}_{-1}^*)$, we have the following:
\begin{enumerate}
    \item {\bf Grading-inhomogeneous}, referring to pairings across sectors of distinct degrees. This case includes the natural blinear pairing Eq. \eqref{2bilin}, 
\begin{equation}
    \langle\langle  (Y+g)+(X+f) ,  (Y'+g')+(X'+f') \rangle\rangle = g(X') + g'(X) + f(Y') + f'(Y) 
\end{equation}
as well as the following
    \begin{equation}
     \langle \langle (Y+g)+(X+f),(Y'+g')+(X'+f')\rangle\rangle' = (\langle Y,X'\rangle + \langle f,g'\rangle) + (\langle X,Y'\rangle + \langle g,f'\rangle),\label{2bilin2}
    \end{equation}
    where $\langle\cdot,\cdot\rangle$ denotes an "off-diagonal" invariant bilinear form on the crossed-module $\mathfrak{g}\cong \mathfrak{g}_{-1}\oplus\mathfrak{g}_0$ and $\mathfrak{g}_0^*\oplus\mathfrak{g}_{-1}^*$. Non-degeneracy then requires $\mathfrak{g}_0$ and $\mathfrak{g}_{-1}$ to have the same dimension.
    \item {\bf Grading-homogeneous}, referring to pairings across sectors of the same degree, such as the following
    \begin{equation}
    \langle \langle (Y+g)+(X+f),(Y'+g')+(X'+f')\rangle\rangle_\text{hom} = (\langle Y,Y'\rangle_{-1} + \langle g,g'\rangle_0) + (\langle X,X'\rangle_0 + \langle f,f'\rangle_{-1})\nonumber
    \end{equation}
    induced by a "diagonal" invariant bilinear form $\langle\cdot,\cdot\rangle_{-1.0}$ on the crossed-module $\mathfrak{g}\cong \mathfrak{g}_{-1}\oplus\mathfrak{g}_0$. The other one is
    \begin{equation}
        \langle \langle (Y+g)+(X+f),(Y'+g')+(X'+f')\rangle\rangle'_\text{hom} = (\langle Y,g'\rangle_1 + \langle X,f'\rangle_0) + (\langle f,X'\rangle_0 + \langle g,Y'\rangle_1),\nonumber
    \end{equation}
    which is the degree-homogeneous version of the natural bilinear form Eq. \eqref{2bilin}.
\end{enumerate}

As usual, these alternative pairings satisfy certain invariance properties similar to the natural pairing Eq. \eqref{2bilin}. For the alternative pairing $\langle\langle\cdot,\cdot\rangle\rangle'$ given in Eq. \eqref{2bilin2}, for instance, we must have
\begin{eqnarray}
         \langle X\rhd Y,X'\rangle + \langle Y,\operatorname{ad}_XX'\rangle=0,&\quad& \langle f\rhd^*g,f'\rangle + \langle g,\mathfrak{ad}_f f'\rangle=0,\nonumber \\
        \langle \mathfrak{ad}_f^* Y,X\rangle + \langle Y, \eta_f^* X\rangle=0,&\quad& \langle \operatorname{ad}_X^*g,f\rangle + \langle g,\chi_X^* f'\rangle=0\nonumber\\
        \langle \tilde\Delta_g(X),X'\rangle + \langle X,\tilde\Delta_g(X')\rangle = 0, &\quad& \langle \Delta_Y(f),f'\rangle + \langle f, \Delta_Y(f')\rangle=0,\label{invar2manin}
\end{eqnarray}
where $\mathfrak{ad}_g,\operatorname{ad}_Y$ are induced by the second Pfeiffer identity Eq. \eqref{pfeif}.

\medskip 

The new ingredient here is, in addition to invariance, the following {\bf duality condition}\footnote{The natural pairing Eq. \eqref{2bilin} satisfies this by definition, as $(g,tY) = g(tY) = t^*g(Y) = (t^*g,Y)$.} (note $T^*=T$ is self-dual)
\begin{equation}
    \langle\langle T~\cdot~,~\cdot~\rangle\rangle = \langle\langle~ \cdot~,T~\cdot~\rangle\rangle\label{dualbigt}
\end{equation}
against the crossed-module map $T$ of the 2-Manin triple $\mathfrak{d}=\mathfrak{g}\bowtie\mathfrak{g}^*[1]$. For the alternative pairings, this puts non-trivial conditions on the crossed-module map $t$. Take the alternative pairing $\langle\langle\cdot,\cdot\rangle\rangle'$ given in Eq. \eqref{2bilin2}, say, then we must have
\begin{equation}
    \langle tY,Y'\rangle = \langle Y,tY'\rangle,\qquad \langle t^*g,g'\rangle = \langle g,t^*g'\rangle.\label{dualtmap}
\end{equation}
In other words, $t$ must be symmetric in the basis diagonalizing the bilinear form $\langle\cdot,\cdot\rangle$.

\medskip

Let consider again a BF theory type as in Eq. \eqref{monsbf}.
It is clear that the grading-homogeneous pairing types give a trivial theory: the 2-form $\boldsymbol{\Sigma}$ has degree-(-1) while the 2-form $\boldsymbol{\mathcal{F}}$ has degree-0, hence for this \textit{specific} choice of action, non-trivial pairings must be grading-inhomogeneous. 

Focusing on the alternative pairing Eq. \eqref{2bilin2}, the  BF theory then reads
\begin{eqnarray}
    S_\textbf{BF}'[{\bf A},\boldsymbol\Sigma] &=& \frac{1}{2}\int_M \langle\langle\boldsymbol\Sigma\wedge\boldsymbol{\mathcal{F}}\rangle\rangle' = \frac{1}{2}\int_M \langle B\wedge \bar{\mathcal{F}}^*\rangle + \langle \Sigma\wedge \bar{\mathcal{F}}\rangle\nonumber\\
    &=& \frac{1}{2}\int_M \langle B\wedge \bar F^*\rangle + \langle\Sigma \wedge \bar F\rangle - \left[ \langle B\wedge t^*B\rangle + \langle \Sigma \wedge t\Sigma\rangle\right]\nonumber\\
    &=& \frac{1}{2}\left(S_\text{BFBB}[A,\Sigma] + S_\text{BFBB}[C,B]\right) \nonumber \\
    &\qquad& + \frac{1}{2}\int_M \langle B\wedge \chi_A^*(\wedge C)\rangle + \langle \Sigma \wedge \eta_C^*(\wedge A)\rangle,\nonumber
\end{eqnarray}
which is a sum of the {\bf BF-BB theories}
\begin{equation}
    S_\text{BFBB}[A,\Sigma] = \int_M \langle \Sigma\wedge F\rangle - \langle \Sigma\wedge t\Sigma\rangle ,\qquad S_\text{BFBB}[C,B]=\int_M \langle B\wedge F^*\rangle - \langle B\wedge t^* B\rangle \nonumber 
\end{equation}
in each individual crossed-module sector $\mathfrak{g},\mathfrak{g}^*[1]$ respectively, plus the interaction terms
\begin{equation}
    \langle B\wedge \chi_A^*(\wedge C)\rangle=-\langle \operatorname{ad}_A^*(\wedge B)\wedge C\rangle,\qquad \langle \Sigma \wedge \eta_C^*(\wedge A)\rangle=-\langle \mathfrak{ad}_C^*(\wedge \Sigma)\wedge A\rangle.
\end{equation}
Note the invariance of $\langle\cdot,\cdot\rangle$ dualizes $\chi^*,\eta^*$ to the coadjoint $\operatorname{ad}^*,\mathfrak{ad}^*$, as opposed to the $\Delta,\tilde\Delta$ maps as in Eq. \eqref{deltaeq}. The BF-BB theory $S_\text{BFBB}$ is of central interest, in particular in the trivial case $t=\operatorname{id}$, as the 4D gravity action can be written as $S_\text{BFBB}$ plus a geometric {\it simplicity constraint} \cite{Asante:2019lki}. We shall explore this notion more in a following paper.

Similar to the case of the 1-gauge theory, we may consider the monster BF theory $S_\textbf{BF}''$ given by a linear combination of the alternative pairing with the canonical one
\begin{equation}
    \langle\langle\cdot,\cdot\rangle\rangle'' = \alpha\langle\langle \cdot,\cdot\rangle\rangle + \beta\langle\langle\cdot,\cdot\rangle\rangle',\nonumber
\end{equation}
for which the real parameters $\alpha,\beta$ satisfy $\alpha^2+ \beta^2 \neq 0$. We would then have
\begin{eqnarray}
    S_\textbf{BF}''[{\bf A},\boldsymbol\Sigma] &=& \alpha S_\textbf{BF}[{\bf A},\boldsymbol\Sigma]+ \beta S_\textbf{BF}'[{\bf A},\boldsymbol\Sigma]\nonumber\\
    &=& \frac{1}{2}\int_M \langle (\alpha B+\beta \Sigma)\wedge \bar F\rangle + \langle (\alpha\Sigma + \beta B)\wedge \bar F^*\rangle \nonumber \\
    &\qquad& - \frac{1}{2}\int_M \langle B\wedge (\alpha t\Sigma+ \beta t^*B)\rangle + \langle (\alpha t^*B+\beta t\Sigma)\wedge\Sigma\rangle.\nonumber 
\end{eqnarray}

\section{The 2-graded classical \texorpdfstring{$r$}{Lg}-matrix and the quadratic 2-Casimirs}\label{2grclrmat}

\paragraph{Classical $r$-matrix and the Yang-Baxter equation.}  
Recall that a Lie bialgebra is characterized by a Lie algebra $\mathfrak{g}$ together with a cocycle $\psi:\mathfrak{g}\rightarrow \mathfrak{g}^{2\wedge}$ satisfying Eq. \eqref{bialgcoh} \cite{Semenov1992,Chari:1994pz}, which defines the dual Lie algebra $\mathfrak{g}^*$. Moreover, this structure allows us to form the Manin triple $\mathfrak{d}=\mathfrak{g}\bowtie\mathfrak{g}^*$, subject to the compatibility condition Eq. \eqref{1manin}.

If this cocycle is exact $\psi = dr$ for some $r\in \mathfrak{g}^{2\wedge}$, then a particular algebra structure on the Manin triple is achieved. Such elements $r\in\mathfrak{g}^{2\wedge}$ is called a {\bf triangular classical $r$-matrix} \cite{Meusburger:2021cxe}. The Jacobi identity then follows from the {\bf classical Yang-Baxter equation} for $r$:
\begin{equation}
    [X^{3\otimes},[r_{12},r_{13}]+[r_{13},r_{23}]+[r_{12},r_{23}]]=0\nonumber\label{cybe}
\end{equation}
for every $X^{3\otimes} = X\otimes 1 \otimes 1 + 1\otimes X\otimes 1+ 1\otimes 1\otimes X\in \mathfrak{g}^{3\otimes}$. Here, the subscripts indicate the tensor components of $\mathfrak{g}^{3\otimes}$ upon which $r$ acts.


More generally, we call $r\in \mathfrak{g}^{2\otimes}$ a {\bf quasitriangular $r$-matrix} if it satisfies 
\begin{equation}
    [X^{2\otimes}, r + \sigma(r)]=0,\qquad X^{2\otimes}=X\otimes 1+1\otimes X \nonumber
\end{equation}
for each $X\in\mathfrak{g}$, in addition to the classical Yang-Baxter equation Eq. \eqref{cybe}, where $\sigma$ is a permutation of the tensor factors in $\mathfrak{g}^{2\otimes}$. Now if we decompose $r=r^\wedge+r^\odot$ into its skew-/symmetric components ($\wedge,\odot$ denote respectively the anti/symmetrized tensors), $r^\odot$ must be a $D(\operatorname{ad})$-invariant\footnote{Recall $[X^{2\otimes},\cdot]= \operatorname{ad}_X\otimes 1 + 1\otimes\operatorname{ad}_X = D(\operatorname{ad}_X)$.}, ie. a quadratic Casimir \cite{Beisert:2017xqx}. 

By characterizing the quadratic Casimirs of $\mathfrak{g}$, we may then solve for the skew-symmetric component $r^\wedge$ with the {\bf modified classical Yang-Baxter equation} \cite{Chari:1994pz,Beisert:2017xqx}
\begin{equation}
    \llbracket r^\wedge,r^\wedge\rrbracket = \omega = -\llbracket r^\odot,r^\odot\rrbracket,\label{modcybe}
\end{equation}
where we have introduced the shorthand (the Schouten bracket) $\llbracket r,r\rrbracket=[r_{12},r_{13}]+[r_{13},r_{23}]+[r_{12},r_{23}]$. This skew-symmetric part $r^\wedge$ then generates the 1-cocycle $\psi$ on $\mathfrak{g}$. 

In the following, we shall study the 2-graded version of the above structure by following the notions laid out in Ref. \cite{Bai_2013}, in which a definition of the "2-graded classical $r$-matrix" $R$ for a Lie bialgebra crossed-module/Lie 2-bialgebra was given. We shall pay particular attention to its symmetric component $R^\odot$, and present new results characterizing these so-called "quadratic 2-Casimirs". Some examples and applications are studied.

\subsection{The 2-graded classical \texorpdfstring{$r$}{Lg}-matrix}
Similar to the Lie 1-algebra case, a (strict) Lie 2-bialgebra can be classified in terms of a Lie algebra 2-cocycle $(\delta_{-1},\delta_0)$ satisfying Eqs. \eqref{dualcoh1}, \eqref{dualcoh2} \cite{Bai_2013,Chen:2013}, which induces a dual Lie 2-algebra $\mathfrak{g}^*[1]$. Moreover, the natural coadjoint representations Eq. \eqref{rep} gives rise to the 2-Drinfel'd double $\mathfrak{d}=\mathfrak{g}\bowtie\mathfrak{g}^*[1]$.

Similar to the 1-algebra case, we begin by considering a 2-cocycle $(\delta_{-1},\delta_0)$ that is a "2-coboundary". We in particular focus on the form of the 2-coboundary generated by certain elements $r_0\in\mathfrak{g}_0\wedge\mathfrak{g}_{-1}$ and $r_{-1}\in\mathfrak{g}_{-1}^{2\wedge}$. These elements $r_{-1},r_0$ form a {\bf triangular 2-graded classical $r$-matrix} \cite{Bai_2013} $$R=r_0-D(t)_{-1}r_{-1}=r_0- (t\otimes 1 + 1\otimes t)r_{-1}\in \mathfrak{g}_0\wedge\mathfrak{g}_{-1},$$ whence the 2-coboundary they form is given by
\begin{equation}
\delta_0(X) = [X\otimes 1+1\otimes X,R],\qquad \delta_{-1}(Y) = [Y\otimes 1+1\otimes Y,R].\label{2cocycRmat}
\end{equation}
Here we have used the graded Lie bracket $[\cdot,\cdot]=l_2$, which includes $[\cdot,\cdot]|_{\mathfrak{g}_0}$ as well as the crossed-module action $\rhd$ by Eq. \eqref{translation}.



More generally, suppose we are given $r_0\in (\mathfrak{g}_0\otimes\mathfrak{g}_{-1})\oplus(\mathfrak{g}_{-1}\otimes\mathfrak{g}_0),r_{-1}\in\mathfrak{g}_{-1}^{2\otimes}$ (namely not necessarily skew-symmetric elements). It was proven that \cite{Bai_2013}
\begin{theorem}\label{2cybe}
The 2-cochain $(\delta_{-1},\delta_0)$ Eq. \eqref{2cocycRmatcomp} makes $(\mathfrak{g};\delta_{-1},\delta_0)$ into a Lie 2-bialgebra iff for all $W\in \mathfrak{g}_0\oplus\mathfrak{g}_{-1}$,
\begin{itemize}
    \item $[W^{2\otimes},R+\sigma(R)]=0$ where $\sigma$ is an exchange of tensor factors, and
    \item the {\bf 2-graded classical Yang-Baxter equations} \cite{Bai_2013} are satisfied:
    \begin{enumerate}
        \item $D(t)_0 r_0=0$,
    \item  $[W^{3\otimes},[R_{12},R_{13}]+[R_{13},R_{23}]+[R_{12},R_{23}]]=0$
    \end{enumerate}
\end{itemize}
where $$W^{3\otimes}= W\otimes 1\otimes 1 + 1\otimes W\otimes 1 + 1\otimes 1\otimes W.$$
\end{theorem}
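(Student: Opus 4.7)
The plan is to verify directly that, under the coboundary ansatz Eq. \eqref{2cocycRmat}, the coherence conditions Eqs. \eqref{dualcoh1}, \eqref{dualcoh2} defining a Lie bialgebra crossed-module are equivalent to the three enumerated conditions. I would decompose $R = R^\wedge + R^\odot$ into its skew- and symmetric parts and treat each piece separately, exploiting the graded Leibniz property of the bracket $[W^{2\otimes},\cdot\,]$.

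First, I would address the well-definedness of the target spaces. The image of $\delta_{-1}$ must lie in $\mathfrak{g}_{-1}\wedge\mathfrak{g}_{-1}$, while $\delta_0$ must take values in $(\mathfrak{g}_{-1}\otimes\mathfrak{g}_0)\oplus(\mathfrak{g}_0\otimes\mathfrak{g}_{-1})$ with the exchange behaviour dual to the Pfeiffer relation Eq. \eqref{pfeif}. Because $[W^{2\otimes},\cdot\,]$ is a derivation, the invariance $[W^{2\otimes},R+\sigma(R)]=0$ --- precisely the first bulleted hypothesis --- guarantees that the symmetric component $R^\odot$ contributes nothing to the coboundary, so that the induced $\delta_{-1},\delta_0$ have the correct (anti)symmetry and only $R^\wedge$ effectively generates the dual structures $[\cdot,\cdot]_*$ and $\rhd^*$.

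Next, I would substitute Eq. \eqref{2cocycRmat} into the coherence conditions Eq. \eqref{dualcoh1}. Using the bialgebra-crossed-module identification $\tilde t = t^*$ recorded in Remark \ref{2bialgsym}, the first relation $\delta_0 \tilde t^* = (\tilde t^* \otimes 1 + 1 \otimes \tilde t^*)\delta_{-1}$ follows automatically from the decomposition $R = r_0 - D(t)_{-1} r_{-1}$ together with the strict Lie 2-algebra axioms. The remaining relation $(\tilde t^* \otimes 1 - 1 \otimes \tilde t^*)\delta_0 = 0$ is genuinely non-trivial: after a short computation using Eq. \eqref{pfeif}, the parts originating from $D(t)_{-1} r_{-1}$ collapse by the second Pfeiffer identity, and one is left with exactly the statement $D(t)_0 r_0 = 0$ of condition (1) in the theorem.

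Finally, for the 2-Jacobi conditions Eq. \eqref{dualcoh2}, I would invoke the graded analogue of the classical fact that, for a coboundary $\delta(W)=[W^{2\otimes},R]$, the cocycle condition reduces to $D(\mathrm{ad})$-invariance of the Schouten-type object $\llbracket R, R\rrbracket = [R_{12},R_{13}]+[R_{13},R_{23}]+[R_{12},R_{23}]$. Computing $\delta_0([X,X'])$ and $\delta_{-1}(X\rhd Y)$ via the derivation property and comparing with the right-hand sides of Eq. \eqref{dualcoh2} leaves an obstruction of the form $[W^{3\otimes},\llbracket R, R\rrbracket]$, whose vanishing is condition (2). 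The main obstacle will be the bookkeeping of the graded tensor legs in $\mathfrak{g}^{3\otimes}$: because $R$ has mixed grading, $\llbracket R,R\rrbracket$ splits into multiple components in different graded pieces, and one must verify that the separate 2-Jacobi equations for $\delta_0$ and $\delta_{-1}$ --- including the coupling terms mediated by the Pfeiffer identity --- arrange themselves consistently into the single unified 2-graded classical Yang--Baxter equation, with $D(t)_0 r_0 = 0$ providing the crucial glue between the $r_0$- and $r_{-1}$-sectors.
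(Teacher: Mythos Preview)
The paper does not supply its own proof of this theorem: it is stated as a result already established in \cite{Bai_2013}, with the attribution ``It was proven that \cite{Bai_2013}'' placed immediately before the theorem statement. Your proposal sketches essentially the standard argument one finds in that reference --- the invariance of $R^\odot$ ensuring the correct (anti)symmetry of the induced cochains, the extraction of $D(t)_0 r_0 = 0$ from the second relation in Eq.~\eqref{dualcoh1}, and the reduction of the 2-cocycle conditions Eq.~\eqref{dualcoh2} to $D(\operatorname{ad})$-invariance of the graded Schouten square $\llbracket R,R\rrbracket$ --- so your outline is both correct and in line with what the paper defers to.
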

\noindent We call solutions $R\in (\mathfrak{g}_0\otimes\mathfrak{g}_{-1})\oplus(\mathfrak{g}_{-1}\otimes\mathfrak{g}_0)$ to the above criteria a {\bf quasitriangular 2-graded classical $r$-matrix}. 

In other words, the 2-graded classical Yang-Baxter equation implies the 2-cocycle condition \cite{Bai_2013} for 2-cochains $(\delta_{-1},\delta_0)$ defined in Eq. \eqref{2cocycRmat}. If we write out the components
\begin{equation}
    r_0 = \sum a\otimes b+ \bar a\otimes \bar b,\qquad r_{-1} = \sum c \otimes d \nonumber 
\end{equation}
for some $a,\bar b\in\mathfrak{g}_0$ and $\bar a, b,c,d\in\mathfrak{g}_{-1}$, then we have
\begin{equation}
    R = \sum \underbrace{(a\otimes b - tc\otimes d)}_{\in \mathfrak{g}_0\otimes\mathfrak{g}_{-1}} + \underbrace{(\bar a\otimes \bar b -  c\otimes td)}_{\in\mathfrak{g}_{-1}\otimes\mathfrak{g}_0}\equiv \sum \rho + \bar \rho.\label{2Rmat}
\end{equation}
By decomposing into skew-symmetric and symmetric parts $R=R^\wedge + R^\odot$, we have $\bar\rho =-\sigma\rho$ in $R^\wedge$ while $\bar\rho= \sigma\rho$ in $R^\odot$ in terms of the components defined in Eq. \eqref{2Rmat}, where $\sigma$ permutes the tensor factors.  In other words, we have
\begin{eqnarray}
    R^\wedge &=& \sum \rho -\sigma\rho =\sum a\wedge b - tc\wedge d - c\wedge td=\sum a\wedge b -D(t)_{-1}(c\wedge d),\nonumber\\
    R^\odot &=& \sum \rho+\sigma\rho= \sum a\odot b - tc\odot d - c\odot d=\sum a\odot b - D(t)_{-1}(c\odot d),\nonumber
\end{eqnarray}
If we write, using the graded Schouten bracket $\llbracket\cdot,\cdot\rrbracket$ \cite{Chen:2012gz,Bai_2013},
\begin{equation}
    \Omega = -\llbracket R^\odot,R^\odot\rrbracket =- [R_{12}^\odot,R_{13}^\odot]+[R_{13}^\odot,R_{23}^\odot]+[R_{12}^\odot,R_{23}^\odot],\nonumber
\end{equation}
then the skew-symmetric part $R^\wedge$ satisfies the {\bf modified 2-graded classical Yang-Baxter equation}
\begin{equation}
    \llbracket R^\wedge,R^\wedge \rrbracket = \Omega.\label{mod2ybe}
\end{equation}
This is an equivalent way of writing Eq. 2. in {\bf Theorem \ref{2cybe}}.

\medskip

As in the 1-algebra case, the symmetric component $R^\odot\in\mathfrak{g}_0\odot\mathfrak{g}_{-1}$ of $R$ governs the form of Eq. \eqref{mod2ybe}, while the skew-symmetric component $R^\wedge\in \mathfrak{g}_0\wedge\mathfrak{g}_{-1}$ contributes to the 2-coboundary Eq. \eqref{2cocycRmat}. Recalling $D(t)_{-1} = t\otimes 1+1\otimes t$, the 2-coboundary Eq. \eqref{2cocycRmat} is given explicitly by
\begin{eqnarray}
    \delta_0(X) &=& \sum [X,a]\wedge b + a\wedge (X\rhd b)\nonumber\\
    &\qquad& - ~ \sum [X,tc]\wedge d + tc\wedge (X\rhd d)  + (c\leftrightarrow d),\nonumber\\
    \delta_{-1}(Y)&=& \sum  c\wedge (td\rhd Y)+(c\leftrightarrow d) - \sum (a\rhd Y)\wedge b\label{2cocycRmatcomp},
\end{eqnarray}
where $c\leftrightarrow d$ indicates a swap of the elements $c,d$ from the previous term.

\medskip

One particular solution for the decomposition $R=R^\wedge+R^\odot$ is if the two quantities $r_0,r_{-1}$ can themselves be decomposed into skew-symmetric and symmetric parts:
\begin{eqnarray}
    r_0=r_0^\wedge+r_0^\odot,&\quad& r_0^\wedge \in \mathfrak{g}_0\wedge\mathfrak{g}_{-1},\qquad r_{-1}^\wedge\in \mathfrak{g}_{-1}^{2\wedge},\nonumber\\
    r_{-1}=r_{-1}^\wedge+r_{-1}^\odot,&\quad& r_0^{\odot} \in \mathfrak{g}_0\odot\mathfrak{g}_{-1},\qquad r_{-1}^\odot \in \mathfrak{g}_{-1}^{2\odot}.\nonumber
\end{eqnarray}
The 2-graded $r$-matrix then reads
\begin{eqnarray}
    R^\wedge &=& r^\wedge_0 - D(t)_{-1}r_{-1}^\wedge=\sum a\wedge b - D(t)_{-1}(c\wedge d),\nonumber\\ 
    R^\odot &=& r^\odot_{0} - D(t)_{-1}r_{-1}^\odot= \sum a\odot b - D(t)_{-1}(c\odot d).\label{2Rmatdecomp}
\end{eqnarray}
We stress that this may {\it not} be the most general form of the decomposition $R=R^\wedge+R^\odot$!

\medskip

Due to the first condition in {\bf Theorem \ref{2cybe}}, we see that the symmetric contribution $R^\odot$ must be $D(\operatorname{ad})$-invariant, where $D(\operatorname{ad}) = \operatorname{ad}\otimes 1+1\otimes\operatorname{ad}$ is defined in terms of the adjoint representation $\operatorname{ad}$ in Eq. \eqref{strictad}. In the 1-algebra case, this leads to $r^\odot$ being a quadratic Casimir, and so we shall correspondingly call $R^\odot$ a {\bf quadratic 2-Casimir} of the crossed-module $\mathfrak{g}$.

\subsection{Quadratic 2-Casimirs}\label{2cas}
Recall that quadratic 2-Casimirs, ie. the symmetric component $R^\odot$ of $R$, controls the form of the modified 2-graded classical Yang-Baxter equation Eq. \eqref{mod2ybe}. It is therefore very important towards understanding 2-graded classical $r$-matrices, and we shall dedicate the remainder of this section to characterizing it. 

\paragraph{Characterization of the Quadratic 2-Casimirs.} 
First, we note that $D(\operatorname{ad})=\operatorname{ad}\otimes 1+1\otimes\operatorname{ad}$ is the derivation on the tensor product $\mathfrak{g}^{2\otimes}$ associated to the strict adjoint representation $\operatorname{ad}$ Eq. \eqref{strictad}. The tensor product $\mathfrak{g}^{2\otimes}$ has the structure of a three-term graded complex, as we can extend $t$ to $D(t)_{-1,0}$ on the tensor product $\mathfrak{g}^{2\otimes}$. 
\begin{equation}
    \mathfrak{g}_{-1}^{2\otimes}\xrightarrow{D(t)_{-1}}(\mathfrak{g}_{-1}\otimes\mathfrak{g}_0)\oplus (\mathfrak{g}_0\otimes\mathfrak{g}_{-1})\xrightarrow{D(t)_0}\mathfrak{g}_0^{2\otimes},
\end{equation}
where
\begin{equation}
    D(t)_{-1}(Y\otimes Y') = tY\otimes Y' + Y\otimes tY',\qquad D(t)_0(Y\otimes X +X'\otimes Y') = tY\otimes X - X'\otimes tY' \nonumber
\end{equation}
for each $Y,Y'\in\mathfrak{g}_{-1},X,X'\in\mathfrak{g}_0$.

Let us thus mainly focus on the meaning of "$D(\operatorname{ad})$-invariance" in the definition of a quadratic 2-Casimir $R^\odot$ by examining in detail the action of $D(\operatorname{ad})$ on $(\mathfrak{g}_0\otimes\mathfrak{g}_{-1})\oplus(\mathfrak{g}_{-1}\otimes\mathfrak{g}_0)$. Let $Y\otimes X + X'\otimes Y'$ denote an arbitrary element in this sector, and let $X''+Y''\in \mathfrak{g}$, then
\begin{eqnarray}
    D(\operatorname{ad})_{X''+Y''} (Y\otimes X) &=&\underbrace{(X''\rhd Y) \otimes X + Y\otimes [X'',X]}_{\mathfrak{g}_{-1}\otimes\mathfrak{g}_0} + \underbrace{Y\otimes (X\rhd Y'')}_{\mathfrak{g}_{-1}^{2\otimes}},\nonumber\\
    D(\operatorname{ad})_{X''+Y''} (X'\otimes Y') &=& \underbrace{[X'',X'] \otimes Y' + X'\otimes (X''\rhd Y')}_{\mathfrak{g}_0\otimes\mathfrak{g}_{-1}}+\underbrace{(X'\rhd Y'')\otimes Y'}_{\mathfrak{g}_{-1}^{2\otimes}}.\nonumber
\end{eqnarray}
Now if we take the symmetric tensor $Y\odot X = Y\otimes X +X\otimes Y$ and sum the above contributions, then the $D(\operatorname{ad})$-invariance condition $D(\operatorname{ad})_{X''+Y''}(Y\odot X) = 0$ gives rise to the following equations
\begin{eqnarray}
   [X'',X] \odot Y + X\odot (X''\rhd Y)=0, \qquad (X\rhd Y'')\odot Y =0\nonumber
\end{eqnarray}
for all $X''+Y''\in\mathfrak{g}$. The space of solutions is the subspace 
\begin{equation}
    \Theta_\rhd = \{X\odot Y\in\mathfrak{g}_0\odot\mathfrak{g}_{-1}\mid \operatorname{ad}X\odot Y + X\odot \chi Y =0,~\chi_X =0\},\nonumber
\end{equation}
where we recall $\chi=\rhd$ is the crossed-module action. Now the condition $D(t)_0R = D(t)_0r_0=0$ in {\bf Theorem \ref{2cybe}} constrains $R^\odot$ to lie in $\operatorname{ker}D(t)_0$, whence we assemble the elements
\begin{equation}
    a\odot b \in \operatorname{2Cas}_\mathfrak{g}[0] \equiv \Theta_\rhd\cap \operatorname{ker}D(t)_0\nonumber
\end{equation}
as the quadratic 2-Casimirs of $\mathfrak{g}$.

On the other hand, for $Y\odot Y'\in\mathfrak{g}_{-1}^{2\odot}$ we have
\begin{eqnarray}
    D(\operatorname{ad})_{X''+Y''}(D(t)_{-1}(Y\odot Y'))&=& (X''\rhd Y)\odot tY' + Y\odot [X'',tY'] + Y\odot (tY'\rhd Y'') \nonumber \\
    &\qquad& + (X'' \rhd Y')\odot tY
    +Y'\odot [X'',tY]+ Y'\odot (tY\rhd Y'') \nonumber \\
    &=& (X''\rhd Y)\odot tY' + Y\odot t(X''\rhd Y') + Y\odot [Y',Y''] \nonumber \\
    &\qquad& + (X'' \rhd Y')\odot tY + Y'\odot t(X''\rhd Y)+ Y'\odot [Y, Y''] \nonumber\\
    &=& D(t)_{-1}((X''\rhd Y) \odot Y' + (X''\rhd Y')\odot Y)  \nonumber\\
    &\qquad& - ([Y'',Y']\odot Y + Y'\odot [Y'',Y]),\nonumber
\end{eqnarray}
where we have used the Pfeiffer identities Eq. \eqref{pfeif}. Note $D(t)_{-1}=t\otimes 1+1\otimes t$ on $\mathfrak{g}_{-1}^{2\odot}$, we define the subspaces
\begin{eqnarray}
    \Gamma_t &=& \{Y\odot Y'\in\mathfrak{g}_{-1}^{2\odot}\mid \chi Y\odot Y' + Y\odot \chi Y' \in\operatorname{ker}D(t)_{-1}\},\nonumber \\
    \operatorname{Cas}_{\mathfrak{g}_{-1}} &=&\{Y\odot Y'\in\mathfrak{g}_{-1}^{2\odot}\mid \operatorname{ad}Y\odot Y' + Y\odot \operatorname{ad}Y'=0\},\nonumber
\end{eqnarray}
we see that the space of solutions is given by the intersection $$c\odot d\in \operatorname{2Cas}_\mathfrak{g}[-1] \equiv \Gamma_t \cap \operatorname{Cas}_{\mathfrak{g}_{-1}}.$$ Recall the adjoint action $\operatorname{ad}$ on $\mathfrak{g}_{-1}$ is defined via the second Pfeiffer identity. If each term in $D(\operatorname{ad})_{X''+Y''}(R^\odot)=0$ vanishes, then we obtain the following characterization of quadratic 2-Casimirs:
\begin{equation}
    R^\odot = \sum a\odot b + D(t)_{-1}(c\odot d)\in \operatorname{2Cas}_\mathfrak{g}[0] \oplus D(t)_{-1}(\operatorname{2Cas}_\mathfrak{g}[-1]) \equiv \operatorname{2Cas}_\mathfrak{g},\nonumber 
\end{equation}
{\it provided} the decomposition Eq. \eqref{2Rmatdecomp} holds.



\subsection{Explicit constructions of the quadratic 2-Casimirs for some crossed-modules}
Recall from {\bf Example \ref{exp1}} that we can obtain a Lie 2-algebra from the data of a Lie algebra in two different ways: the canonical embedding and the suspension embedding. Here, let us compare the characterization of the quadratic Casimirs for the 1-algebra with that of the quadratic 2-Casimirs. We will also discuss the quadratic 2-Casimirs in the case of the 2-Dinfel'd double.

\subsubsection{2-Casimirs of the trivial canonical 2-algebra.} Consider the trivial 2-algebra $\operatorname{id}_\mathfrak{l}$ associated to a Lie algebra $\mathfrak{l}$. We have $(\operatorname{id}_\mathfrak{l})_{-1}=(\operatorname{id}_\mathfrak{l})_0=\mathfrak{l}$, and the crossed-module map $t=\operatorname{id}$ is the identity, hence
\begin{eqnarray}
    D(t)_{-1} = 1,&\qquad& \operatorname{ker}D(t)_{-1}=0,\nonumber\\
    D(t)_0 = 0, &\qquad& \operatorname{ker}D(t)_0=\mathfrak{l}\odot\mathfrak{l}.\nonumber
\end{eqnarray}
Thus, we have $\operatorname{ad}=\chi$ and $\Gamma_{t=1}  = \operatorname{Cas}_{(\operatorname{id}_{\mathfrak{l}})_{-1}}$, whence $\operatorname{2Cas}_{\operatorname{id}_\mathfrak{l}}[-1]=\Gamma_{t=1}= \operatorname{Cas}_\mathfrak{l}$.

On the other hand, the condition $\chi_{a}=\operatorname{ad}_{a} =0$ implies that $a\in \operatorname{ker}\operatorname{ad} =Z(\mathfrak{l})$ lies in the centre of $\mathfrak{l}$.  Let us take $\mathfrak{l}$  semisimple, then all Abelian ideals such as $Z(\mathfrak{l})$ vanish, whence no contributions occur from $\operatorname{2Cas}_{\operatorname{id}_\mathfrak{l}}[0]$. The only term that survives is therefore
\begin{equation}
    R^\odot =- \sum D(t)_{-1}(c\odot d) = -\sum c\odot d \in \Gamma_{t=1} = \operatorname{Cas}_\mathfrak{g};\nonumber
\end{equation}
namely the quadratic 2-Casimirs $\operatorname{2Cas}_{\operatorname{id}_\mathfrak{l}}$ of $\operatorname{id}_\mathfrak{l}$ coincide with those $\operatorname{Cas}_\mathfrak{l}$ of $\mathfrak{l}$.

In this case, it is clear that the Schouten brackets for $\mathfrak{l}$ and $\operatorname{id}_\mathfrak{l}$ coincide, hence the space of solutions of Eq. \eqref{modcybe} is isomorphic to that of Eq. \eqref{mod2ybe}. The correspondence is just $r \mapsto R$ \cite{Bai_2013}.

\begin{example}\label{exp5}
    Consider the simple Lie algebra $\mathfrak{l}=\mathfrak{su}(2)$ generated by the basis $J_1,J_2,J_3$ such that
    \begin{equation}
        [J_i,J_j] = \varepsilon^k_{ij}J_k,\qquad 1\leq i,j,k\leq 3,\nonumber
    \end{equation}
    where $\varepsilon$ are the Levi-Civita symbols. The skew-symmetric component $r^\wedge = J_1 \wedge J_2\in\mathfrak{su}(2)^{2\wedge}$ solves the modified classical Yang-Baxter equation Eq. \eqref{modcybe} associated to the following Casimir $r^\odot = -J_i\odot J_i \in \operatorname{Cas}_{\mathfrak{su}(2)}$, and generates the cocycle
    \begin{equation}
        \psi(J_1) = J_1\wedge J_3,\qquad \psi(J_2)= J_2\wedge J_3,\qquad \psi(J_3)=0.\nonumber
    \end{equation}
    This gives a bialgebra $(\mathfrak{su}(2),\psi)$ whose Drinfel'd double $\mathfrak{d}=\mathfrak{su}(2)\bowtie\mathfrak{su}(2)^* =\mathfrak{su}(2)\bowtie\mathfrak{an}_2 $ is known as {\bf the classical double of $\mathfrak{su}(2)$} \cite{Majid:1996kd, Gutierrez-Sagredo:2019ipf}. It is a deformation  of the 3D Poincar{\'e} algebra $\mathfrak{iso}(3)=\mathfrak{su}(2) \ltimes \mathbb{R}^3$.
    
    Now let us consider the trivial canonical 2-algebra $$\operatorname{id}_{\mathfrak{su}(2)}=(\mathfrak{su}(2)\xrightarrow{\operatorname{id}}\mathfrak{su}(2),\operatorname{ad},[\cdot,\cdot]).$$ The element $R^\odot_0 = -r^\odot = J_i\odot J_i$ defines an admissible 2-Casimir of $\operatorname{id}_{\mathfrak{su}(2)}$, hence the choice $R^\wedge=r_0^\wedge=J_1\wedge J_2$ solves the 2-graded modified classical Yang-Baxter equation Eq. \eqref{mod2ybe}. This is consistent with the observation made in {\it Example \ref{exp3}}, that bialgebra crossed-module $(\operatorname{id}_{\mathfrak{su}(2)};\delta_{-1},\delta_0)$ is then equivalent to two copies (ie. the canonical embedding) of the bialgebra $(\mathfrak{su}(2),\psi)$.
\end{example}

\subsubsection{2-Casimirs of the skeletal suspension 2-algebra.} Consider the semidirect product $V\rtimes\mathfrak{u}$, where $V$ is Abelian. It is known that, for $\mathfrak{u}$ semisimple and $\operatorname{dim}\mathfrak{u}=\operatorname{dim}V$, the Casimirs $\operatorname{Cas}_{V\rtimes\mathfrak{u}}$ span a 2-dimensional space \cite{Beisert:2017xqx}. Thus, the Killing form $K$ on $V\rtimes\mathfrak{u}$ can be decomposed into two parts\footnote{This decomposition will have nothing to do with {\bf Proposition \ref{bilinedecomp}}, even though we use a similar notation.}, $K=K_0+K_{-1}$, given by 
\begin{equation}
    K(X+Y,X'+Y') = K_0(X,X') + K_{-1}(X,Y') + K_{-1}(X',Y) + K_0(Y,Y')\nonumber
\end{equation}
for each $X,X'\in\mathfrak{u}$ and $Y,Y'\in V$. Indeed, $K_{-1}$ can be non-degenerate only if $\operatorname{dim}V=\operatorname{dim}\mathfrak{u}$.

These components $K_{-1},K_0$ satisfy the following invariance properties
\begin{equation}
    0=\begin{cases}K_0(\operatorname{ad}_XX',X'') + K_0(X',\operatorname{ad}_XX'')\\
    K_0(X\rhd Y,Y') + K_0(Y,X\rhd Y')\end{cases},\qquad
    0=K_{-1}(\operatorname{ad}_XX',Y) + K_{-1}(X',X\rhd Y)\nonumber
\end{equation}
for each $X,X',X''\in\mathfrak{u},Y,Y'\in V$. This gives rise to two linearly independent Casimirs \cite{Osei:2017ybk,Beisert:2017xqx}
\begin{equation}
    X^i\odot X_i+Y^i\odot Y_i,\qquad X^i\odot Y_i,\nonumber 
\end{equation}
where $\{X_i+Y_i\}_i$ is a basis of $V\rtimes\mathfrak{u}$ and $\{X^i+Y^i\}_i$ the corresponding dual basis induced by $K$.

\medskip

Now let $\mathfrak{g}^0$ denote the skeletal 2-algebra associated to the semidirect product $V\rtimes\mathfrak{u}$. Since $t=0$, we have $D(t)_{-1,0}=0$ and both of these maps have full kernel. Moreover, $V$ is Abelian, so we have $\Gamma_{t=0}=\mathfrak{g}_{-1}^{2\odot} = \operatorname{Cas}_{\mathfrak{g}_{-1}}$. However, since $D(t)_{-1}=0$, the only component of $R^\odot$ comes from 
\begin{equation}
    \operatorname{2Cas}_{\mathfrak{g}^0}[0]=\Theta_\rhd.\nonumber
\end{equation}
This space is spanned by the component $K_{-1}$ of the Killing form $K$. Interestingly, the component $K_0$ is constrained to live in the space $\Gamma_{t=0}$, but fails to contribute to $R^\odot$ as $D(t)_{-1}=0$. 

\medskip 

Since the 2-algebra structure on $\mathfrak{g}^0$ sums to the algebra structure on $V\rtimes\mathfrak{u}$ by Eqs. \eqref{seminull}, \eqref{translation}, the Schouten brackets in each case coincide --- one is the sum of the other \cite{Bai_2013}. However, the fact that $D(t)=0$ means that no contributions from $\mathfrak{g}_{-1}^{2\otimes}$ arise in $R$; in other words, the solutions for Eq. \eqref{mod2ybe} constitute those of Eq. \eqref{2cybe} lying in $(\mathfrak{u}\otimes V)\oplus(V\otimes\mathfrak{u})$. The correspondence is $r \mapsto r_0$.

\begin{example}\label{exp6}
Consider the 3D Poincar{\' e} algebra $\mathfrak{iso}(3) = \mathbb{R}^3\rtimes\mathfrak{so}(3)$, where the action $\mathfrak{so}(3)\rhd\mathbb{R}^3$ is the vector representation. Suppose $P_\mu,J_\mu,0\leq\mu\leq 2$ respectively span $\mathbb{R}^3,\mathfrak{so}(3)$, the following canonical Casimir 
\begin{equation}
    r^\odot(P,J) = K_{-1}(P,J) = P_\mu \odot J^\mu\nonumber
\end{equation}
gives rise via Eq. \eqref{mod2ybe} to the corresponding skew-symmetric part of the classical $r$-matrix
\begin{equation}
    r^\wedge = 2(P_+\wedge J_- - P_-\wedge J_+),\qquad \begin{cases}P_\pm = \frac{1}{2}(P_1\pm iP_2) \\ J_\pm = \frac{1}{2}(J_1\pm iJ_2)\end{cases}\nonumber
\end{equation}
which generates the following 1-cocycle $\psi$ on $\mathfrak{iso}(1,2)$:
\begin{eqnarray}
    \psi(P_0)=0,&\quad& \psi(P_\pm)= iP_\pm\wedge P_0,\nonumber \\
    \psi(J_0)=0,&\quad& \psi(J_\pm)= i(J_\pm\wedge P_0 + P_\pm \wedge J_0).\nonumber
\end{eqnarray}
The bialgebra $(\mathfrak{iso}(3);\psi)$ equipped with this 1-cocycle $\psi$ is known as the {\bf 3D $\kappa$-Poincar{\' e} group} \cite{Zakrzewski_1994}.
\medskip 

\begin{remark}
By appending an additional Casimir $\xi K_0(P,P)=\frac{\xi}{2}P_\mu\odot P^\mu$, where $\xi\in\mathbb{R}$, one obtains a one-parameter family of Hopf algebras that deforms the $\kappa$-Poincar{\' e} algebra.  However, this additional term is not a Drinfel'd twist \cite{Beisert:2017xqx}; moreover, the 1-cocycle $\psi$ would no longer be coisotropic $\psi(\mathfrak{so}(3)) \not\subset \mathfrak{so}(3)\wedge \mathbb{R}^3$. Interestingly, this deformation is forbidden to appear in the 2-graded classical $r$-matrix $R$.
\end{remark}

Now let us consider the skeletal suspension 2-algebra
\begin{equation}
    \mathfrak{so}(3)^0=(\mathbb{R}^{3}\xrightarrow{0}\mathfrak{so}(3),\rhd,[\cdot,\cdot]),\nonumber
\end{equation}
with $\rhd$ the vector representation. Recall, in case $t=0$, the quadratic 2-Casimir $R^\odot\in \operatorname{2Cas}_{\mathfrak{so}(3)^0}=\Theta_\rhd$ is given by the bilinear form $K_{-1}$. We recover both the Casimir $R^\odot=r^\odot$ as well as the classical $r$-matrix $R^\wedge = r^\wedge$ of the 3D Pincar{\'e} algebra $\mathfrak{iso}(3)$ \cite{Bai_2013}. The 1-cocycle $\psi$ is in fact coisotropic $\psi(\mathfrak{so}(3)) \subset \mathfrak{so}(3)\wedge \mathbb{R}^3$ \cite{Gutierrez-Sagredo:2019ipf}, hence {\bf Proposition \ref{suscocy}} states that $\psi = \delta_{-1}+\delta_0$ defines a 2-cocycle 
\begin{eqnarray}
    \delta_0(P_0)=0,&\quad& \delta_0(P_\pm) = iP_\pm\wedge P_0,\nonumber \\
    \delta_{-1}(J_0)=0,&\quad& \delta_{-1}(J_\pm)= i(J_\pm\wedge P_0- J_0\wedge P_\pm) \nonumber
\end{eqnarray}
on the suspension 2-algebra $\mathfrak{so}(3)^0$. In other words, the 2-bialgebra $(\mathfrak{so}(3)^0;\delta_{-1},\delta_0)$ is equivalent to the 3D $\kappa$-Poincar{\' e} algebra.
\end{example}

The above construction holds for the Poincar{\'e} algebra $\mathfrak{iso}(n)$ in any dimension. This result was already guessed in \cite{Girelli:2021khh}. Dimension $n=3$ is special, as the isotropy subgroup $\mathfrak{so}(3)$ has the same dimension as its quotient $\mathfrak{iso}(3)/\mathfrak{so}(3) \cong\mathbb{R}^3$. This means that deformations of $\mathfrak{iso}(3)$ can be obtained from Drinfel'd double structures on $\mathfrak{so}(3)$ \cite{Gutierrez-Sagredo:2019ipf}, as demonstrated above.

\subsubsection{2-Casimirs of the 2-Drinfel'd double}
Once one has constructed the quadratic 2-Casimirs for $\g$, we may similarly construct the quadratic 2-Casimirs $\operatorname{2Cas}_{\mathfrak{g}^*[1]}$ of the dual crossed-module $\mathfrak{g}^*[1]$. One may notice that elements in $\operatorname{2Cas}_\mathfrak{g}\subset \mathfrak{g}_0\odot\mathfrak{g}_{-1}$ and $\operatorname{2Cas}_{\mathfrak{g}^*[1]}\subset \mathfrak{g}_{-1}^*\odot\mathfrak{g}_0^*$ satisfy invariance properties constituting the first rows of Eqs. \eqref{invar2manin}. Moreover, the conditions 
\begin{equation}
    D(t)_0\operatorname{2Cas}_\mathfrak{g}=0,\qquad D(t^*)_0\operatorname{2Cas}_{\mathfrak{g}^*[1]} =0 \nonumber
\end{equation}
imply that $t,t^*$ are symmetric with respect to the basis diagonalizing the quadratic 2-Casimir elements. This is nothing but the duality condition Eq. \eqref{dualtmap}.

On the other hand, the 2-Manin triple $\mathfrak{d}=\mathfrak{g}\bowtie\mathfrak{g}^*[1]$ itself hosts a crossed-module structure given by Eq. \eqref{2maninbrac}. This allows us to use the above arguments to also construct the quadratic 2-Casimirs $\operatorname{2Cas}_\mathfrak{d}\subset \d_0\odot\d_{-1}\cong (\mathfrak{g}_0\oplus\mathfrak{g}_{-1}^*)\odot(\mathfrak{g}_{-1}\oplus\mathfrak{g}_0^*)$ of the 2-Manin triple. Here, the adjoint and coadjoint actions Eq. \eqref{rep} of $\mathfrak{g}$ and $\mathfrak{g}^*[1]$ on each other both participate in the definition of the operator $D(\operatorname{ad})$.

After a lengthy calculation, it can be explicitly shown that quadratic 2-Casimirs of $\mathfrak{d}$ satisfy the following invariance properties:
\begin{eqnarray}
    (\operatorname{ad}_{X'}X+ \chi^*_{X'}f)\odot (Y+g) + (X+f)\odot (X'\rhd Y + \operatorname{ad}_{X'}^*g) &=& 0,\nonumber \\
    (\eta_{f'}^*X+ \mathfrak{ad}_{f'}f) \odot (Y+g) + (X+f)\odot (\mathfrak{ad}_{f'}^*Y + f'\rhd^* g) &=& 0,\nonumber\\
    (X+f) \odot (\tilde\Delta_g(X') + \Delta_Y(f')) + (X'+f')\odot (\tilde\Delta_g(X) + \Delta_Y(f)) &=& 0\label{2ddinvar} 
\end{eqnarray}
for each $X,X'\in\mathfrak{g}_0,Y\in\mathfrak{g}_{-1},f,f'\in\mathfrak{g}_{-1}^*,g\in\mathfrak{g}_0^*$. By expanding each row of Eq. \eqref{2ddinvar} out, we see that these invariance properties encompass those of both the natural evaluation pairing $\langle\langle\cdot,\cdot\rangle\rangle$ of Eq. \eqref{2bilin}, as well as the grading-inhomogeneous alternative pairing $\langle\langle\cdot,\cdot\rangle\rangle'$ given in Eq. \eqref{2bilin2}. For instance, expanding out the first equation yields
\begin{eqnarray}
    0&=&(\operatorname{ad}_{X'}X \odot g + X \odot \operatorname{ad}_{X'}^*g) + (\chi^*_{X'}f \odot Y + f \odot X'\rhd Y)\nonumber \\
    &\qquad& + (\operatorname{ad}_{X'}X \odot Y + X \odot X'\rhd Y)  + (\chi^*_{X'}f\odot g + f\odot \operatorname{ad}_{X'}^*g)\nonumber\\
    &=& (g,\operatorname{ad}_{X'}X) + (\operatorname{ad}^*_{X'}g,X) + (\chi^*_{X'}f,Y) + (f,\chi_{X'}Y) \nonumber \\
    &\qquad& + \langle \operatorname{ad}_{X'}X,Y\rangle + \langle X,\chi_{X'}Y\rangle + \langle \chi^*_{X'}f,g\rangle + \langle f,\operatorname{ad}_{X'}^*g\rangle,\nonumber   
\end{eqnarray}
where $(\cdot,\cdot)$ and $\langle\cdot,\cdot\rangle$ are the components of the pairings Eqs. \eqref{2bilin}, \eqref{2bilin2}, respectively. Similar computations can be carried out for the other two equations. Moreover, the condition that $D(T)_{0}\operatorname{2Cas}_\mathfrak{d}=0$ implies the duality condition Eq. \eqref{dualbigt}. 

In other words, we have the following result.
\begin{proposition}\label{bilinedecomp}
Quadratic 2-Casimirs $\operatorname{2Cas}_\mathfrak{d}$ of the 2-Manin triple $\mathfrak{d}$ consist of the grading-inhomogeneous pairings Eqs. \eqref{2bilin}, \eqref{2bilin2}. 
\end{proposition}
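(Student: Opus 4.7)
The plan is to apply the general characterization of quadratic 2-Casimirs from Section \ref{2cas} to the crossed-module structure of the standard 2-Manin triple $\mathfrak{d} = \mathfrak{g} \bowtie \mathfrak{g}^*[1]$ given in Eq. \eqref{2maninbrac}, and then show that the resulting invariance and kernel conditions decompose precisely into those defining the two grading-inhomogeneous pairings on $\mathfrak{d}$.

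First I would write out $D(\operatorname{ad}_\mathfrak{d})_{W}$ for a generic element $W = (X+f)+(Y+g) \in \mathfrak{d}_0\oplus\mathfrak{d}_{-1}$ acting on a generic symmetric tensor in $\mathfrak{d}_0 \odot \mathfrak{d}_{-1} \cong (\mathfrak{g}_0\oplus\mathfrak{g}_{-1}^*)\odot(\mathfrak{g}_{-1}\oplus\mathfrak{g}_0^*)$, using the graded brackets and action $\mathrlap{\triangleright}{\rhd}$ of Eq. \eqref{2maninbrac}. Setting the result to zero and grouping by which sector of $\mathfrak{d}$ the inputs lie in yields exactly the three invariance relations Eq. \eqref{2ddinvar} (one for each independent combination of sectors of the first tensor factor with the element $W$).

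Next, each of Eq. \eqref{2ddinvar} would be expanded into its graded components. As the author already demonstrates for the first line, the terms naturally split into two classes: those pairing a space with its linear dual (e.g.\ $g(\operatorname{ad}_{X'}X) + (\operatorname{ad}^*_{X'}g)(X)$ and its crossed-module companion), which are the invariance conditions for the canonical evaluation pairing $\langle\langle\cdot,\cdot\rangle\rangle$ of Eq. \eqref{2bilin}; and those pairing $\mathfrak{g}_0$ with $\mathfrak{g}_{-1}$ and $\mathfrak{g}_{-1}^*$ with $\mathfrak{g}_0^*$ via an "off-diagonal" form, which are the invariance conditions Eq. \eqref{invar2manin} for the alternative pairing $\langle\langle\cdot,\cdot\rangle\rangle'$ of Eq. \eqref{2bilin2}. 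I would run this expansion through for the remaining two rows of Eq. \eqref{2ddinvar}, collecting all the relations generated by the coadjoint maps $\operatorname{ad}^*,\mathfrak{ad}^*,\chi^*,\eta^*,\Delta,\tilde\Delta$; each such relation lands in one of the two pairing classes. The grading-homogeneous pairings of Section \ref{combo2cov} cannot occur because $D(\operatorname{ad}_\mathfrak{d})$ preserves total degree only when paired across sectors of opposite grading, which is exactly what $\mathfrak{d}_0\odot \mathfrak{d}_{-1}$ selects.

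Finally I would verify the constraint $D(T)_0 R^\odot = 0$ from Theorem \ref{2cybe}, using $T = t + t^*$: this says both $t$ and $t^*$ are symmetric with respect to the bilinear form realizing $R^\odot$, which is precisely the duality condition Eq. \eqref{dualbigt} (equivalently Eq. \eqref{dualtmap}) that the two pairings must satisfy in order to define a 2-Manin triple structure. Together with the invariance conditions just obtained, this shows $\operatorname{2Cas}_\mathfrak{d}$ is spanned by $\langle\langle\cdot,\cdot\rangle\rangle$ and $\langle\langle\cdot,\cdot\rangle\rangle'$, giving the claimed equality.

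The main obstacle will be bookkeeping: the action $\mathrlap{\triangleright}{\rhd}$ on $\mathfrak{d}$ in Eq. \eqref{2maninbrac} has six terms, so applying $D(\operatorname{ad}_\mathfrak{d})$ to a symmetric tensor in $\mathfrak{d}_0\odot\mathfrak{d}_{-1}$ produces a large number of monomials. The genuine content is ensuring that, when one groups by sector and uses the Pfeiffer identities together with Eqs. \eqref{comm}, \eqref{abhom}, the resulting system reduces cleanly to the two collections of invariance identities for the two pairings and to nothing else; I expect the duality relations $\tilde t = t^*$ and the compatibility conditions Eqs. \eqref{2manin1}--\eqref{2manin3} to be exactly what is needed to make the bookkeeping close.
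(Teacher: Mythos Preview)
Your proposal is correct and follows essentially the same approach as the paper: compute the $D(\operatorname{ad}_\mathfrak{d})$-invariance of $R^\odot\in\mathfrak{d}_0\odot\mathfrak{d}_{-1}$ to obtain the three relations Eq.~\eqref{2ddinvar}, expand each into its graded components to see the invariance identities of the two grading-inhomogeneous pairings Eqs.~\eqref{2bilin}, \eqref{2bilin2} emerge, and then use $D(T)_0 R^\odot=0$ to recover the duality condition Eq.~\eqref{dualbigt}. The only minor remark is that the paper emphasizes this argument works directly from $D(\operatorname{ad})$-invariance without invoking the decomposition Eq.~\eqref{2Rmatdecomp} or the compatibility conditions Eqs.~\eqref{2manin1}--\eqref{2manin3}; you should not need the latter to make the bookkeeping close, since the expansion is a purely linear-algebraic splitting by sector.
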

\noindent Note that Eq. \eqref{2ddinvar} follows directly from the $D(\operatorname{ad})$-invariance of $R^\odot$ itself as an element of $\d_0\odot\d_{-1}$. Assumptions about its particular form, such as Eq. \eqref{2Rmatdecomp}, are not necessary.  In other words, {\bf Proposition \ref{bilinedecomp}} is a general result that applies to any 2-Drinfel'd double as defined here and in the literature \cite{Bai_2013,Chen:2013}. As we have seen previously, it is only for these pairings that the  BF theory Eq. \eqref{monsbf} is non-trivial. Here, we have provided a proof that these are in fact the only ones on $\mathfrak{d}$.

\begin{example}
Consider the case $t=1$ and the canonical 2-Manin triple $\mathfrak{1}_\mathfrak{l}$ associated to the bialgebra $(\mathfrak{l};\psi)$. Recall from {\bf Example \ref{exp3}} that we have $\mathfrak{1}_\mathfrak{l}=\operatorname{id}_\mathfrak{d}$ as crossed-modules, where $\mathfrak{d}=\mathfrak{l}\bowtie\mathfrak{l}^*$ is the Drinfel'd double.

First, as $t=1$ in the crossed-module $\operatorname{id}_\mathfrak{d}$, $\operatorname{2Cas}_{\operatorname{id}_\mathfrak{d}}$ coincides precisely with the usual quadratic Casimirs $\operatorname{Cas}_\mathfrak{d}$ (as explained in more detail below). As such, $\operatorname{2Cas}_{\operatorname{id}_\mathfrak{d}}$ consist of non-degenerate bilinear forms $K\in \mathfrak{d}\odot \mathfrak{d}$, satisfying the usual invariance properties. 

On the other hand, the quadratic 2-Casimirs of $\mathfrak{1}_{\mathfrak{l}}$ as a 2-Manin triple consists of the two grading-inhomogeneous pairings Eqs. \eqref{2bilin}, \eqref{2bilin2}, given by $K_0\in (\mathfrak{l}\odot\mathfrak{l})\oplus(\mathfrak{l}^*\odot\mathfrak{l}^*)$ and $K^0\in \mathfrak{l}\odot \mathfrak{l}^*$, as shown in {\bf Proposition \ref{bilinedecomp}}. The fact $\mathfrak{1}_\mathfrak{l}=\operatorname{id}_\mathfrak{d}$ then implies that we have a decomposition of $\operatorname{2Cas}_{\mathfrak{1}_\mathfrak{g}}$ given by
\begin{equation}
    K= K_0+K^0.\nonumber 
\end{equation}
More precisely, if $Z=(X+g)\in\mathfrak{d}$ with $X\in\mathfrak{g}$ and $g\in\mathfrak{g}^*$, then we have
\begin{equation}
    K(Z,Z') = K_0(X,X') + K_0(g,g') + K^0(g,X') + K^0(X,g').\nonumber
\end{equation}

As a consequence, the  BF theory Eq. \eqref{monsbf} reads
\begin{eqnarray}
    S_\textbf{BF}[{\bf A},\boldsymbol\Sigma] &=& \frac{1}{2}\int_M K_0(\Sigma\wedge \bar{F})+ K_0(B\wedge \bar{F}^*) + K^0(\Sigma\wedge\bar{F}^*) + K^0(B\wedge\bar{F})\nonumber\\
    &\qquad& +\frac{1}{2}\int_M K_0(\Sigma\wedge\Sigma) + K_0(B\wedge B) + K^0(\Sigma\wedge B) + K^0(B\wedge \Sigma),\nonumber
\end{eqnarray}
where $\boldsymbol\Sigma=\Sigma+B$ is valued in $(\operatorname{id}_\mathfrak{d})_{-1} = \mathfrak{l}\bowtie\mathfrak{l}^*$ and ${\bf A} = A+C$ is valued in $(\operatorname{id}_\mathfrak{d})_0 = \mathfrak{l}\bowtie\mathfrak{l}^*$. This means that the BF theory $S_\textbf{BF}[{\bf A},\boldsymbol\Sigma]$ based on $\mathfrak{d}$ naturally hosts fake-flatness equations of motion in each $\g,\g^*[1]$-sectors that mixes the 2-connections $\Sigma,B$. This is an artifact of the double structure of $\d = \g\bowtie\g^*[1]$; these "mixed" terms must appear in order to preserve the invariance of the action.

As a concrete example, consider 
the classical double $\mathfrak{d}=\mathfrak{su}(2)\bowtie\mathfrak{an}_2$. Suppose $\{\mathcal{J}^\mu\}_\mu$ is a basis for $\mathfrak{d}$ diagonalizing the non-degenerate invariant bilinear form $K\in\operatorname{2Cas}_{\operatorname{id}_{\mathfrak{d}}}=\operatorname{Cas}_{\mathfrak{d}}$. By writing $\mathcal{J}^\mu = (J^i,\tau^j)$  in terms of its sectors $J^i\in\mathfrak{su}(2),\tau^j \in\mathfrak{an}_2$, we see that
\begin{equation}
    \delta_{\mu\nu} = K(\mathcal{J}_\mu,\mathcal{J}_\nu) = K_0(J_i,J_j) + K_0(\tau_i,\tau_j) + K^0(J_i,\tau_j)+K^0(J_i,\tau_j),\nonumber
\end{equation}
However, it is not in general possible to pick the bases $\{J_i\}_i,\{\tau_i\}_i$ that diagonalize the two bilinear forms $K_0,K^0$ simultaneously --- that is, unless they commute. Generally speaking, it is important to keep in mind that a diagonal bilinear form $K$ can decompose into components that are non-diagonal. 
\end{example}

\section{Conclusions}
In this paper, we  extracted the structures of the  Drinfel'd double $\mathfrak{d}$ of a Lie bialgebra $\mathfrak{g}$, and subsequently derived the compatibility relations Eq. \eqref{1manin} from gauge-theoretic considerations. Moreover, we have generalized (ie. "categorified") this method to the case of the Lie 2-bialgebra/bialgebra crossed-module, and explicitly derived the compatibility relations Eqs. \eqref{2manin1}, \eqref{2manin2}, \eqref{2manin3} of the 2-Drinfel'd double through 2-gauge theory. 

It is important to note that our results, in particular {\bf Theorems \ref{mt}, \ref{2mt}}, agree completely with existing mathematical literature \cite{Chari:1994pz,Semenov1992,Chen:2013,Bai_2013}. Though these derived compatibility conditions are not strictly speaking new results, the new insight is that we can in fact derive them from gauge-theoretical considerations. This means that our approach opens the door for understanding the structure of higher Drinfel'd doubles in a similar manner. For instance, we could start with a 3-gauge theory \cite{Radenkovic:2019qme}, and apply our techniques to develop the notion of 3-Drinfel'd double associated to 3-bialgebras (which are   defined in \cite{bai2016}).

We had also briefly studied topological theories based on the (2-)Drinfel'd doubles. The {\it BF} theory Eq. \eqref{monsbf} is of particular interest. Indeed, it is an essential tool to construct quantum gravity amplitude in the spin foam approach \cite{Barrett:1997gw, rovelli2004}. It has been argued that it could be relevant to also include some edge decorations to have access to the frame field, which can be done using 2-gauge theories \cite{Crane:2003ep, Asante:2019lki, Girelli:2021khh}. To unlock new types of symmetries could therefore open new ways to construct spin foam models. 

In general, we believe that such a procedure would lead to a systematic understanding of the structures of a "quantum 2-group", which is very much sought after in many areas of physics \cite{Kapustin:2013uxa,Cordova:2018cvg, Benini:2018reh}.

\medskip 

We would like to highlight some  directions we find interesting to develop in a near future. 

\paragraph{\textbf{Recovering the Crane-Yetter-Broda amplitude.}} As we emphasized, the Drinfel'd double fits in a 3D Chern-Simons theory Eq. \eqref{chernsimons}, while the 2-Drinfel'd double fits in a 4D BF-theory Eq. \eqref{monsbf}. The trivial canonical embedding $\mathfrak{d}\mapsto \operatorname{id}_\mathfrak{d}$ can therefore be seen as the most natural way of "promoting" a 3D topological gauge theory to a 4D one, while keeping all of the algebra of the field contents identical. 
With this in mind,  it would interesting to see whether one can recover (more rigorously) the  Crane-Yetter-Broda  amplitude (ie. a 15-$j$ symbol)  defined in terms of  a quantum group \cite{Crane:1994fw, Baez:1999sr} as the amplitude associated to the $BF+BB$ action. This would in particular bring more strength to the quantum gravity models which implement the cosmological constant using a quantum group deformation. It would also clarify how and why the presence of a non-zero cosmological constant is also incorporated in the quantum gravity regime through a quantum group structure, in 4D.


\paragraph{\textbf{2-graded integrable systems.}}
In this paper, we studied the 2-graded classical $r$-matrices on $\mathfrak{d}$, as well as the quadratic 2-Casimirs. Particular examples in the cases $t=0,1$ were worked out explicitly, and we have found that, in these cases, the notion of duality for the 1-algebra case \cite{Semenov1992} all carry over to the Lie 2-algebra context, including the classification of the classical $r$-matrices. 

The notion of quantum/Poisson group was initially developed in the context of integrable systems. The use of the $r$-matrix is quite instrumental in understanding and classifying the notion of integral system. Now that we have Lie 2-bialgebras and the associated notion of $r$-matrix, we can explore how such formalism can be used to develop new notions of integrable systems. This is currently under investigation.

\paragraph{\textbf{Fourier transform between dual 2-groups.}} A key feature of the  Drinfel'd double is that we have a duality between Lie algebras, which can be extended to Hopf algebras \cite{Majid:1996kd}. In more physical terms, this duality pertains a notion of Fourier transform, relating possibly non-commutative structures \cite{Majid:2008iz, Guedes:2013vi, Joung:2008mr}. This notion of Fourier transform is different than (but related to) the standard Fourier transformation of group elements in terms of representations. The  representation theory is still missing for most 2-groups (except the skeletal case in \cite{Baez:2008hz}), so that if we intend to perform a Fourier transform for the notion of 2-groups, we need to resort to the duality between functions over dual 2-groups. Such duality is embedded in the definition of the 2-Drinfel'd double, hence we can expect many of the key-notions the plane-wave  need to satisfy should be recoverable from the (quantum) 2-Drinfel'd double.

\paragraph{Weak Lie 2-algebras and their 2-double.}
All along in this paper we have worked with strict Lie 2-algebras, which are equivalent to crossed-modules. However, there is in fact an additional piece of datum that characterizes a crossed-module $\g$. If we denote by $V=\operatorname{ker}t$ and $\operatorname{coker}t=\mathfrak{u}$ the kernel and cokernel of the crossed-module map $t$, then this additional datum is a Lie algebra cohomology class $\kappa\in H^3(\mathfrak{u},V)$ called the {\bf Postnikov class} \cite{Kapustin:2013uxa}, which classifies $\g$ up to elementary equivalence \cite{Wang:2016rzy}. Working explicitly with the crossed-module structure, it would be interesting to see how our above (2-)gauge-theoretic formalism extends to take the Postnikov class into account.

The motivation for doing so arises from the correspondence between crossed-modules with non-trivial Postnikov class and {\it weak 2-algebras}. In essence, the structure of a weak Lie 2-algebra introduces a skew-trilinear {\it homotopy map} $\eta: \mathfrak{g}_0^{3\wedge}\rightarrow\mathfrak{g}_{-1}$ that allows us to relax the Jacobi identity on $\mathfrak{g}$  \cite{Chen:2013}: the 2-Jacobi identities Eq. \eqref{2jacob} are modified to
\begin{eqnarray}
[X,[X',X'']]+[X',[X'',X]]+[X'',[X,X']]=t\eta(X,X',X''),\nonumber \\
X\rhd (X' \rhd Y) - X' \rhd (X\rhd Y) - [X,X']\rhd Y = \eta(X,X',tY)\label{weak2jac}
\end{eqnarray}
for each $X,X',X''\in\mathfrak{g}_0,Y\in\mathfrak{g}_{-1}$. Such structures appear most naturally in string theory, where one has the {\it string 2-algebra} \cite{Kim:2019owc}. Notably, weak 2-algebras have also appeared in the study of {\it 2-plectic geometry} \cite{BaezRogers}.

The trouble with weak 2-algebras is that they do not integrate to a 2-group. Due to the non-trivial Jacobiator, namely the right-hand sides of Eq. \eqref{weak2jac}, the associator at the group level cannot be made continuous \cite{Baez:2005sn}. The solution, given in Ref. \cite{Baez:2005sn}, is precisely to pass from the description of a weak 2-algebra to a crossed-module with non-trivial Postnikov class. As such, based on the notion of a weak 2-bialgebra already in the literature \cite{Chen:2013}, our formalism would allow one to construct, in a manner in which the gauge content is explicit, a "weak" classical 2-Drinfel'd double associated to a bialgebra crossed-module with non-trivial Postnikov class. 

Such a gauge-theoretic understanding of the structures of a weak 2-Drinfel'd double is useful, as this (or more precisely the finite group version) will allow us to pin down the 4D boundary theory of the 5D symmetry-protected topological order $\int_M w_2\cup w_3$ \cite{Thorngren2015}. This order hosts fermionic point-like and string-like quasiparticles, which was also recently discovered to be related to the {\it new} spin-$\frac{3}{2}$ $SU(2)$ anomaly \cite{JuvenWang}. This particular order has received much attention recently from the condensed matter theory community, thus it would be important to understand its symmetries.

\appendix

\section{Proofs of generalized (2-)covariance}\label{gencov}
In this section, we provide an explicit proof of the generalized covariance of the combined curvature quantities Eqs. \eqref{combocurv}, \eqref{comboff} and \eqref{combo2c} constructed in the main text. These proofs shall demonstrate the importance of the compatibility conditions Eq. \eqref{1manin} and Eqs. \eqref{2manin1}, \eqref{2manin2}, \eqref{2manin3} in the context of gauge and 2-gauge theory, respectively.

\paragraph{Covariance of the combined curvature.} Recall the combined curvature ${\bf F}$ constructed in Eq. \eqref{combocurv}. It takes values in a Drinfel'd double $\mathfrak{d}=\mathfrak{g}\bowtie\mathfrak{g}^*$ of a Lie bialgebra $\mathfrak{g}$.

\begin{theorem}
\label{cov}
Given {\bf Theorem \ref{mt}}, Eq. \eqref{combocurv} is covariant
\begin{equation}
    {\bf F} \rightarrow {\bf F}^{\boldsymbol\lambda} = {\bf F} + \pmb{[}{\bf F},\boldsymbol\lambda\pmb{]}.\nonumber
\end{equation}
\end{theorem}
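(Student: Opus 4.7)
The plan is to reduce \textbf{Theorem \ref{cov}} to the standard fact that the Yang--Mills curvature on any Lie algebra transforms covariantly under infinitesimal gauge transformations. By \textbf{Theorem \ref{mt}}, the compatibility conditions Eq. \eqref{1manin} are precisely what is needed for $\mathfrak{d}=\mathfrak{g}\bowtie\mathfrak{g}^*$ to be a genuine Lie algebra with the Manin bracket \eqref{maninbrac}. Once this is in hand, the combined connection $\mathbf{A}$ is nothing more than a $\mathfrak{d}$-valued 1-form, and $\mathbf{F}=d\mathbf{A}+\tfrac{1}{2}\pmb{[}\mathbf{A}\wedge\mathbf{A}\pmb{]}$ is simply its standard curvature.

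The first step is to verify that the combined gauge transformation in \textbf{Theorem \ref{mt}} can be rewritten as the single compact expression $\mathbf{A}^{\boldsymbol\lambda} = \mathbf{A} + d_{\mathbf{A}}\boldsymbol\lambda$, where $d_{\mathbf{A}}\boldsymbol\lambda \equiv d\boldsymbol\lambda + \pmb{[}\mathbf{A},\boldsymbol\lambda\pmb{]}$ is the covariant derivative built from the Manin bracket on $\mathfrak{d}$. Substituting $\mathbf{A}=A+B$ and $\boldsymbol\lambda=\lambda+\tilde\lambda$ into \eqref{maninbrac} and tracking each sector, the diagonal terms $[A,\lambda]$ and $[B,\tilde\lambda]_*$ reproduce the usual $\mathfrak{g}$- and $\mathfrak{g}^*$-gauge pieces, while the off-diagonal coadjoint contributions $\operatorname{ad}^*_X g'$, $\mathfrak{ad}^*_g X'$ reassemble exactly the cross-terms $\mathfrak{ad}^*_B\lambda$, $\operatorname{ad}^*_A\tilde\lambda$, $\mathfrak{ad}^*_{\tilde\lambda}A$, $\operatorname{ad}^*_\lambda B$ introduced via the antisymmetrization procedure Eq. \eqref{antisymgau}. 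This is the step where Eq. \eqref{1manin} enters essentially, since without it the Manin bracket would not close as a Lie bracket, and the combined transformation would not repackage into $d_{\mathbf{A}}\boldsymbol\lambda$.

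Given this identification, the second step is a one-line standard computation. Expanding to first order in $\boldsymbol\lambda$ and using $d^2=0$,
\begin{equation}
    \mathbf{F}^{\boldsymbol\lambda} - \mathbf{F} = d(d_{\mathbf{A}}\boldsymbol\lambda) + \pmb{[}\mathbf{A}\wedge d_{\mathbf{A}}\boldsymbol\lambda\pmb{]}
    = d\pmb{[}\mathbf{A},\boldsymbol\lambda\pmb{]} + \pmb{[}\mathbf{A}\wedge d\boldsymbol\lambda\pmb{]} + \pmb{[}\mathbf{A}\wedge\pmb{[}\mathbf{A},\boldsymbol\lambda\pmb{]}\pmb{]}.\nonumber
\end{equation}
The graded Leibniz rule makes the first two terms collapse to $\pmb{[}d\mathbf{A},\boldsymbol\lambda\pmb{]}$, while the graded Jacobi identity on $\mathfrak{d}$ gives $\pmb{[}\mathbf{A}\wedge\pmb{[}\mathbf{A},\boldsymbol\lambda\pmb{]}\pmb{]} = \tfrac{1}{2}\pmb{[}\pmb{[}\mathbf{A}\wedge\mathbf{A}\pmb{]},\boldsymbol\lambda\pmb{]}$. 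Combining these yields
\begin{equation}
    \mathbf{F}^{\boldsymbol\lambda} - \mathbf{F} = \pmb{[}\,d\mathbf{A} + \tfrac{1}{2}\pmb{[}\mathbf{A}\wedge\mathbf{A}\pmb{]}\, ,\, \boldsymbol\lambda\,\pmb{]} = \pmb{[}\mathbf{F},\boldsymbol\lambda\pmb{]},\nonumber
\end{equation}
which is the claim.

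The main obstacle lies in the first step: the combined transformation of \textbf{Theorem \ref{mt}} is initially presented as an unilluminating collection of sector-wise coadjoint actions with signs dictated by the antisymmetrization procedure, and one must see it reorganize into a single Manin-bracket covariant derivative. This is not an accident---it is precisely the content of \eqref{1manin}, and is what makes the Jacobi identity available to close the argument. Once the identification is made, the remainder is the standard textbook computation applied to the (now genuine) Lie algebra $\mathfrak{d}$, and the covariance of $\mathbf{F}$ becomes, as claimed, manifest.
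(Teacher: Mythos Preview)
Your proof is correct and takes a genuinely different route from the paper's own argument in Appendix~\ref{gencov}. The paper works sector by sector: it fixes attention on the $\mathfrak{g}$-valued piece $\bar F$, explicitly expands how $[A\wedge A]$ and $\mathfrak{ad}_B^*(\wedge A)$ transform under $\boldsymbol\lambda$, and then invokes Eq.~\eqref{1manin} and the Jacobi identity for $[\cdot,\cdot]_*$ term by term to show the residues reassemble into $[\bar F,\lambda]+\mathfrak{ad}_{\tilde\lambda}^*\bar F-\mathfrak{ad}_{\bar F^*}^*\lambda$. Duality then handles the $\mathfrak{g}^*$-sector. Your argument instead packages the combined transformation of \textbf{Theorem~\ref{mt}} as $\mathbf{A}\to\mathbf{A}+d_{\mathbf{A}}\boldsymbol\lambda$ for the Manin bracket and applies the one-line textbook curvature computation on $\mathfrak{d}$.

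What each buys: the paper's explicit computation makes transparent exactly which algebraic identity absorbs which stray term---useful pedagogy, and the template the paper reuses for the 2-gauge proofs in \textbf{Lemmas~\ref{2cov1}, \ref{2cov2}}. Your route is cleaner and is in fact what the paper itself alludes to in Section~\ref{combocov} when it says the covariance of $\mathbf{F}$ is ``manifest''; the analogue of your repackaging step is written out for the 2-case as Eq.~\eqref{combo2gau}. The one place you could tighten is the sign bookkeeping in your first step: the cross-terms from Eq.~\eqref{antisymgau} and the $\mathfrak{ad}_{\tilde\lambda}^*A$, $\operatorname{ad}_\lambda^*B$ pieces must be checked to land with the signs of Eq.~\eqref{maninbrac} under the wedge conventions in use, but this is routine.
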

\noindent With Eq. \eqref{maninbrac}, we may compute
\begin{equation}
    \pmb{[}{\bf F},\boldsymbol\lambda\pmb{]} = ([\bar F,\lambda]+\mathfrak{ad}_{\tilde\lambda}^*\bar F - \mathfrak{ad}_{\bar F^*}^*\lambda) + ([\bar F^*,\tilde\lambda]+\operatorname{ad}_\lambda^*\bar F^* - \operatorname{ad}_{\bar F}^*\tilde\lambda). \nonumber
\end{equation}
\begin{proof}
Due to the symmetry under duality 
\begin{equation}
    \mathfrak{g}\leftrightsquigarrow \mathfrak{g}^*,\qquad \operatorname{ad}^*\leftrightsquigarrow\mathfrak{ad}^*,\nonumber
\end{equation}
we may without loss of generality (WLOG) focus on one component of ${\bf F}$, say $\bar F$ that lies in the $\mathfrak{g}$-sector. This leads to the transformations
\begin{eqnarray}
    [A\wedge A]&\xrightarrow{\boldsymbol\lambda}& [A\wedge A] \nonumber \\
    &\qquad&+ 2[d_A\lambda\wedge A]  + 2[\mathfrak{ad}_{\tilde\lambda}^*A\wedge A] \nonumber \\
    &\qquad& -2[A\wedge\mathfrak{ad}_B^*\lambda] +o(\boldsymbol\lambda^2),\nonumber \\
    \mathfrak{ad}_B^*(\wedge A)&\xrightarrow{\boldsymbol\lambda}& \mathfrak{ad}_B^*(\wedge A) \nonumber \\
    &\qquad& + \mathfrak{ad}_B^*(\wedge d_A\lambda) + \mathfrak{ad}_B^*(\wedge \mathfrak{ad}_{\tilde\lambda}^*A) + \mathfrak{ad}_{d_B\tilde\lambda}^*(\wedge A) + \mathfrak{ad}_{\operatorname{ad}_\lambda^*B}^*(\wedge A)  \nonumber \\
    &\qquad&- \mathfrak{ad}_{\operatorname{ad}_A^*\tilde\lambda}^*A -\mathfrak{ad}_B^*(\wedge \mathfrak{ad}_B^*\lambda)+ o(\boldsymbol\lambda^2),\nonumber
\end{eqnarray}
where we have neglected terms of quadratic and higher order in $\boldsymbol\lambda=\lambda+\tilde\lambda$.

We first rewrite using Eq. \eqref{1manin}
\begin{equation}
    2[\mathfrak{ad}_{\tilde\lambda}^*A\wedge A] = \mathfrak{ad}_{\tilde\lambda}^*[A\wedge A] + 2\mathfrak{ad}_{\operatorname{ad}_A^*\tilde\lambda}^*(\wedge A),\nonumber
\end{equation}
and hence 
\begin{equation}
    [\mathfrak{ad}_{\tilde\lambda}^*A\wedge A] - \mathfrak{ad}_{\operatorname{ad}_A^*\tilde\lambda}^*A =\frac{1}{2} \mathfrak{ad}_{\tilde\lambda}^*[A\wedge A].\nonumber
\end{equation}
On the other hand, we have
\begin{equation}
    \mathfrak{ad}_{d_B\tilde\lambda}^*(\wedge A)= \mathfrak{ad}_{\tilde\lambda}^* dA + \mathfrak{ad}_{[B,\tilde\lambda]}^*(\wedge A),\nonumber
\end{equation}
modulo the boundary term $d(\mathfrak{ad}_{\tilde\lambda}^*A)$. By the identity\footnote{This follows from the Jacobi identity for the dual bracket $[\cdot,\cdot]_*$:
\begin{equation}
    (\mathfrak{ad}_{g'}\mathfrak{ad}_g)g''=[g',[g,g'']_*]_*= -[g,[g'',g']_*]_* - [g'',[g',g]_*]_* = (\mathfrak{ad}_g\mathfrak{ad}_{g'})g'' + \mathfrak{ad}_{[g',g]_*}g''.\nonumber
\end{equation}}
\begin{equation}
    \mathfrak{ad}_g^*\mathfrak{ad}_{g'}^* = \mathfrak{ad}_{g'}^*\mathfrak{ad}_g^* + \mathfrak{ad}_{[g',g]_*}^*\label{jacob}
\end{equation}
for each $g,g',g''\in\mathfrak{g}^*$, this leads to 
\begin{equation}
    \mathfrak{ad}_B^*(\wedge \mathfrak{ad}_{\tilde{\lambda}}^*A) = \mathfrak{ad}_{\tilde\lambda}^* \mathfrak{ad}_B^*(\wedge A) + \mathfrak{ad}_{[\tilde\lambda,B]_*}^*A=\mathfrak{ad}_{\tilde\lambda}^* \mathfrak{ad}_B^*(\wedge A) - \mathfrak{ad}_{[B,\tilde\lambda]_*}^*A,\nonumber
\end{equation}
hence 
\begin{equation}
    \mathfrak{ad}_B^*(\wedge \mathfrak{ad}_{\tilde{\lambda}}^*A)+\mathfrak{ad}_{d_B\tilde\lambda}^*(\wedge A) = \mathfrak{ad}_{\tilde\lambda}^*dA + \mathfrak{ad}_{\tilde\lambda}^*\mathfrak{ad}_B^*(\wedge A).\nonumber
\end{equation}
We may then compute the combination
\begin{eqnarray}
     [\mathfrak{ad}_{\tilde\lambda}^*A\wedge A]+\mathfrak{ad}_B^*(\wedge \mathfrak{ad}_{\tilde{\lambda}}^*A)\nonumber\\ 
     \qquad +\mathfrak{ad}_{d_B\tilde\lambda}^*(\wedge A)- \mathfrak{ad}_{\operatorname{ad}_A^*\tilde\lambda}^*A
     &=& \mathfrak{ad}_{\tilde\lambda}^*dA + \mathfrak{ad}_{\tilde\lambda}^*\mathfrak{ad}_B^*(\wedge A) + \frac{1}{2}\mathfrak{ad}_{\tilde\lambda}^*[A\wedge A] \nonumber \\
     &=& \mathfrak{ad}_{\tilde\lambda}^*\tilde F.\label{cov1}
\end{eqnarray}

Turning to the other terms, we first note the standard manipulation
\begin{equation}
    [d_A\lambda\wedge A] = -[\lambda\wedge dA] + [[A,\lambda]\wedge A] =  [dA,\lambda]+ \frac{1}{2}[[A\wedge A],\lambda],\nonumber
\end{equation}
where we have discarded a boundary term $d[\lambda,A]$ and used the Jacobi identity
\begin{equation}
     [[A,\lambda]\wedge A] - [[\lambda,A] \wedge A] + [[A\wedge A],\lambda]= 2[[A,\lambda]\wedge A] + [[A\wedge A],\lambda] =  0.\nonumber
\end{equation}
Secondly, from Eq. \eqref{1manin} we have
\begin{equation}
    \mathfrak{ad}_B^*([A,\lambda]) = [\mathfrak{ad}_B^*(\wedge A),\lambda] +[A\wedge \mathfrak{ad}_B^*\lambda] - \mathfrak{ad}_{\operatorname{ad}_\lambda^*B}^*(\wedge A) - \mathfrak{ad}_{\operatorname{ad}_A^*(\wedge B)}\lambda,\nonumber
\end{equation}
which leads to
\begin{equation}
    \mathfrak{ad}_B^*(\wedge d_A\lambda) = -\mathfrak{ad}_{dB}^*\lambda + [\mathfrak{ad}_B^*(\wedge A),\lambda] +[A\wedge \mathfrak{ad}_B^*\lambda] - \mathfrak{ad}_{\operatorname{ad}_\lambda^*B}^*(\wedge A) - \mathfrak{ad}_{\operatorname{ad}_A^*(\wedge B)}^*\lambda\nonumber
\end{equation}
modulo the boundary term $d(\mathfrak{ad}_B^*\lambda)$. As such, we see that 
\begin{equation}
    \mathfrak{ad}_B^*(\wedge d_A\lambda) + \mathfrak{ad}_{\operatorname{ad}_\lambda^*B}^*(\wedge A) =  \mathfrak{ad}_{dB}^*\lambda + [\mathfrak{ad}_B^*(\wedge A),\lambda] +[A\wedge \mathfrak{ad}_B^*\lambda]+ \mathfrak{ad}_{\operatorname{ad}_A^*(\wedge B)}^*\lambda.\nonumber
\end{equation}
Finally, we can rewrite
\begin{equation}
    2\mathfrak{ad}_B^*(\wedge\mathfrak{ad}_B^*\lambda) = \mathfrak{ad}_{[B\wedge B]_*}^\lambda \nonumber
\end{equation}
by Eq. \eqref{jacob}. Hence, combining the relevant gauge-dependent terms, we have
\begin{eqnarray}
    [d_A\lambda\wedge A] + \mathfrak{ad}_B^*(\wedge d_A\lambda) + \mathfrak{ad}_{\operatorname{ad}_\lambda^*B}^*(\wedge A) \nonumber \\
    - [A\wedge\mathfrak{ad}_B^*\lambda] - \mathfrak{ad}_B^*(\wedge\mathfrak{ad}_B^*\lambda) &=& [dA,\lambda]+ \frac{1}{2}[[A\wedge A],\lambda] + [\mathfrak{ad}_B^*(\wedge A),\lambda],\nonumber\\
    &\qquad&  -\mathfrak{ad}_{dB}^*\lambda - \mathfrak{ad}_{\operatorname{ad}_A^*(\wedge  B)}^*\lambda - \frac{1}{2}\mathfrak{ad}_{[B\wedge B]_*}^*\lambda\nonumber\\
    &=& [\tilde F,\lambda] -\mathfrak{ad}_{\tilde F^*}^*\lambda.\label{cov2}
\end{eqnarray}
Summing Eqs. \eqref{cov1} and \eqref{cov2} proves the lemma.
\end{proof}

\paragraph{2-Covriance of the combined fake-flatness.} Now let us turn to the case of the 2-Deinfel'd double $\mathfrak{d}=\mathfrak{g}\bowtie\mathfrak{g}^*[1]$. Recall that the combined fake-flatness $\boldsymbol{\mathcal{F}}$ defined in Eq. \ref{comboff}, which takes values in the degree-0 component $\mathfrak{g}_0\oplus\mathfrak{g}_{-1}^*$ of $\mathfrak{d}$.

\begin{lemma}
\label{2cov1}
Given {\bf Theorem \ref{2mt}}, the combined fake-flatness Eq. \eqref{comboff} is 2-covariant
\begin{equation}
    \boldsymbol{\mathcal{F}}\rightarrow\begin{cases} \boldsymbol{\mathcal{F}}^{\boldsymbol\lambda} = \boldsymbol{\mathcal{F}} + \llbracket \boldsymbol{\mathcal{F}},\boldsymbol\lambda\rrbracket \\ \boldsymbol{\mathcal{F}}^{\bf L} = \boldsymbol{\mathcal{F}}\end{cases}\nonumber
\end{equation}
under the 2-gauge symmetry parameterized by $\boldsymbol\lambda=\lambda+\tilde\lambda$ and ${\bf L}=L+\tilde L$.
\end{lemma}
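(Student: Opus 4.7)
The plan is to split the verification into parts corresponding to the 1-gauge parameter $\boldsymbol\lambda$ and the 2-gauge parameter ${\bf L}$ separately, exploiting the decomposition $\boldsymbol{\mathcal{F}} = {\bf F} - T\boldsymbol\Sigma$. The 2-gauge invariance will follow immediately from the first Pfeiffer identity for the 2-Manin triple $\mathfrak{d}$, while the 1-gauge covariance will essentially reduce to the already-proved Theorem \ref{cov} applied in degree $0$.

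For the 1-gauge piece, the key observation is that the combined curvature ${\bf F}=\bar F+\bar F^*$ takes values in $\mathfrak{d}_0=\mathfrak{g}_0\bowtie\mathfrak{g}_{-1}^*$, which, as noted after Eq. \eqref{2maninbrac}, is itself an ordinary Drinfel'd double. Theorem \ref{cov} therefore applies verbatim under the substitutions $(\operatorname{ad}^*,\mathfrak{ad}^*)\leftrightsquigarrow(\chi^*,\eta^*)$ and $(A,B)\leftrightsquigarrow(A,C)$, with the relevant matched-pair condition being Eq. \eqref{match1} (structurally identical to Eq. \eqref{1manin}); this yields ${\bf F}\to{\bf F}+\llbracket{\bf F},\boldsymbol\lambda\rrbracket$, where $\llbracket\cdot,\cdot\rrbracket$ is the degree-$0$ bracket of $\mathfrak{d}$. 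To handle $T\boldsymbol\Sigma$, I apply the first Pfeiffer identity $T({\bf X}~\mathrlap{\triangleright}{\rhd}~{\bf Y})=\llbracket{\bf X},T{\bf Y}\rrbracket$ to the transformation law $\boldsymbol\Sigma\to\boldsymbol\Sigma+\boldsymbol\lambda~\mathrlap{\triangleright}{\rhd}~\boldsymbol\Sigma$ from Eq. \eqref{combo2gau}, obtaining $T\boldsymbol\Sigma\to T\boldsymbol\Sigma+\llbracket\boldsymbol\lambda,T\boldsymbol\Sigma\rrbracket$. Subtracting and using antisymmetry of the degree-$0$ bracket delivers $\boldsymbol{\mathcal{F}}^{\boldsymbol\lambda}=\boldsymbol{\mathcal{F}}+\llbracket\boldsymbol{\mathcal{F}},\boldsymbol\lambda\rrbracket$ modulo the usual higher-order terms in the gauge parameter.

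For the 2-gauge piece, the shift ${\bf A}\to{\bf A}+T{\bf L}$ gives ${\bf F}\to{\bf F}+d(T{\bf L})+\llbracket{\bf A}\wedge T{\bf L}\rrbracket=({\bf F}+d_{\bf A}(T{\bf L}))$ to first order in ${\bf L}$ by routine expansion of the curvature. Meanwhile $\boldsymbol\Sigma\to\boldsymbol\Sigma+d_{\bf A}{\bf L}$ induces $T\boldsymbol\Sigma\to T\boldsymbol\Sigma+T(d_{\bf A}{\bf L})=T\boldsymbol\Sigma+d(T{\bf L})+T({\bf A}\wedge^{\mathrlap{\triangleright}{\rhd}}{\bf L})$. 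A second application of the Pfeiffer identity identifies $T({\bf A}\wedge^{\mathrlap{\triangleright}{\rhd}}{\bf L})=\llbracket{\bf A}\wedge T{\bf L}\rrbracket$, so $T\boldsymbol\Sigma\to T\boldsymbol\Sigma+d_{\bf A}(T{\bf L})$. The two shifts then cancel exactly in $\boldsymbol{\mathcal{F}}={\bf F}-T\boldsymbol\Sigma$, proving $\boldsymbol{\mathcal{F}}^{\bf L}=\boldsymbol{\mathcal{F}}$.

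The main obstacle will be verifying that Theorem \ref{cov} transports cleanly to the degree-$0$ sub-double $\mathfrak{g}_0\bowtie\mathfrak{g}_{-1}^*$ without picking up extra terms from the $\mathfrak{g}_{-1}$ or $\mathfrak{g}_0^*$ sectors (through, e.g., the $\Delta,\tilde\Delta$ maps arising in the full 2-Manin triple structure); this is ultimately guaranteed because ${\bf F}$ lies purely in degree $0$ and Eq. \eqref{match1} is self-contained within that sector. Sign conventions for the brackets $\llbracket\cdot,\cdot\rrbracket$ and $\pmb{[}\cdot,\cdot\pmb{]}$ across degrees must also be tracked carefully to reconcile the expression for $\llbracket\boldsymbol{\mathcal{F}},\boldsymbol\lambda\rrbracket$ stated below the Lemma with the one delivered by the above argument.
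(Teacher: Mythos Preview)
Your proposal is correct and follows essentially the same approach as the paper: both invoke Theorem \ref{cov} in the degree-$0$ sub-double $\mathfrak{g}_0\bowtie\mathfrak{g}_{-1}^*$ (with $\chi^*,\eta^*$ in place of $\operatorname{ad}^*,\mathfrak{ad}^*$ and Eq.~\eqref{match1}/\eqref{2manin1} as the matched-pair condition) to handle ${\bf F}$, then control $T\boldsymbol\Sigma$ via the Pfeiffer identity, and finally show the ${\bf L}$-shifts in ${\bf F}$ and $T\boldsymbol\Sigma$ cancel. The only cosmetic difference is that you invoke the combined first Pfeiffer identity for $\mathfrak{d}$ directly on $T(\boldsymbol\lambda~\mathrlap{\triangleright}{\rhd}~\boldsymbol\Sigma)$, whereas the paper unpacks this into components and uses Eq.~\eqref{comm} together with the Pfeiffer identity on $\mathfrak{g}$ explicitly---your version is slightly more compact but equivalent.
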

\begin{proof}
Let us first prove the covariance of the 1-curvature ${\bf F}$. For this, we can copy the proof of {\bf Theorem \ref{cov}}, with the modification that $\operatorname{ad},\mathfrak{ad}$ are replaced by $\chi,\eta$, respectively. However, we note that the argument there relies crucially on two things: the  condition Eq. \eqref{1manin} and the Jacobi identity. These conditions are analogous in the 2-algebra context to the first  conditions Eqs. \eqref{2manin1} and the Jacobi identities Eqs. \eqref{2jacob}, \eqref{dual2jacob}. This yields the desired covariance
\begin{equation}
    {\bf F}\rightarrow {\bf F}^{\boldsymbol\lambda} = {\bf F} + \pmb{[} {\bf F},\boldsymbol\lambda\pmb{]} = {\bf F} + ([\bar F,\lambda]+ \eta_{\tilde\lambda}^*\bar F - \eta_{\bar{F}^*}^*\lambda) + ([\bar F,\tilde\lambda]_*+\chi_\lambda^*\bar{F}^* - \chi_{\bar F}^*\tilde\lambda).\nonumber
\end{equation}

Next, we examine how $T\boldsymbol\Sigma$ transforms under $\boldsymbol\lambda$. In the $\mathfrak{g}$-valued component, we have
\begin{equation}
    t\Sigma \rightarrow t\Sigma + t(\lambda\rhd \Sigma) + t\mathfrak{ad}_{\tilde\lambda}^*\Sigma - \eta_{\tilde tB}^*\lambda.\nonumber
\end{equation}
By the Pfeiffer identity on $\mathfrak{g}$, we can write $t(\lambda\rhd \Sigma) = [\lambda,t\Sigma]$. Now if moreover $\tilde t = t^*$ such that $t = \tilde t^*$, then
\begin{equation}
    t\mathfrak{ad}_{\tilde\lambda}^*\Sigma = \tilde t^*\mathfrak{ad}_{\tilde\lambda}^*\Sigma = \eta_{\tilde\lambda}^*\tilde t^*\Sigma = \eta_{\tilde\lambda}^*t\Sigma\nonumber
\end{equation}
by Eq. \eqref{comm}. This achieves the desired covariance
\begin{equation}
    \bar{\mathcal{F}} \rightarrow \bar{\mathcal{F}} + [\bar F,\lambda]+ \eta_{\tilde\lambda}^*\bar{\mathcal{F}} - \eta_{\bar{\mathcal{F}}^*}^*\lambda =  \bar{\mathcal{F}} + \pmb{[} \boldsymbol{\mathcal{F}},\boldsymbol\lambda\pmb{]}|_{\mathfrak{g}}.\nonumber
\end{equation}
A similar argument proves the covariance of $\bar{\mathcal{F}}^*$ in the dual sector $\mathfrak{g}^*[1]$.

Next, we compute that 
\begin{eqnarray}
    d{\bf A}&\rightarrow& d{\bf A} + dT{\bf L} = d{\bf A} + Td{\bf L},\nonumber \\
    \pmb{[}{\bf A}\wedge {\bf A}\pmb{]} &\rightarrow& \pmb{[}{\bf A}\wedge {\bf A}\pmb{]} + 2\pmb{[}{\bf A}\wedge T{\bf L}\pmb{]} + \pmb{[} T{\bf L}\wedge T{\bf L}\pmb{]}\nonumber\\
    &\qquad& = \pmb{[}{\bf A}\wedge{\bf A}\pmb{]} + 2T({\bf A} \wedge^{\mathrlap{\rhd}{\triangleright}} {\bf L}) + o({\bf L}^2).\nonumber
\end{eqnarray}
If we neglect quadratic terms in ${\bf L}^2$ as we have done in the main text, then we yield the expected 2-gauge transformation
\begin{equation}
    {\bf F}\rightarrow {\bf F}^{\bf L} = {\bf F} + Td{\bf L} + T{\bf A}\wedge^{\mathrlap{\rhd}{\triangleright}}{\bf L} + \frac{1}{2}T\pmb{[}{\bf L}\wedge{\bf L}\pmb{]}\equiv {\bf F} + Td_{\bf A}{\bf L}.\nonumber
\end{equation}
As we have $\boldsymbol\Sigma \rightarrow\boldsymbol\Sigma + d_{\bf A}{\bf L}$ from Eq. \eqref{combo2gau}, we see that 
\begin{equation}
    \boldsymbol{\mathcal{F}} \rightarrow {\bf F}^{\bf L} - T\boldsymbol\Sigma^{\bf L} = {\bf F} - T\boldsymbol\Sigma = \boldsymbol{\mathcal{F}}\nonumber
\end{equation}
as desired.



\end{proof}

\paragraph{2-Covariance of the combined 2-curvature.} Recall the combined 2-curvature $\boldsymbol{\mathcal{G}}$ defined in Eq. \ref{combo2c}, which takes values in the degree-(-1) component $\mathfrak{g}_{-1}\oplus\mathfrak{g}_0^*$ of the 2-Drinfel'd double $\mathfrak{d}$.

\begin{lemma}
\label{2cov2}
Given {\bf Theorem \ref{2mt}}, the combined 2-curvature in Eq. \eqref{combo2c} is 2-covariant:
\begin{equation}
    \boldsymbol{\mathcal{G}}\rightarrow\begin{cases}\boldsymbol{\mathcal{G}}^{\boldsymbol\lambda} = \boldsymbol{\mathcal{G}} + \boldsymbol\lambda~\mathrlap{\triangleright}{\rhd}~ \boldsymbol{\mathcal{G}} \\ \boldsymbol{\mathcal{G}}^{\bf L} = \boldsymbol{\mathcal{G}} + \boldsymbol{\mathcal{F}}\wedge^{\mathrlap{\triangleright}{\rhd}} {\bf L}\end{cases}.\nonumber
\end{equation}
\end{lemma}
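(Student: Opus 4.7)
The plan is to follow the same strategy as the proof of Lemma \ref{2cov1}: package $(\mathbf{A},\boldsymbol\Sigma)$ as a genuine 2-connection valued in the crossed-module $\mathfrak{d}=\mathfrak{g}\bowtie\mathfrak{g}^*[1]$ of Theorem \ref{2mt}, so that $\boldsymbol{\mathcal{G}}=d\boldsymbol\Sigma+\mathbf{A}\wedge^{\mathrlap{\triangleright}{\rhd}}\boldsymbol\Sigma$ is, by construction, nothing but the standard 2-curvature of 2-gauge theory on $\mathfrak{d}$. The sought-after covariance under $(\boldsymbol\lambda,\mathbf{L})$ is then a formal consequence of the usual transformation laws Eq. \eqref{2BFgaugetrans} applied to $\mathfrak{d}$, \emph{provided} the crossed-module axioms for $\mathfrak{d}$ hold. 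But this is exactly the content of Theorem \ref{2mt}: the strict coadjoint representation relations Eqs. \eqref{comm}, \eqref{abhom} supply the Pfeiffer identities for $\mathfrak{d}$, while the matched-pair compatibility conditions Eqs. \eqref{2manin1}--\eqref{2manin3} deliver the graded Jacobi identities and the equivariance of the action $\mathrlap{\triangleright}{\rhd}$ of Eq. \eqref{2maninbrac}. The explicit check then amounts to verifying these two abstract statements in components.

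For the $\mathbf{L}$-transformation I would use the compact form Eq. \eqref{combo2gau} and compute, modulo the usual $o(\mathbf{L}^2)$ terms,
\begin{equation*}
\boldsymbol{\mathcal{G}}^{\mathbf{L}} \;=\; \boldsymbol{\mathcal{G}} \;+\; d(d_{\mathbf{A}}\mathbf{L}) + \mathbf{A}\wedge^{\mathrlap{\triangleright}{\rhd}}d_{\mathbf{A}}\mathbf{L} \;+\; T\mathbf{L}\wedge^{\mathrlap{\triangleright}{\rhd}}\boldsymbol\Sigma.
\end{equation*}
The first two new terms combine into the Bianchi-type relation $d_{\mathbf{A}}d_{\mathbf{A}}\mathbf{L}=\mathbf{F}\wedge^{\mathrlap{\triangleright}{\rhd}}\mathbf{L}$, which is a direct consequence of the crossed-module structure of $\mathfrak{d}$. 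The second Pfeiffer identity $T\mathbf{L}\mathrlap{\triangleright}{\rhd}\boldsymbol\Sigma=\llbracket\mathbf{L},\boldsymbol\Sigma\rrbracket$, together with the graded antisymmetry of the induced bracket on $\mathfrak{d}$, then cancels the $T\boldsymbol\Sigma$-piece of $\mathbf{F}=\boldsymbol{\mathcal{F}}+T\boldsymbol\Sigma$ against $T\mathbf{L}\wedge^{\mathrlap{\triangleright}{\rhd}}\boldsymbol\Sigma$, leaving precisely $\boldsymbol{\mathcal{F}}\wedge^{\mathrlap{\triangleright}{\rhd}}\mathbf{L}$.

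For the $\boldsymbol\lambda$-transformation I would exploit the duality symmetry $\mathfrak{g}\leftrightsquigarrow\mathfrak{g}^*[1]$ used already in Theorem \ref{cov} and Lemma \ref{2cov1} to restrict attention to the $\mathfrak{g}_{-1}$-sector $\bar{\mathcal{G}}$. Variations of $d\Sigma+A\wedge^\rhd\Sigma$ under $\lambda$ alone reproduce, by the standard 2-gauge covariance Eq. \eqref{2BFgaugetrans}, the usual piece $\lambda\rhd\bar{\mathcal{G}}$. All the remaining cross variations --- those of the auxiliary pieces $\tilde\Delta_B(\wedge A)$ and $\mathfrak{ad}_C^*(\wedge\Sigma)$ under $(\lambda,\tilde\lambda)$, and those of $d\Sigma+A\wedge^\rhd\Sigma$ under $\tilde\lambda$ --- are then grouped and repeatedly manipulated with Leibniz and Jacobi, together with the strict rep relations Eqs. \eqref{comm}, \eqref{abhom}, the identity $\tilde t=t^*$, and the compatibility conditions Eqs. \eqref{2manin2}, \eqref{2manin3}. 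Modulo quadratic terms in $(\lambda,\tilde\lambda)$ and boundary contributions, the residue assembles into $\tilde\Delta_{\bar{\mathcal{G}}^*}(\lambda)-\mathfrak{ad}_{\tilde\lambda}^*\bar{\mathcal{G}}$, which together with $\lambda\rhd\bar{\mathcal{G}}$ is exactly the $\mathfrak{g}_{-1}$-component of $\boldsymbol\lambda\mathrlap{\triangleright}{\rhd}\boldsymbol{\mathcal{G}}$ read off from Eq. \eqref{2maninbrac}; the $\mathfrak{g}_0^*$-component follows by the aforementioned duality.

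The main obstacle lies in this last step. The piece $\tilde\Delta_B(\wedge A)$ varies non-trivially under both $\lambda$ and $\tilde\lambda$ because \emph{both} $B$ and $A$ are dynamical, producing a proliferation of cross terms such as $\tilde\Delta_{\operatorname{ad}_\lambda^*B}(\wedge A)$ and $\tilde\Delta_B(\wedge d_A\lambda)$, alongside $\Delta$- and $\mathrm{ad}^*$-shifts coming from $\eta_C^*,\chi_A^*$. Their correct recombination is precisely what the third compatibility relation Eq. \eqref{2manin3} and the relations Eqs. \eqref{comm}, \eqref{abhom} are designed to enforce; keeping track of form-degrees so that the wedge signs combine consistently is the principal source of bookkeeping. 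The abstract argument via the crossed-module structure of $\mathfrak{d}$ bypasses this bookkeeping but is less transparent about which compatibility condition plays which role, so I would present both perspectives.
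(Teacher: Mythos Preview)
Your proposal is correct and, for the ${\bf L}$-transformation, essentially identical to the paper's argument: both use the Leibniz rule (equivalently the Bianchi-type identity $d_{\bf A}d_{\bf A}{\bf L}={\bf F}\wedge^{\mathrlap{\triangleright}{\rhd}}{\bf L}$), the second Pfeiffer identity $T{\bf L}\wedge^{\mathrlap{\triangleright}{\rhd}}\boldsymbol\Sigma=-T\boldsymbol\Sigma\wedge^{\mathrlap{\triangleright}{\rhd}}{\bf L}$, and the 2-Jacobi identity for $\mathfrak{d}$ to collapse the extra terms into $\boldsymbol{\mathcal{F}}\wedge^{\mathrlap{\triangleright}{\rhd}}{\bf L}$.

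For the $\boldsymbol\lambda$-transformation the paper takes the shorter of your two routes. It works entirely in the compact notation of Eq.~\eqref{combo2gau}, applying Leibniz to $d\boldsymbol\Sigma$ and the 2-Jacobi identity of $\mathfrak{d}$ to ${\bf A}\wedge^{\mathrlap{\triangleright}{\rhd}}(\boldsymbol\lambda\,\mathrlap{\triangleright}{\rhd}\,\boldsymbol\Sigma)$; the result $\boldsymbol{\mathcal{G}}\to\boldsymbol{\mathcal{G}}+\boldsymbol\lambda\,\mathrlap{\triangleright}{\rhd}\,\boldsymbol{\mathcal{G}}$ drops out in three lines, with no decomposition into $\mathfrak{g}_{-1}$- and $\mathfrak{g}_0^*$-sectors and no explicit appeal to Eqs.~\eqref{2manin2}, \eqref{2manin3}. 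Your component-wise treatment---tracking the cross variations of $\tilde\Delta_B(\wedge A)$ and $\mathfrak{ad}_C^*(\wedge\Sigma)$ and reassembling them via the matched-pair relations---is more laborious but has the pedagogical advantage you note: it makes explicit which compatibility condition is responsible for which cancellation, in parallel with the proof of Theorem~\ref{cov}. The paper opts for economy and relies on the fact that Theorem~\ref{2mt} has already established $\mathfrak{d}$ as a genuine crossed-module, so the standard 2-gauge covariance holds automatically.
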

\begin{proof}
Here we make heavy use of Eqs. \eqref{combo2gau}. We first deal with $\boldsymbol\lambda$:
\begin{eqnarray}
    d\boldsymbol\Sigma &\rightarrow& d\boldsymbol\Sigma - d\boldsymbol\lambda ~\mathrlap{\triangleright}{\rhd}~\boldsymbol\Sigma + \boldsymbol\lambda ~\mathrlap{\triangleright}{\rhd}~ d\boldsymbol\Sigma,\nonumber \\
    {\bf A}\wedge^{\mathrlap{\triangleright}{\rhd}}\boldsymbol\Sigma &\rightarrow& {\bf A}\wedge^{\mathrlap{\triangleright}{\rhd}}\boldsymbol\Sigma + d_{\bf A}\boldsymbol\lambda \wedge^{\mathrlap{\triangleright}{\rhd}}\boldsymbol\Sigma + {\bf A}\wedge^{\mathrlap{\triangleright}{\rhd}}(\boldsymbol\lambda~\mathrlap{\triangleright}{\rhd}~\boldsymbol\Sigma) \nonumber \\
    &\qquad& + o(\boldsymbol\lambda^2).\nonumber
\end{eqnarray}
By the 2-Jacobi identity for $\mathfrak{d}$, we have
\begin{equation}
    {\bf A}\wedge^{\mathrlap{\triangleright}{\rhd}}(\boldsymbol\lambda~\mathrlap{\triangleright}{\rhd}~\boldsymbol\Sigma) = \boldsymbol\lambda~\mathrlap{\triangleright}{\rhd}~ ({\bf A}\wedge^{\mathrlap{\triangleright}{\rhd}}\boldsymbol\lambda\boldsymbol\Sigma) - \pmb{[}{\bf A},\boldsymbol\lambda\pmb{]} ~\mathrlap{\triangleright}{\rhd}~ \boldsymbol\Sigma,\nonumber
\end{equation}
hence
\begin{equation}
     {\bf A}\wedge^{\mathrlap{\triangleright}{\rhd}}\boldsymbol\Sigma \rightarrow {\bf A}\wedge^{\mathrlap{\triangleright}{\rhd}}\boldsymbol\Sigma + d\boldsymbol\lambda \wedge^{\mathrlap{\triangleright}{\rhd}}\boldsymbol\Sigma + \boldsymbol\lambda~\mathrlap{\triangleright}{\rhd}~ ({\bf A}\wedge^{\mathrlap{\triangleright}{\rhd}}\boldsymbol\lambda\boldsymbol\Sigma). \nonumber
\end{equation}
The sum then transforms accordingly:
\begin{equation}
    \boldsymbol{\mathcal{G}} \rightarrow \boldsymbol{\mathcal{G}} + \boldsymbol\lambda ~\mathrlap{\triangleright}{\rhd}~ \boldsymbol{\mathcal{G}}.\nonumber
\end{equation}

Next, we deal with ${\bf L}$:
\begin{eqnarray}
    d\boldsymbol\Sigma &\rightarrow& d\boldsymbol\Sigma + d({\bf A}\wedge^{\mathrlap{\triangleright}{\rhd}}{\bf L}) + o({\bf L}^2),\nonumber \\
    {\bf A}\wedge^{\mathrlap{\triangleright}{\rhd}}\boldsymbol\Sigma &\rightarrow& {\bf A}\wedge^{\mathrlap{\triangleright}{\rhd}}\boldsymbol\Sigma + T{\bf L}\wedge^{\mathrlap{\triangleright}{\rhd}}\boldsymbol\Sigma + {\bf A}\wedge^{\mathrlap{\triangleright}{\rhd}} d_{\bf A}{\bf L} +  o({\bf L}^2),\nonumber 
    \nonumber
\end{eqnarray}
where we consider terms only up to linear order in ${\bf L}$, as always. By the Leibniz rule we have
\begin{equation}
    d({\bf A}\wedge^{\mathrlap{\triangleright}{\rhd}}{\bf L}) = d{\bf A}\wedge^{\mathrlap{\triangleright}{\rhd}}{\bf L} - {\bf A}\wedge^{\mathrlap{\triangleright}{\rhd}}d{\bf L},\nonumber
\end{equation}
hence the sum $\boldsymbol{\mathcal{G}}= d\boldsymbol\Sigma + {\bf A}\wedge^{\mathrlap{\triangleright}{\rhd}}\boldsymbol\Sigma$ transforms as
\begin{eqnarray}
    \boldsymbol{\mathcal{G}} &\rightarrow& \boldsymbol{\mathcal{G}} + d{\bf A}\wedge^{\mathrlap{\triangleright}{\rhd}}{\bf L} + T{\bf L}\wedge^{\mathrlap{\triangleright}{\rhd}}\boldsymbol\Sigma+ {\bf A}\wedge^{\mathrlap{\triangleright}{\rhd}}({\bf A}\wedge^{\mathrlap{\triangleright}{\rhd}}{\bf L}).\nonumber
\end{eqnarray}

By the second Pfeiffer identity for $\mathfrak{d}$, we have
\begin{eqnarray}
    T{\bf L}\wedge^{\mathrlap{\triangleright}{\rhd}}\boldsymbol\Sigma &=& \pmb{[} {\bf L}\wedge\boldsymbol\Sigma\pmb{]} = -\pmb{[}\boldsymbol\Sigma \wedge{\bf L}\pmb{]} = -T\boldsymbol\Sigma \wedge^{\mathrlap{\triangleright}{\rhd}}{\bf L},\nonumber
\end{eqnarray}
while by the 2-Jacobi identity for $\mathfrak{d}$ we have
\begin{equation}
    {\bf A}\wedge^{\mathrlap{\triangleright}{\rhd}}({\bf A}\wedge^{\mathrlap{\triangleright}{\rhd}}{\bf L}) = \pmb{[}{\bf A}\wedge{\bf A}\pmb{]} \wedge^{\mathrlap{\triangleright}{\rhd}} {\bf L} - {\bf A}\wedge^{\mathrlap{\triangleright}{\rhd}}({\bf A}\wedge^{\mathrlap{\triangleright}{\rhd}}{\bf L}).\nonumber
\end{equation}
These terms combine to yield
\begin{equation}
    (\frac{1}{2}\pmb{[}{\bf A}\wedge{\bf A}\pmb{]}-T\boldsymbol{\Sigma}) \wedge^{\mathrlap{\triangleright}{\rhd}} {\bf L}.\nonumber
\end{equation}

The 2-curvature then transforms expectedly:
\begin{eqnarray}
    \boldsymbol{\mathcal{G}}&\rightarrow& \boldsymbol{\mathcal{G}} + (d{\bf A} + \frac{1}{2}\pmb{[}{\bf A}\wedge{\bf A}\pmb{]} - T\boldsymbol\Sigma)\wedge^{\mathrlap{\triangleright}{\rhd}}{\bf L}\nonumber \\ 
    &=&\boldsymbol{\mathcal{G}} + \boldsymbol{\mathcal{F}}\wedge^{\mathrlap{\triangleright}{\rhd}}{\bf L}.\nonumber
\end{eqnarray}
\end{proof}

The proofs of {\bf Theorems \ref{cov}, \ref{2cov1}, \ref{2cov2}} state that the compatibility conditions for the (2-)Manin triple are in fact crucial not only for the non-ambiguity in the (2-)gauge transformations, but also for ensuring the covariance for the various curvature quantities.





\newpage
\bibliographystyle{Biblio}
\bibliography{biblio}

\providecommand{\href}[2]{#2}\begingroup\raggedright\begin{thebibliography}{10}

\bibitem{Bai_2013}
C.~Bai, Y.~Sheng and C.~Zhu, \emph{Lie 2-bialgebras},
  \href{http://dx.doi.org/10.1007/s00220-013-1712-3}{\emph{Communications in
  Mathematical Physics} {\bfseries 320} (apr, 2013) 149--172}.

\bibitem{Witten:1988hc}
E.~Witten, \emph{{(2+1)-Dimensional Gravity as an Exactly Soluble System}},
  \href{http://dx.doi.org/10.1016/0550-3213(88)90143-5}{\emph{Nucl. Phys.}
  {\bfseries B311} (1988) 46}.

\bibitem{Crane:1994ty}
L.~Crane and I.~Frenkel, \emph{{Four-dimensional topological field theory, Hopf
  categories, and the canonical bases}},
  \href{http://dx.doi.org/10.1063/1.530746}{\emph{J. Math. Phys.} {\bfseries
  35} (1994) 5136--5154},
  [\href{https://arxiv.org/abs/hep-th/9405183}{{\ttfamily hep-th/9405183}}].

\bibitem{Baez:2009as}
J.~C. Baez and A.~Lauda, \emph{{A Prehistory of n-Categorical Physics}},  8,
  2009.
\newblock \href{https://arxiv.org/abs/0908.2469}{{\ttfamily 0908.2469}}.

\bibitem{Turaev:2017uxl}
V.~Turaev and A.~Virelizier, \emph{{Monoidal Categories and Topological Field
  Theory}}, vol.~322 of \emph{Progress in Mathematics}.
\newblock Birkh\"auser, Cham, 2017,
  \href{http://dx.doi.org/10.1007/978-3-319-49834-8}{10.1007/978-3-319-49834-8}.

\bibitem{Drinfeld:1986in}
V.~G. Drinfeld, \emph{{Quantum groups}},
  \href{http://dx.doi.org/10.1007/BF01247086}{\emph{Zap. Nauchn. Semin.}
  {\bfseries 155} (1986) 18--49}.

\bibitem{Majid:1996kd}
S.~Majid, \emph{{Foundations of quantum group theory}}.
\newblock Cambridge University Press, 2011.

\bibitem{Noui:2006ku}
K.~Noui, \emph{{Three Dimensional Loop Quantum Gravity: Particles and the
  Quantum Double}}, \href{http://dx.doi.org/10.1063/1.2352860}{\emph{J. Math.
  Phys.} {\bfseries 47} (2006) 102501},
  [\href{https://arxiv.org/abs/gr-qc/0612144}{{\ttfamily gr-qc/0612144}}].

\bibitem{Meusburger:2008bs}
C.~Meusburger and K.~Noui, \emph{{The Hilbert space of 3d gravity: quantum
  group symmetries and observables}},
  \href{http://dx.doi.org/10.4310/ATMP.2010.v14.n6.a3}{\emph{Adv. Theor. Math.
  Phys.} {\bfseries 14} (2010) 1651--1715},
  [\href{https://arxiv.org/abs/0809.2875}{{\ttfamily 0809.2875}}].

\bibitem{Kitaev1997}
A.~{\relax Yu}. Kitaev, \emph{{Fault tolerant quantum computation by anyons}},
  \href{http://dx.doi.org/10.1016/S0003-4916(02)00018-0}{\emph{Annals Phys.}
  {\bfseries 303} (2003) 2--30},
  [\href{https://arxiv.org/abs/quant-ph/9707021}{{\ttfamily
  quant-ph/9707021}}].

\bibitem{BMCA}
O.~Buerschaper, J.~{Martin Mombelli}, M.~Christandl and M.~Aguado, \emph{A
  hierarchy of topological tensor network states},
  \href{http://dx.doi.org/10.1063/1.4773316}{\emph{Journal of Mathematical
  Physics} {\bfseries 54} (1, 2013) }.

\bibitem{girelli:2017lfn}
F.~Girelli, P.~K. Osei and A.~Osumanu, \emph{{Semidual Kitaev lattice model and
  tensor network representation}},
  \href{http://dx.doi.org/10.1007/JHEP09(2021)210}{\emph{JHEP} {\bfseries 09}
  (2021) 210}, [\href{https://arxiv.org/abs/1709.00522}{{\ttfamily
  1709.00522}}].

\bibitem{Dupuis:2020ndx}
M.~Dupuis, L.~Freidel, F.~Girelli, A.~Osumanu and J.~Rennert, \emph{{On the
  origin of the quantum group symmetry in 3d quantum gravity}},
  [\href{https://arxiv.org/abs/2006.10105}{{\ttfamily 2006.10105}}].

\bibitem{Fock:1992xy}
V.~V. Fock and A.~A. Roslyi, \emph{{Poisson structure on moduli of flat
  connections on Riemann surfaces and r matrix}}, .

\bibitem{Fock:1998nu}
V.~V. Fock and A.~A. Rosly, \emph{{Poisson structure on moduli of flat
  connections on Riemann surfaces and r matrix}}, {\emph{Am. Math. Soc.
  Transl.} {\bfseries 191} (1999) 67--86},
  [\href{https://arxiv.org/abs/math/9802054}{{\ttfamily math/9802054}}].

\bibitem{Chari:1994pz}
V.~Chari and A.~Pressley, \emph{{A guide to quantum groups}}.
\newblock Cambridge University Press, 1994.

\bibitem{Semenov1992}
M.~A. Semenov-Tyan-Shanskii, \emph{{Poisson-Lie groups. The quantum duality
  principle and the twisted quantum double}},
  \href{http://dx.doi.org/10.1007/BF01083527}{\emph{Theoretical and
  Mathematical Physics} {\bfseries 93} (Nov, 1992) 1292--1307}.

\bibitem{Baez:1995ph}
J.~C. Baez, \emph{{Four-Dimensional BF theory with cosmological term as a
  topological quantum field theory}},
  \href{http://dx.doi.org/10.1007/BF00398315}{\emph{Lett. Math. Phys.}
  {\bfseries 38} (1996) 129--143},
  [\href{https://arxiv.org/abs/q-alg/9507006}{{\ttfamily q-alg/9507006}}].

\bibitem{Baez:1995xq}
J.~C. Baez and J.~Dolan, \emph{{Higher dimensional algebra and topological
  quantum field theory}}, \href{http://dx.doi.org/10.1063/1.531236}{\emph{J.
  Math. Phys.} {\bfseries 36} (1995) 6073--6105},
  [\href{https://arxiv.org/abs/q-alg/9503002}{{\ttfamily q-alg/9503002}}].

\bibitem{majid:2012gy}
S.~Majid, \emph{{Strict quantum 2-groups}},
  [\href{https://arxiv.org/abs/1208.6265}{{\ttfamily 1208.6265}}].

\bibitem{Baez2004}
J.~C. Baez and A.~D. Lauda, \emph{Higher-dimensional algebra v: 2-groups},
  [\href{https://arxiv.org/abs/math/0307200}{{\ttfamily math/0307200}}].

\bibitem{Baez:2002jn}
J.~C. Baez, \emph{{Higher Yang-Mills theory}},
  [\href{https://arxiv.org/abs/hep-th/0206130}{{\ttfamily hep-th/0206130}}].

\bibitem{Baez:2004in}
J.~Baez and U.~Schreiber, \emph{{Higher gauge theory: 2-connections on
  2-bundles}},  [\href{https://arxiv.org/abs/hep-th/0412325}{{\ttfamily
  hep-th/0412325}}].

\bibitem{Baez:2005qu}
J.~C. Baez and U.~Schreiber, \emph{{Higher gauge theory}},
  [\href{https://arxiv.org/abs/math/0511710}{{\ttfamily math/0511710}}].

\bibitem{Girelli:2003ev}
F.~Girelli and H.~Pfeiffer, \emph{{Higher gauge theory: Differential versus
  integral formulation}}, \href{http://dx.doi.org/10.1063/1.1790048}{\emph{J.
  Math. Phys.} {\bfseries 45} (2004) 3949--3971},
  [\href{https://arxiv.org/abs/hep-th/0309173}{{\ttfamily hep-th/0309173}}].

\bibitem{Baez:2010ya}
J.~C. Baez and J.~Huerta, \emph{{An Invitation to Higher Gauge Theory}},
  \href{http://dx.doi.org/10.1007/s10714-010-1070-9}{\emph{Gen. Rel. Grav.}
  {\bfseries 43} (2011) 2335--2392},
  [\href{https://arxiv.org/abs/1003.4485}{{\ttfamily 1003.4485}}].

\bibitem{Baez:2003fs}
J.~C. Baez and A.~S. Crans, \emph{{Higher-Dimensional Algebra VI: Lie
  2-Algebras}}, {\emph{Theor. Appl. Categor.} {\bfseries 12} (2004) 492--528},
  [\href{https://arxiv.org/abs/math/0307263}{{\ttfamily math/0307263}}].

\bibitem{Chen:2012gz}
Z.~Chen, M.~Stienon and P.~Xu, \emph{{Poisson 2-groups}},
  \href{http://dx.doi.org/10.4310/jdg/1367438648}{\emph{J. Diff. Geom.}
  {\bfseries 94} (2013) 209--240},
  [\href{https://arxiv.org/abs/1202.0079}{{\ttfamily 1202.0079}}].

\bibitem{Girelli:2007tt}
F.~Girelli, H.~Pfeiffer and E.~M. Popescu, \emph{{Topological Higher Gauge
  Theory - from BF to BFCG theory}},
  \href{http://dx.doi.org/10.1063/1.2888764}{\emph{J. Math. Phys.} {\bfseries
  49} (2008) 032503}, [\href{https://arxiv.org/abs/0708.3051}{{\ttfamily
  0708.3051}}].

\bibitem{Martins:2010ry}
J.~F. Martins and A.~Mikovic, \emph{{Lie crossed modules and gauge-invariant
  actions for 2-BF theories}},
  \href{http://dx.doi.org/10.4310/ATMP.2011.v15.n4.a4}{\emph{Adv. Theor. Math.
  Phys.} {\bfseries 15} (2011) 1059--1084},
  [\href{https://arxiv.org/abs/1006.0903}{{\ttfamily 1006.0903}}].

\bibitem{Mikovic:2011si}
A.~Mikovic and M.~Vojinovic, \emph{{Poincare 2-group and quantum gravity}},
  \href{http://dx.doi.org/10.1088/0264-9381/29/16/165003}{\emph{Class. Quant.
  Grav.} {\bfseries 29} (2012) 165003},
  [\href{https://arxiv.org/abs/1110.4694}{{\ttfamily 1110.4694}}].

\bibitem{Chen:2013}
Z.~Chen, M.~Sti{\'e}non and P.~Xu, \emph{Weak lie 2-bialgebras},
  \href{http://dx.doi.org/10.1016/j.geomphys.2013.01.006}{\emph{Journal of
  Geometry and Physics} {\bfseries 68} (Jun, 2013) 59--68}.

\bibitem{Girelli:2021zmt}
F.~Girelli and P.~Tsimiklis, \emph{{Discretization of 4d Poincar\'e BF theory:
  from groups to 2-groups}},
  [\href{https://arxiv.org/abs/2105.01817}{{\ttfamily 2105.01817}}].

\bibitem{Barrett:1997gw}
J.~W. Barrett and L.~Crane, \emph{{Relativistic spin networks and quantum
  gravity}}, \href{http://dx.doi.org/10.1063/1.532254}{\emph{J. Math. Phys.}
  {\bfseries 39} (1998) 3296--3302},
  [\href{https://arxiv.org/abs/gr-qc/9709028}{{\ttfamily gr-qc/9709028}}].

\bibitem{Baez:1999in}
J.~C. Baez, \emph{{Higher dimensional algebra and Planck scale physics}},
  [\href{https://arxiv.org/abs/gr-qc/9902017}{{\ttfamily gr-qc/9902017}}].

\bibitem{Baez:1999sr}
J.~C. Baez, \emph{{An Introduction to Spin Foam Models of $BF$ Theory and
  Quantum Gravity}},
  \href{http://dx.doi.org/10.1007/3-540-46552-9_2}{\emph{Lect. Notes Phys.}
  {\bfseries 543} (2000) 25--93},
  [\href{https://arxiv.org/abs/gr-qc/9905087}{{\ttfamily gr-qc/9905087}}].

\bibitem{rovelli2004}
C.~Rovelli, \emph{Quantum Gravity}.
\newblock Cambridge Monographs on Mathematical Physics. Cambridge University
  Press, 2004.

\bibitem{Alekseev:1994pa}
A.~{\relax Yu}. Alekseev, H.~Grosse and V.~Schomerus, \emph{{Combinatorial
  quantization of the Hamiltonian Chern-Simons theory}},
  \href{http://dx.doi.org/10.1007/BF02099431}{\emph{Commun. Math. Phys.}
  {\bfseries 172} (1995) 317--358},
  [\href{https://arxiv.org/abs/hep-th/9403066}{{\ttfamily hep-th/9403066}}].

\bibitem{Alekseev:1994au}
A.~{\relax Yu}. Alekseev, H.~Grosse and V.~Schomerus, \emph{{Combinatorial
  quantization of the Hamiltonian Chern-Simons theory. 2.}},
  \href{http://dx.doi.org/10.1007/BF02101528}{\emph{Commun. Math. Phys.}
  {\bfseries 174} (1995) 561--604},
  [\href{https://arxiv.org/abs/hep-th/9408097}{{\ttfamily hep-th/9408097}}].

\bibitem{Delcamp:2018sef}
C.~Delcamp, L.~Freidel and F.~Girelli, \emph{{Dual loop quantizations of 3d
  gravity}},  [\href{https://arxiv.org/abs/1803.03246}{{\ttfamily
  1803.03246}}].

\bibitem{ROBERTS}
J.~Roberts, \emph{Skein theory and turaev-viro invariants},
  \href{http://dx.doi.org/http://dx.doi.org/10.1016/0040-9383(94)00053-0}{\emph{Topology}
  {\bfseries 34} (1995) 771 -- 787}.

\bibitem{Freidel:2004nb}
L.~Freidel and D.~Louapre, \emph{{Ponzano-Regge model revisited II: Equivalence
  with Chern-Simons}},  [\href{https://arxiv.org/abs/gr-qc/0410141}{{\ttfamily
  gr-qc/0410141}}].

\bibitem{Majid:2008iz}
S.~Majid and B.~J. Schroers, \emph{{q-Deformation and Semidualisation in 3d
  Quantum Gravity}},
  \href{http://dx.doi.org/10.1088/1751-8113/42/42/425402}{\emph{J. Phys. A}
  {\bfseries 42} (2009) 425402},
  [\href{https://arxiv.org/abs/0806.2587}{{\ttfamily 0806.2587}}].

\bibitem{Girelli:2021pol}
F.~Girelli, A.~Osumanu and W.~Wieland, \emph{{Canonical transformations
  generated by the boundary volume: unimodular and non-abelian teleparallel
  gravity}},  [\href{https://arxiv.org/abs/2101.05347}{{\ttfamily
  2101.05347}}].

\bibitem{Osei:2017ybk}
P.~K. Osei and B.~J. Schroers, \emph{{Classical r-matrices for the generalised
  Chern\textendash{}Simons formulation of 3d gravity}},
  \href{http://dx.doi.org/10.1088/1361-6382/aaaa5e}{\emph{Class. Quant. Grav.}
  {\bfseries 35} (2018) 075006},
  [\href{https://arxiv.org/abs/1708.07650}{{\ttfamily 1708.07650}}].

\bibitem{Kim:2019owc}
H.~Kim and C.~Saemann, \emph{{Adjusted parallel transport for higher gauge
  theories}}, \href{http://dx.doi.org/10.1088/1751-8121/ab8ef2}{\emph{J. Phys.
  A} {\bfseries 53} (2020) 445206},
  [\href{https://arxiv.org/abs/1911.06390}{{\ttfamily 1911.06390}}].

\bibitem{Wag}
F.~Wagemann, \emph{On lie algebra crossed modules},
  \href{http://dx.doi.org/10.1080/00927870500542705}{\emph{Communications in
  Algebra} {\bfseries 34} (2006) 1699--1722},
  [\href{https://arxiv.org/abs/https://doi.org/10.1080/00927870500542705}{{\ttfamily
  https://doi.org/10.1080/00927870500542705}}].

\bibitem{Mikovic:2016xmo}
A.~Mikovic, M.~A. Oliveira and M.~Vojinovic, \emph{{Hamiltonian analysis of the
  BFCG theory for a strict Lie 2-group}},
  [\href{https://arxiv.org/abs/1610.09621}{{\ttfamily 1610.09621}}].

\bibitem{Radenkovic:2019qme}
T.~Radenkovic and M.~Vojinovic, \emph{{Higher Gauge Theories Based on
  3-groups}}, \href{http://dx.doi.org/10.1007/JHEP10(2019)222}{\emph{JHEP}
  {\bfseries 10} (2019) 222},
  [\href{https://arxiv.org/abs/1904.07566}{{\ttfamily 1904.07566}}].

\bibitem{Asante:2019lki}
S.~K. Asante, B.~Dittrich, F.~Girelli, A.~Riello and P.~Tsimiklis,
  \emph{{Quantum geometry from higher gauge theory}},
  \href{http://dx.doi.org/10.1088/1361-6382/aba589}{\emph{Class. Quant. Grav.}
  {\bfseries 37} (2020) 205001},
  [\href{https://arxiv.org/abs/1908.05970}{{\ttfamily 1908.05970}}].

\bibitem{Yetter:1993dh}
D.~N. Yetter, \emph{{TQFT's from homotopy 2 types}},
  \href{http://dx.doi.org/10.1142/S0218216593000076}{\emph{J. Knot Theor.
  Ramifications} {\bfseries 2} (1993) 113--123}.

\bibitem{Meusburger:2021cxe}
C.~Meusburger, \emph{{Poisson\textendash{}Lie Groups and Gauge Theory}},
  \href{http://dx.doi.org/10.3390/sym13081324}{\emph{Symmetry} {\bfseries 13}
  (2021) 1324}.

\bibitem{Beisert:2017xqx}
N.~Beisert, R.~Hecht and B.~Hoare, \emph{{Maximally extended $\boldsymbol
  {\mathfrak{sl}(2\vert 2)}$ , q-deformed $\boldsymbol{\mathfrak{d}{(2,
  1;\epsilon)}}$ and 3D kappa-Poincar\'e}},
  \href{http://dx.doi.org/10.1088/1751-8121/aa7a2f}{\emph{J. Phys. A}
  {\bfseries 50} (2017) 314003},
  [\href{https://arxiv.org/abs/1704.05093}{{\ttfamily 1704.05093}}].

\bibitem{Gutierrez-Sagredo:2019ipf}
I.~Gutierrez-Sagredo, A.~Ballesteros, G.~Gubitosi and F.~J. Herranz,
  \emph{{Quantum groups, non-commutative Lorentzian spacetimes and curved
  momentum spaces}},  7, 2019.
\newblock \href{https://arxiv.org/abs/1907.07979}{{\ttfamily 1907.07979}}.

\bibitem{Zakrzewski_1994}
S.~Zakrzewski, \emph{Quantum poincare group related to the kappa -poincare
  algebra}, \href{http://dx.doi.org/10.1088/0305-4470/27/6/030}{\emph{Journal
  of Physics A: Mathematical and General} {\bfseries 27} (mar, 1994)
  2075--2082}.

\bibitem{Girelli:2021khh}
F.~Girelli, M.~Laudonio and P.~Tsimiklis, \emph{{Polyhedron phase space using
  2-groups: $\kappa$-Poincar\'e as a Poisson 2-group}},
  [\href{https://arxiv.org/abs/2105.10616}{{\ttfamily 2105.10616}}].

\bibitem{bai2016}
R.~Bai, W.~Guo, L.~Lin and Y.~Zhang, \emph{n-lie bialgebras},
  [\href{https://arxiv.org/abs/1607.07913}{{\ttfamily 1607.07913}}].

\bibitem{Crane:2003ep}
L.~Crane and D.~Yetter, \emph{{A More sensitive Lorentzian state sum}},
  [\href{https://arxiv.org/abs/gr-qc/0301017}{{\ttfamily gr-qc/0301017}}].

\bibitem{Kapustin:2013uxa}
A.~Kapustin and R.~Thorngren, \emph{{Higher symmetry and gapped phases of gauge
  theories}},  [\href{https://arxiv.org/abs/1309.4721}{{\ttfamily 1309.4721}}].

\bibitem{Cordova:2018cvg}
C.~C\'ordova, T.~T. Dumitrescu and K.~Intriligator, \emph{{Exploring 2-Group
  Global Symmetries}},
  \href{http://dx.doi.org/10.1007/JHEP02(2019)184}{\emph{JHEP} {\bfseries 02}
  (2019) 184}, [\href{https://arxiv.org/abs/1802.04790}{{\ttfamily
  1802.04790}}].

\bibitem{Benini:2018reh}
F.~Benini, C.~C\'ordova and P.-S. Hsin, \emph{{On 2-Group Global Symmetries and
  their Anomalies}},
  \href{http://dx.doi.org/10.1007/JHEP03(2019)118}{\emph{JHEP} {\bfseries 03}
  (2019) 118}, [\href{https://arxiv.org/abs/1803.09336}{{\ttfamily
  1803.09336}}].

\bibitem{Crane:1994fw}
L.~Crane and D.~Yetter, \emph{{On algebraic structures implicit in topological
  quantum field theories}},
  \href{http://dx.doi.org/10.1142/S0218216599000109}{\emph{J. Knot Theor.
  Ramifications} {\bfseries 8} (1999) 125--163},
  [\href{https://arxiv.org/abs/hep-th/9412025}{{\ttfamily hep-th/9412025}}].

\bibitem{Guedes:2013vi}
C.~Guedes, D.~Oriti and M.~Raasakka, \emph{{Quantization maps, algebra
  representation and non-commutative Fourier transform for Lie groups}},
  \href{http://dx.doi.org/10.1063/1.4818638}{\emph{J. Math. Phys.} {\bfseries
  54} (2013) 083508}, [\href{https://arxiv.org/abs/1301.7750}{{\ttfamily
  1301.7750}}].

\bibitem{Joung:2008mr}
E.~Joung, J.~Mourad and K.~Noui, \emph{{Three Dimensional Quantum Geometry and
  Deformed Poincare Symmetry}},
  \href{http://dx.doi.org/10.1063/1.3131682}{\emph{J. Math. Phys.} {\bfseries
  50} (2009) 052503}, [\href{https://arxiv.org/abs/0806.4121}{{\ttfamily
  0806.4121}}].

\bibitem{Baez:2008hz}
J.~C. Baez, A.~Baratin, L.~Freidel and D.~K. Wise, \emph{{Infinite-Dimensional
  Representations of 2-Groups}}, vol.~1032.
\newblock 2012,
  \href{http://dx.doi.org/10.1090/S0065-9266-2012-00652-6}{10.1090/S0065-9266-2012-00652-6}.

\bibitem{Wang:2016rzy}
Z.~Wang and X.~Chen, \emph{{Twisted gauge theories in three-dimensional
  Walker-Wang models}},
  \href{http://dx.doi.org/10.1103/PhysRevB.95.115142}{\emph{Phys. Rev.}
  {\bfseries B95} (2017) 115142},
  [\href{https://arxiv.org/abs/1611.09334}{{\ttfamily 1611.09334}}].

\bibitem{BaezRogers}
J.~C. Baez and C.~L. Rogers, \emph{{Categorified symplectic geometry and the
  string Lie 2-algebra}},
  \href{http://dx.doi.org/hha/1296223828}{\emph{Homology, Homotopy and
  Applications} {\bfseries 12} (2010) 221 -- 236}.

\bibitem{Baez:2005sn}
J.~C. Baez, D.~Stevenson, A.~S. Crans and U.~Schreiber, \emph{{From loop groups
  to 2-groups}},  [\href{https://arxiv.org/abs/math/0504123}{{\ttfamily
  math/0504123}}].

\bibitem{Thorngren2015}
R.~Thorngren, \emph{Framed wilson operators, fermionic strings, and
  gravitational anomaly in 4d},
  \href{http://dx.doi.org/10.1007/JHEP02(2015)152}{\emph{Journal of High Energy
  Physics} {\bfseries 2015} (2010) }.

\bibitem{JuvenWang}
J.~Wang, X.-G. Wen and E.~Witten, \emph{A new su(2) anomaly},
  \href{http://dx.doi.org/10.1063/1.5082852}{\emph{Journal of Mathematical
  Physics} {\bfseries 60} (May, 2019) 052301}.

\end{thebibliography}\endgroup

\end{document}